\definecolor{colorRed}{rgb}{0.75,0.,.0}
\definecolor{colorGreen}{rgb}{0.,0.75,.0}
\definecolor{colorBlue}{rgb}{0.,0,1}
\definecolor{colormh}{rgb}{0.78,0.63,0.43}
\newcommand{\R}{{\mathbb R}}
\newcommand{\Z}{{\mathbb Z}}
\newcommand{\pair}[2]{\left(#1,#2\right)}
\newcommand{\rmd}{\mathrm{d}}
\def\sech{\mathop{\mathrm{sech}}}
\newcommand{\eps}{\varepsilon}
\def\epsilon{\varepsilon}
\def\beq{\begin{equation}}
\def\eeq{\end{equation}}
\def\dn{\mathrm{dn}}
\newcommand{\at}[1]{\left(#1\right)}
\newtheorem{lemma}{Lemma}
\begin{document}

\title{Hydrodynamics of a Discrete Conservation Law}


\author{Patrick Sprenger}
\affiliation{Department of Applied Mathematics, University of
  California Merced , Merced, California, 95343}

\author{Christopher Chong}
\affiliation{Department of Mathematics, Bowdoin College, Brunswick, Maine 04011}

\author{Emmanuel Okyere}
\affiliation{Department of Mathematics, Bowdoin College, Brunswick, Maine 04011}

\author{Michael Herrmann}
\affiliation{Institute of Partial Differential Equations, Technische Universit\"at Braunschweig, 38106 Braunschweig, Germany}

\author{P. G. Kevrekidis}
\affiliation{Department of Mathematics and Statistics, University of Massachusetts Amherst, Amherst, Massachusetts 01003-4515}

\author{Mark A. Hoefer}
\affiliation{Department of Applied Mathematics, University of
  Colorado, Boulder, Colorado 80309}

\begin{abstract}
  The Riemann problem for the discrete conservation law
  $2 \dot{u}_n + u^2_{n+1} - u^2_{n-1} = 0$ is classified using
  Whitham modulation theory, a quasi-continuum approximation, and
  numerical simulations.  A surprisingly elaborate set of solutions to
  this simple discrete regularization of the inviscid Burgers'
  equation is obtained.  In addition to discrete analogues of
  well-known dispersive hydrodynamic solutions---rarefaction waves
  (RWs) and dispersive shock waves (DSWs)---additional unsteady
  solution families and finite time blow-up are observed.  Two
  solution types exhibit no known conservative continuum correlates:
  (i) a
  counterpropagating DSW and RW solution separated by a symmetric,
  stationary shock and (ii) an unsteady shock emitting two
  counter-propagating periodic wavetrains with the same frequency
  connected to a partial DSW or a RW.  Another class of solutions
  called traveling DSWs, (iii), consists of a partial DSW connected to
  a traveling wave comprised of a periodic wavetrain with a rapid
  transition to a constant. Portions of solutions (ii) and (iii) are
  interpreted as shock solutions of the Whitham modulation
  equations. 
\end{abstract}

\date{\today}

\maketitle

\section{Introduction}

The hydrodynamics of the conservation law (inviscid Burgers' equation)
\begin{equation}
  \label{eq:1}
  u_t + \left ( u^2 \right )_x = 0, 
\end{equation}
with $u(x,t) \in \R$, $x,t \in \R$ is succinctly expressed by solutions of the
Riemann problem that consists of Eq.~\eqref{eq:1} for $t > 0$ subject
to the initial condition
\begin{equation}
  \label{eq:2}
  u(x,0) =
  \begin{cases}
    u_- & x \le 0 , \\
    u_+ & x > 0,
  \end{cases}
  \quad x \in \R ,
\end{equation}
for $u_\pm \in \R$.  Solutions must be interpreted in a weak sense and
depend intimately upon the regularization applied.  For the
\textit{viscous regularization} in which equation \eqref{eq:1} is
modified to Burgers' equation $u_t + (u^2)_x = \nu u_{xx}$ where
$\nu > 0$, the weak solution of \eqref{eq:1}---either a moving
discontinuity (shock) or rarefaction wave (RW)---is uniquely
determined by considering the strong vanishing viscosity limit
$\nu \to 0^+$ of Burgers' equation \cite{Whitham74}.  This results in
the well-known Rankine-Hugoniot jump condition for the speed
$V = (u_- + u_+)/2$ and Lax entropy condition $u_+ < V < u_-$ of
admissible shock solutions.  Regularization by more complex viscous
terms (nonlinear, higher order, and viscous-dispersive) generally
results in the same weak solution \cite{LeF02}.  An alternative
\textit{dispersive regularization} is the Korteweg-de Vries (KdV)
equation $u_t + (u^2)_x + \epsilon^2 u_{xxx} = 0$.  In this case, the
zero dispersion limit $\epsilon \to 0$ converges weakly in the sense
that it satisfies the KdV-Whitham modulation equations corresponding
to averaged conservation laws \cite{Whitham74,lax_small_1983}.
Gurevich and Pitaevskii recognized the physical importance of the
asymptotic approximation obtained by considering small but nonzero
$\epsilon$ and obtained the dispersive shock wave (DSW) solution of
the KdV equation for \eqref{eq:2} when $u_- > u_+$ as a self-similar
solution of the KdV-Whitham modulation equations \cite{GP73,Whitham74}.  They
also observed that when $u_- < u_+$, the KdV equation with small but
nonzero $\epsilon$ is well-approximated by the same RW obtained by
viscous regularization.  Dispersive shock waves are unsteady,
modulated nonlinear wavetrains connecting two distinct levels
\cite{Mark2016}.  In contrast to viscous regularization,
alternative dispersive regularizations can result in drastically
different Riemann problem solution behavior in the small dispersion
regime, particularly when higher order \cite{Sprenger_2020} or
nonlocal \cite{congy_dispersive_2021} dispersive terms are considered.
The multiscale dynamics of conservative nonlinear wave equations in
the small dispersion regime are generally referred to as dispersive
hydrodynamics \cite{Mark2016}.

In this paper, we study the dispersive hydrodynamics of the
\textit{discrete regularization} of Eq.~\eqref{eq:1}
\begin{equation}
  \label{eq:3}
  \frac{\rmd}{\rmd t} u_n + \tfrac12 \left ( u_{n+1}^2 - u_{n-1}^2
    \right ) = 0,
\end{equation}
by solving the Riemann problem
\begin{equation}\label{step}
  u_n(0) = \begin{cases}
             u_-, \quad & n \leq 0 \\
             u_+, \quad & n > 0 ,
           \end{cases}
\end{equation}
for \eqref{eq:3} where $n \in \Z, t\in \R, u=u_n(t) \in \R$.  Equation
\eqref{eq:3} is the simplest centered differencing scheme for the
hydrodynamic flux.  As recognized in the early days of computational
fluid dynamics by Von Neumann, and later clarified by Lax,
differencing schemes like \eqref{eq:3} introduce oscillations that
require mitigation if one wishes to converge strongly to viscously
regularized solutions that satisfy the conservation laws of fluid
dynamics \cite{hou_dispersive_1991}.  In this paper, we consider
Eq.~\eqref{eq:3}, subject to \eqref{step}, in its own right, divorced
from the aim of approximating solutions of Eq.~\eqref{eq:1}.  The
semi-discrete equation \eqref{eq:3} can be interpreted as the
dispersive regularization
\begin{equation}
  \label{eq:4}
  u_T + \frac{i}{\epsilon} \sin(-i \epsilon \partial_X) \left (u^2
  \right ) = 0 ,
\end{equation}
by introducing $T = \epsilon t$, $X = \epsilon n$ into
Eq.~\eqref{eq:3}, where $\epsilon > 0$ is the lattice spacing.  Here,
$i\sin(-i\partial_x)=\tfrac12\big(\exp(\partial_x)-\exp(-\partial_x)\big)$
is the pseudo-differential operator for the centered discrete
derivative and acts on a function $f$ in the variable $x$ via
\begin{align}
  \label{eq:55}
  \big(i\sin(-i\partial_x)f\big)(x)=\tfrac12\big(f(x+1)-f(x-1)\big).
\end{align}
Equation \eqref{eq:4} is similar to Whitham type evolutionary
equations
\cite{ehrnstrom_existence_2012,hur_modulational_2015,binswanger_whitham_2021}
except that it is further constrained to be band-limited.  For the
lattice equation \eqref{eq:3} at time $t$, the support of the
discrete-space Fourier transform of $u_n(t)$ is $[-\pi,\pi]$ due to
the smallest length scale set by the lattice spacing, whereas the
Fourier transform of quasi-continuum approximations is 
not generally
compactly supported.  As we will demonstrate, this fundamental
property of lattice equations introduces new hydrodynamic solution
features that do not appear within certain continuum limits of the
model.

Equation \eqref{eq:4} can be used to formally derive quasi-continuum
approximations by using Pad\'e approximants of
$\frac{i}{\epsilon}\sin(-i\epsilon \partial_X)$ for
$0 < \epsilon \ll 1$.  For example, the (1,3) Pad\'e approximant
\begin{equation}
  \label{eq:29}
  \frac{i}{\epsilon} \sin(-i \epsilon \partial_X) \approx
  \big ( 1- \frac{\epsilon^2}{6} \partial^2_{X} \big )^{-1}\partial_X ,
\end{equation}
leads to the Benjamin-Bona-Mahoney (BBM) equation absent the linear
convective term \cite{benjamin_model_1972}
\begin{equation}
  \label{eq:BBM}
  U_T + (U^2)_X - \frac{\epsilon^2}{6}
  U_{XXT} = 0.
\end{equation}
This quasi-continuum approximation, inspired by the work of Rosenau on
mass-spring chains~\cite{rosenau2,rosenau1}---see also the discussion
in \cite{Nester2001}---is expected to faithfully represent the
long-wavelength behavior of the lattice model \eqref{eq:3}. But it is
no longer band-limited.  In fact, outside of RWs and DSWs, the Riemann
problem solutions we obtain for the lattice equation \eqref{eq:3} bear
no resemblance to the corresponding Riemann problem solutions of
\eqref{eq:BBM} obtained in \cite{congy_dispersive_2021}.
Nevertheless, the quasi-continuum model \eqref{eq:BBM} admits exact
solitary and periodic traveling wave solutions that can be used to
approximate corresponding solutions of the lattice model \eqref{eq:3}.

A particular feature of the band-limited lattice equation
\eqref{eq:3}, and others with centered differences, is the existence
of stationary, period two (binary) oscillation solutions
\begin{equation}
  \label{eq:8}
  u_{2n} = \alpha, \quad u_{2n+1} = \beta, \quad n \in \Z,
\end{equation}
for any $\alpha,\beta \in \R$.  Equation \eqref{eq:3} was studied in
\cite{wilma} using extensive numerical simulations for certain types
of odd initial data and binary oscillations were found to play an
important role.  Allowing for slow spatio-temporal modulations of this
solution---$\alpha = \alpha(X,T)$, $\beta = \beta(X,T)$,
$T = \epsilon t$, $X = \epsilon n$, $0 < \epsilon \ll 1$---Turner and
Rosales \cite{wilma} obtained the modulation equations
\begin{equation}
  \label{eq:9} 
  \alpha_T + (\beta^2)_X = 0, \quad \beta_T + (\alpha^2)_X = 0 .
\end{equation}
Two important implications of the hydrodynamic-type equations
\eqref{eq:9} are: (i) the equations \eqref{eq:9} are elliptic whenever
$\alpha\beta < 0$ and (ii) the existence of discontinuous, shock solutions
satisfying Rankine-Hugoniot jump conditions.  The ellipticity of
\eqref{eq:9} was shown to provide a route through modulated binary
oscillations to blow up of solutions of \eqref{eq:3}.  

While the topic of DSWs and dispersive hydrodynamics has been more
extensively explored in the realm of continuum
media~\cite{dsw,Mark2016}, as already implied by the above discussion,
studies of the discrete realm have the potential to offer new and
intriguing wave features.  In addition, the motivation for such
explorations has significantly increased on account of a diverse range
of corresponding applications. A central topic is the study of
granular crystals and associated nonlinear
metamaterials~\cite{Nester2001}, consisting typically of elastically
interacting bead chains. There, a sequence of experimental efforts in
simpler~\cite{Hascoet2000,Herbold07}, as well as in progressively more
complex media, including dimers~\cite{Molinari2009} and the more
recent setup of hollow elliptic cylinders~\cite{HEC_DSW} have
manifested the spontaneous emergence of DSWs under suitable loading
conditions. However, this has not been the only setting where
``effectively discrete'' DSWs have experimentally emerged. Another
example is in nonlinear optics where such structures have appeared in
optical waveguide arrays \cite{fleischer2}.  Finally, and quite
recently, yet another setup has emerged, that of tunable magnetic
lattices~\cite{talcohen} in which ultraslow shock waves can arise and
be experimentally imaged.

Earlier interest in lattice shocks include the heyday of conservation
laws and shock waves in the fifties and sixties when material
scientists were interested in the compression of a solid by passage of
a very strong shock wave through materials
\cite{holian_atomistic_1995}.  Early numerical studies (molecular
dynamics simulations) depicted what we now call a lattice DSW in the
material's stress profile and recognized its unsteady character in a
one-dimensional anharmonic chain \cite{first_DSW}.  This contradicted
the basic assumption of steadiness underlying the Rankine-Hugoniot
jump conditions and led to some controversy in the field.  The DSW's
leading edge was then identified with a homoclinic traveling wave
solution (solitary wave) of a continuum approximation in
\cite{duvall_steady_1969}, later identified as a generic feature of
DSWs in continuum media \cite{GP73,gurevich_expanding_1984}.  The
controversy continued for about 15 years until, in 1979,
three-dimensional lattice simulations were shown to exhibit a
transition from unsteady to steady shock fronts due to transverse
strains \cite{holian_molecular_1979}.  It is worth noting that the
same transition from one-dimensional, unsteady (dispersive) to
multi-dimensional, steady (effectively viscous) shock propagation, was
recently observed in a completely different, ultracold atomic
superfluid \cite{mossman_dissipative_2018}.

\begin{figure}[t]
    \centering
    \begin{minipage}[c]{0.4\textwidth}
\centering
    \includegraphics[scale = 0.75]{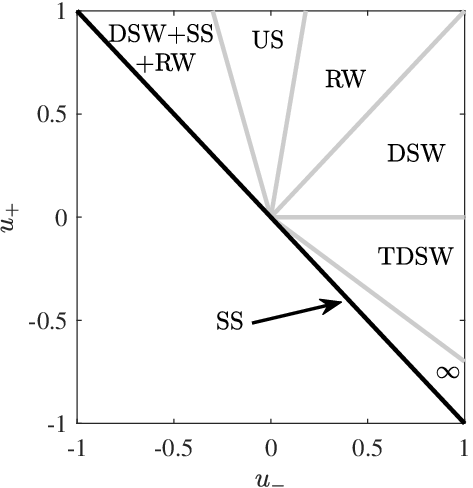} 
\end{minipage}
    \begin{minipage}[c]{0.45\textwidth}
\centering
 \begin{tabular}{l|l}
   \hline
   SS & Stationary shock \\ \hline
 RW & Rarefaction wave  \\ \hline
 DSW & Dispersive shock wave                               \\ \hline
 DSW+SS+RW & DSW + stationary shock + RW                          \\ \hline
 TDSW & traveling DSW     \\ \hline
  $\infty$ & Blow up         \\  \hline  
   US & Unsteady shock\\ \hline
\end{tabular}
\end{minipage}\vspace{0.25 cm}

    \includegraphics{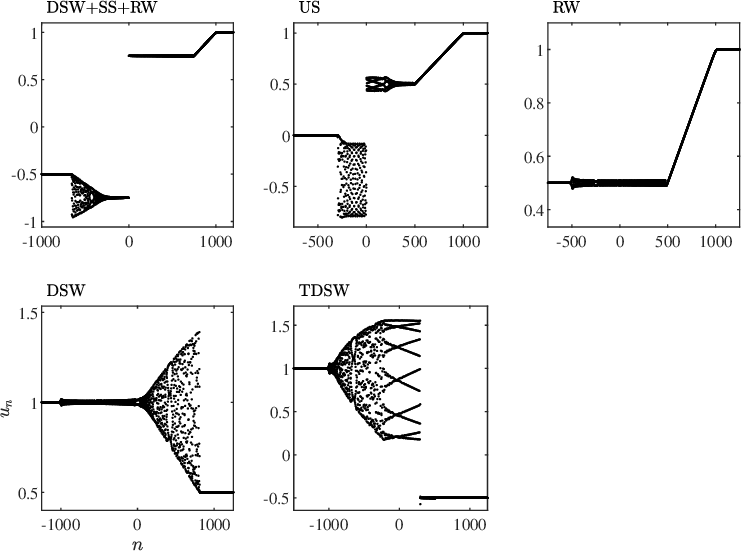}
    \caption{Classification of the Riemann problem \eqref{step} of the
      discrete conservation law \eqref{eq:3}. The numerical
      computations are shown at $t = 500$ and have been
      performed with the initial data in \eqref{eq:35}.}
    \label{fig:example_sols}
\end{figure}
These works have motivated the present authors to revisit ``lattice
hydrodynamics'' and the prototypical settings where DSW structures can
arise in nonlinear dynamical lattices. Canonical examples of first
order in time, quadratically nonlinear lattice nonlinear ordinary
differential equations (ODEs) in a conservation law form were
discussed extensively in the work of~\cite{wilma}. A subset of the
present authors has recently revisited this class of models
in~\cite{CHONG2022133533} attempting to incorporate tools from Whitham
modulation theory, bringing to bear both data-driven, as well as more
theoretically-inspired quasi-continuum approaches to obtain an
effective dimensional reduction, through an ODE description of the DSW states.
Our aim here is to expand on this work, offering a more systematic
classification of the possible solutions of such models by using
Whitham modulation theory and, where appropriate,
quasi-continuum approximation considerations supplemented by
direct numerical simulations.

We select the scale invariant, representative nonlinear example
\eqref{eq:3} within the class of models of~\cite{wilma} and set up the
corresponding discrete Riemann problem \eqref{step}.  Figure
\ref{fig:example_sols} depicts our phase diagram as a partitioning of
the parameter space $(u_-$,$u_+)$ and identifies seven distinct
solution behaviors. Some of these, such as the possibility of a RW or
a DSW as well as that of blow up are to a certain degree expected or
have been argued to be present previously~\cite{wilma}.  They are
labeled RW, DSW, and $\infty$ in Fig.~\ref{fig:example_sols},
respectively.  However, there are choices of initial conditions that
yield less common dynamical outputs, some of which are genuinely
discrete in nature with labels in Fig.~\ref{fig:example_sols}
identified parenthetically.  These include, for instance, a
stationary, symmetric shock on its own (SS) or separating a RW and a
DSW (DSW+SS+RW).  Another example is a traveling DSW (TDSW), which
consists of a partial DSW connected to a heteroclinic
periodic-to-equilibrium traveling wave.  Arguably the most complex
structure encountered is an unsteady shock evolving between two
distinct traveling waves featuring the same temporal frequency (US).
In what follows, we explain our partitioning of the phase diagram
\ref{fig:example_sols} into the regions pertaining to these different
dynamical behaviors and we offer a set of tools that can be used to
understand each one, as well as unveil some open directions for future
exploration.

There exists a family of Riemann problems that can be considered in the
discrete setting by modifying the value $u_0(0)$ in \eqref{step}.  For
example, Turner and Rosales set $u_0(0) = 0$ and $u_- = -u_+$
\cite{wilma}. We primarily focus on the data \eqref{step} in which
$u_0(0) = u_-$ resulting in the phase diagram of
Fig.~\ref{fig:example_sols}.  While changing $u_0(0)$ does not affect
the observed solution phases, it does change the phase boundaries.  We
interpret this microscopic modification of the initial data impacting
the macroscopic properties of solutions as an indication of
non-uniqueness of the Riemann problem for the dispersive
regularization \eqref{eq:29}.

It is important to distinguish our use of the term ``shock'' or
``shock wave'' from the classical notion of discontinuous weak
solutions of inviscid Burgers' equation \eqref{eq:1}.  We identify
four classes of shock solutions to the discrete equation \eqref{eq:3}
by prefacing each with a descriptor in Fig.~\ref{fig:example_sols}.
The simplest is the symmetric, \textit{stationary} shock (SS) solution
of the lattice
\begin{equation}
  \label{eq:35}
  u_n(t) =
  \begin{cases}
    -u_0 & n \le 0, \\
    u_0 & n > 0,
  \end{cases}
\end{equation}
where $u_0 > 0$.  The other shock solutions can be understood as
special solutions of the first-order, quasi-linear Whitham modulation
equations corresponding to Eq.~\eqref{eq:3} that are described in
Sec.~\ref{sec:whitham-theory}.  The unsteady dispersive shock wave
(DSW) is approximated by a nonlinear, periodic wavetrain modulated by
a rarefaction wave solution of the Whitham modulation equations.  Note
that the DSW does not satisfy the Rankine-Hugoniot jump conditions of
the Whitham modulation equations.  On the other hand, the unsteady
shock (US) is approximated by two periodic traveling waves that
satisfy the jump conditions for the Whitham modulation equations.  The
traveling dispersive shock wave (TDSW) consists of both an unsteady
partial DSW---approximated by a rarefaction solution of the Whitham
modulation equations---and a steady traveling wave---approximated by a
periodic traveling wave and a solitary wave that satisfy the jump
conditions for the Whitham modulation equations.  There is an
important distinction between the traveling wave and US as
discontinuous solutions of the Whitham modulation equations.  The
phase speeds of the two periodic traveling waves in the US solution
differ from one another and from the shock speed, which is zero.  On
the other hand, the traveling wave solution consists of a single
periodic traveling wave whose phase speed is the same as the shock
speed.  For clarity, we summarize the four distinct uses of the term
``shock'' in this paper:
\begin{enumerate}
\item the stationary lattice shock (SS) \eqref{eq:35};
\item the unsteady dispersive shock wave (DSW) that is approximated by
  a rarefaction solution of the Whitham modulation equations;
\item the unsteady shock (US) that is approximated by a discontinuous
  shock solution of the Whitham modulation equations;
\item the traveling dispersive shock wave (TDSW) that is approximated
  by a shock-rarefaction solution of the Whitham modulation equations.
\end{enumerate}

Our presentation will be structured as follows.  In section
\ref{sec:model-equations}, we present the model equations, as well as
the principal setup and notation for our study.  In section
\ref{sec:whitham-theory}, we focus on the Whitham modulation equation
formulation for the discrete problem. We discuss the corresponding
conservation laws and how their averaging can provide information for
the DSW features of our model.  Section \ref{sec:class} is dedicated
to the systematic classification of our solutions in the different
parametric regimes, accompanied by illustrative numerical computations
of the different identified waveforms.  In section
\ref{sec:non-uniq-riem}, we show how modification of the Riemann data
\eqref{step} at a single site can lead to drastically different
solution behaviors.  Finally, in section \ref{sec:concl-future-chall},
we summarize our findings and present our conclusions, as well as a
number of open questions for further research into this budding theme.

\section{Model Equations}
\label{sec:model-equations}

It will be beneficial to generalize Eq.~\eqref{eq:3} and consider the
discrete scalar conservation law~\cite{wilma}
\begin{equation} \label{deq}
2 \frac{du_n}{dt} +  \Phi'(u_{n+1} )-  \Phi'(u_{n-1} ) =0,
\end{equation}
a discretization of the more general conservation law
$u_t + \Phi'(u)_x = 0$, where $n\in \Z, t\in \R, u=u_n(t) \in \R$ and
the potential $\Phi(u)$ is assumed to be smooth with $\Phi'(u)$ a
convex function of its argument $\Phi'''(u) \ne
0$. Equation~\eqref{deq} possesses a Lagrangian and Hamiltonian
structure \cite{Herrmann_Scalar}, yet it is first order only, making
its analysis slightly more convenient when compared to classical
nonlinear oscillators, such as those of the Fermi-Pasta-Ulam-Tsingou
(FPUT) type \cite{FPU55}.  Besides serving as a prototype model for
lattice DSWs, Eq.~\eqref{deq} is also of interest for applications,
such as in the description of traffic flow \cite{Whitham90}; for a
discussion of relevant models and their continuum limits see also
Ref.~\cite{wilma}.
 
In this paper, we primarily focus on the potential
\begin{equation}\label{pot}
\Phi(u) = \frac{u^3}{3} .
\end{equation}
For this choice, the ``mass''
$$M(t) = \sum_n u_n(t) $$
and ``energy''
$$ E(t) = \sum_n \Phi(u_n(t)), $$
when well-defined, are conserved in the infinite lattice and in a
finite lattice with periodic boundary conditions.  The linear
dispersion relation for Eq.~\eqref{deq} is
\begin{equation}
  \label{eq:18}
  \omega_0(k,\bar{u}) = \Phi''(\bar{u}) \sin(k), \quad k,\bar{u} \in \R,
\end{equation}
for linearized wave solutions of the form
$u_n(t) = \bar{u} + a e^{i(kn-\omega_0 t)}$, $|a| \ll 1$.  Throughout
the manuscript we consider the Riemann, step initial data
\eqref{step}.
For numerical simulations, the infinite lattice is truncated by
introducing $N > 0$ (even) to represent the number of lattice
sites. The corresponding spatial domain is $-N/2+1 < n \leq N/2$ and
the simulation temporal domain is $[0, T_f / \eps]$, where $T_f$ is a
fixed constant independent of $\epsilon= 1/N$. We use free boundary
conditions $u_{-N/2} = u_{-N/2+1}$ and $u_{N/2} = u_{N/2-1}$ in
conjunction with the initial data Eq.~\eqref{step}, and we choose
domain sizes large enough that interactions with the boundary are
negligible.  When investigating finite time blow up, we employ
periodic conditions. This allows us to monitor if the rescaled
quantities $E(t) \to E(t)/N$ and $M(t) \to M(t)/N$ are conserved
(details in sec.~\ref{sec:blow-up}).  A variational integrator is used
for simulations, see \cite{Herrmann_Scalar}. Simulations were also
carried out with a Runge-Kutta method to check for consistency which
yielded negligible differences on the time scales considered in this
paper (for cases that did not involve blow up features).   Due to the scaling symmetry
$t \to a t$, $u_n \to a u_n$, for any nonzero $a \in \R$ of equation
\eqref{deq} subject to \eqref{pot}, we can set either $u_+ = 1$ or
$u_- = 1$ without loss of generality. Figure \ref{fig:example_sols}
shows a classification of the zoology of solutions that arise from the
Riemann problem. They include
rarefaction waves (RWs) for $u_+= 1$ and $u_-\in(0.18,1)$, 
dispersive shock waves (DSWs) for $u_+ \in (0,1)$ and $u_- = 1$, 
solutions consisting of dispersive shock waves, stationary shocks and rarefaction waves (DSW + SS + RW) for $u_+= 1$ and $u_-\in(-1,-0.26)$, 
solutions consisting of traveling dispersive shock waves (TDSW) for $u_+ \in (-0.724,0)$ and $u_- = 1$ ,
unsteady shocks (US) for $u_+= 1$ and $u_-\in(-0.26,0.18)$, 
and blow up ($\infty$) for $u_+ \in (-1,-0.724)$ and $u_- = 1$.
The region boundaries are approximate. In sec.~\ref{sec:class}, we
provide a detailed analysis for each of the five solution types just
described, starting first with the simplest, and moving through them
gradually in terms of their complexity according to the table in
Fig.~\ref{fig:example_sols}.  We employ a number of tools for the
study of these solutions, including direct numerical simulation,
fixed-point iteration schemes, modulation theory, weak solutions, DSW
fitting, and quasi-continuum modeling.  The details of these
approaches will be given in the sections they are employed, with the
exception of modulation theory. This analysis is slightly more
involved, and thus has a dedicated section. Our intention in
presenting these tools is to leverage this specific, but interesting
in its own right, example in order to utilize a variety of techniques
that may be of broader relevance to applications in other Hamiltonian
nonlinear dynamical lattices. It would be of particular
interest to identify similar phenomena or/and to leverage
the techniques utilized herein in other dispersive,
nonlinear lattice models.

\subsection{An alternative, integrable discretization}
\label{sec:an-altern-integr}

Prior to describing the solutions of Eq.~\eqref{eq:3} depicted in
Fig.~\ref{fig:example_sols}, we briefly comment on the alternative
discretization
\begin{equation}
  \label{eq:5}
  \frac{\rmd u_n}{\rmd t} + u_n\left ( u_{n+1} - u_{n-1} \right ) = 0 ,
\end{equation}
of Eq.~\eqref{eq:1} subject to \eqref{pot}.  This equation was studied
in \cite{LAX86,goodman_dispersive_1988} where it was shown to exhibit
DSWs and, for positive data, to be completely integrable by a
transformation \cite{KAC1975160} to an equation related to the Toda
lattice \cite{toda}.  We have performed numerical simulations of
Eq.~\eqref{eq:5} subject to the Riemann data \eqref{step} and observe
DSWs when $u_- > u_+ \ge 0$, RWs when $u_+ > u_- \ge 0$ and blow-up
when $u_+$ and $u_-$ exhibit opposite signs. Examples of numerical
simulations of DSWs and RWs that emerge from strictly positive initial
Riemann data are shown in Fig. \ref{fig:kvm}

\begin{figure}[H]
    \centering
    \includegraphics{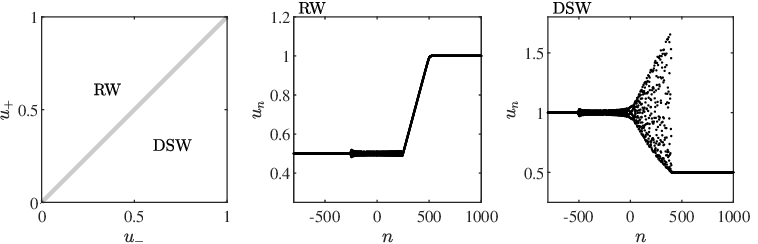}
    \caption{Classification of the Riemann problem of \eqref{step}  (left panel) and typical time evolutions of the Riemann data for a RW (middle) and a DSW (right panel) 
    within Eq. \eqref{eq:5}. }
    \label{fig:kvm}
\end{figure}

Of course, complete integrability confers a great deal of mathematical
structure.  Whitham modulation theory for the Toda lattice was
developed in \cite{Holian81,kodama_solutions_1990,Bloch_Toda92} while
the inverse scattering transform for the Toda lattice with step-type
initial data was developed in \cite{venakides_toda_1991}; see also the
recent discussion of Whitham theory applied to DSWs in the Toda
lattice \cite{biondini_whitham_2023}.  Collectively, these works
support our numerical observation that, for positive Riemann data,
Eq.~\eqref{eq:5} exhibits only RW and DSW solutions.  These Riemann
problem solution behaviors are to be contrasted with those depicted in
Fig.~\ref{fig:example_sols}.  Although integrability of
Eq.~\eqref{eq:5} is lost for sign indefinite initial data, the only
dynamics we numerically observe are indicative of blow up.  Thus, the
discretization \eqref{eq:3} we focus on in this paper, admits a wider
variety of dynamics than the integrable alternative of \eqref{eq:5}.

\section{Whitham Theory}
\label{sec:whitham-theory}

\subsection{Modulation equations for a continuum system} 

In this section, we consider a continuum model system by introducing
the interpolating function $u(x,t)$ such that $u(n,t) = u_n(t)$ for
all $n \in \mathbb{Z}$. This allows us to represent the advance-delay
operator in the discrete system \eqref{deq} as a pseudo-differential
operator. The resulting continuum model is
\begin{align}\label{eq:cont}
u_t + i\sin(-i\partial_x) \left(\Phi'(u)\right) = 0 \,,
\end{align}
Equation \eqref{eq:cont} can be written in the Hamiltonian form 
\begin{align}
u_t = J \frac{\delta H}{\delta u}, 
\end{align}
where $J = i \sin(-i\partial_x)$ is the antisymmetric operator and
$H = \int \Phi(u) \ dx$ is the Hamiltonian. Periodic solutions of
\eqref{eq:cont} are of the form $u(x,t) = \varphi(\theta;\mathbf{q})$
with phase $\theta = kx - \omega t$ and parameters
$\mathbf{q} \in \R^3$ (e.g., wavenumber, amplitude, mean).  They satisfy
\begin{align}
\label{eqn:TW}
  - \omega \varphi_\theta  + i\sin(-ik\partial_\theta)
  \Phi'(\varphi) = 0\,,
\end{align}
which is equivalent to the nonlinear advance-delay differential equation
\begin{align}
\label{eqn:TW2}
     2\,\omega\, \varphi_\theta\at\theta=\Phi'\big(\varphi(\theta+k)\big)-\Phi'\big(\varphi(\theta-k)\big)
\end{align}
The solution theory of such nonlocal equations is rather intricate but
the existence of a three-parameter family of traveling waves has been
establshed in \cite{Herrmann_Scalar} by variational
techniques. Integrating \eqref{eqn:TW} once with respect to $\theta$
\begin{align}
  \label{eq:6}
-c_{\rm p} \varphi  + {\rm sinc}(-ik\partial_\theta) \Phi'(\varphi)= A, 
\end{align}
where $c_{\rm p} = \frac{\omega}{k}$ is the phase speed and $A$ is a
real constant. The pseudodifferential operator is then interpreted as
a multiplier on the Fourier coefficients of $\varphi$,
\begin{align}
\sin(-ik\partial_\theta) \Phi'(\varphi) &= \sum_{n} \sin(n k) \hat{p}_n e^{in \theta}, \\
\hat{p}_n & = \frac{1}{2\pi} \int_0^{2\pi} \Phi'\left(\varphi(\theta)\right)e^{-in\theta} d\theta. 
\end{align}
Equation \eqref{eq:cont} possesses the two conserved quantities
\begin{align}
M(t) &= \int u dx \\
E(t) &= \int \Phi(u) dx,
\end{align}
where the domain of integration is determined by the decay or
periodicity of $u$.  We now seek the modulation equations for a
periodic wave with the slowly varying ansatz
\begin{align}
  \label{eq:7}
  u(x,t)= \varphi(\theta;\mathbf{q}(X,T)) + \epsilon
  \varphi_1(\theta,X,T) + \cdots , \quad X =\epsilon x, \quad T
  =\epsilon t, \quad
  0<\epsilon \ll 1,
\end{align}
in which the leading order term $\varphi(\theta;\mathbf{q})$ is the
periodic traveling wave solution satisfying \eqref{eq:6} with vector
of parameters $\mathbf{q}$ that varies on the slow scales $X$ and $T$
while $\varphi$ is 2$\pi$-periodic in $\theta$. We impose the
generalized wavenumber and frequency relationships $\theta_x = k$ and
$\theta_t = -\omega$ along with their compatibility
\begin{align}
k_T + \omega_X = 0 . 
\end{align}
\begin{lemma} \label{lemma:operator}
The nonlocal operator acting on a modulated periodic function $g(\theta,X,T) \in \mathcal{C}^1$ has the multiple scale expansion 
\begin{align}\label{eq:mult_scales}
\begin{split}
\sin(-i\partial_x) g &= \sin(-i k\partial_\theta - i\epsilon \partial_X) g\\
& \sim \sin(-i k \partial_\theta) g - i \frac{\epsilon}{2}\bigg(\cos(-i k \partial_\theta) g_X + \big(\cos(-i k \partial_\theta) g\big)_X\bigg) + \mathcal{O}(\epsilon^2)\\
\end{split}
\end{align}
\end{lemma}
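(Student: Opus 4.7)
The plan is to exploit the explicit shift representation from equation~\eqref{eq:55}, namely $\sin(-i\partial_x) g(x) = \tfrac{1}{2i}\bigl(g(x+1) - g(x-1)\bigr)$, and to substitute the modulated ansatz $g = g(\theta(x), X(x))$ with $\theta_x = k(X,T)$ and $X_x = \epsilon$. The expansion then reduces to Taylor-expanding the shifted arguments and $g$ itself in powers of $\epsilon$, followed by reassembling the result in terms of $\sin(-ik\partial_\theta)$ and $\cos(-ik\partial_\theta)$, which via the same shift identities act on a function $h(\theta)$ as $\tfrac{1}{2i}[h(\theta+k)-h(\theta-k)]$ and $\tfrac{1}{2}[h(\theta+k)+h(\theta-k)]$ respectively.

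First I would compute the characteristic shifts by integrating $\theta_x = k(X(x'),T)$ from $x$ to $x \pm 1$, using $X(x') = X + \epsilon(x'-x)$, which gives
\[
\theta(x \pm 1) = \theta \pm k + \tfrac{\epsilon}{2} k_X + O(\epsilon^2), \qquad X(x \pm 1) = X \pm \epsilon.
\]
The $\tfrac{\epsilon}{2}k_X$ correction appears with the \emph{same} sign in both cases because it comes from averaging $k(X + \epsilon s)$ over $s \in [0,1]$. Taylor expanding $g$ about $(\theta \pm k, X)$ then yields
\[
g(\theta(x \pm 1), X(x \pm 1)) = g(\theta \pm k, X) \pm \epsilon\, g_X(\theta \pm k, X) + \tfrac{\epsilon}{2} k_X\, g_\theta(\theta \pm k, X) + O(\epsilon^2).
\]
Substituting into $\tfrac{1}{2i}[g(x+1) - g(x-1)]$, the leading-order difference assembles into $\sin(-ik\partial_\theta) g$; the $\pm \epsilon g_X$ terms combine into $-i\epsilon \cos(-ik\partial_\theta) g_X$; and the $\tfrac{\epsilon}{2}k_X g_\theta$ terms (with opposite signs after the subtraction) combine into $\tfrac{\epsilon}{2} k_X \sin(-ik\partial_\theta) g_\theta$.

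Finally I would match this to the symmetric form in the statement by verifying the chain-rule identity
\[
\partial_X\bigl[\cos(-ik\partial_\theta) g\bigr] = \cos(-ik\partial_\theta) g_X + i k_X \sin(-ik\partial_\theta) g_\theta,
\]
which follows by differentiating $\cos(-ik\partial_\theta) g = \tfrac{1}{2}[g(\theta+k(X),X)+g(\theta-k(X),X)]$ in $X$. Substituting into the right-hand side of the lemma recovers exactly the expansion just derived. The main obstacle is the non-commutativity of $-ik\partial_\theta$ and $-i\epsilon\partial_X$ whenever $k_X \neq 0$; the symmetric combination $\tfrac{1}{2}\bigl(\cos(-ik\partial_\theta) g_X + (\cos(-ik\partial_\theta) g)_X\bigr)$ in the statement is the Weyl-type arrangement that precisely absorbs the commutator contribution $\tfrac{1}{2}k_X \sin(-ik\partial_\theta) g_\theta$ at order $\epsilon$, which any naive one-sided substitution would miss.
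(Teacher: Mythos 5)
Your proposal is correct, and I have checked the arithmetic: the one-sided expansion gives
\begin{equation*}
\sin(-i\partial_x)g = \sin(-ik\partial_\theta)g \;-\; i\epsilon\,\cos(-ik\partial_\theta)g_X \;+\; \tfrac{\epsilon}{2}k_X\,\sin(-ik\partial_\theta)g_\theta \;+\; \mathcal{O}(\epsilon^2),
\end{equation*}
and your chain-rule identity $\partial_X\bigl[\cos(-ik\partial_\theta)g\bigr] = \cos(-ik\partial_\theta)g_X + i k_X \sin(-ik\partial_\theta)g_\theta$ shows this coincides with the symmetrized form in the lemma. However, your route is genuinely different from the paper's. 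The paper gives essentially no computation: it invokes the analyticity of $\sin(\cdot)$ and defers to the operator-expansion argument of the cited reference, i.e., a power-series expansion of $\sin(-ik\partial_\theta - i\epsilon\partial_X)$ in which the $\mathcal{O}(\epsilon)$ terms of each monomial $(-ik\partial_\theta - i\epsilon\partial_X)^n$ are collected and symmetrized; that argument applies to any analytic dispersion symbol. You instead exploit the fact that this particular operator is a pure shift, Eq.~\eqref{eq:55}, so the whole expansion reduces to an honest Taylor expansion of $g$ at the shifted arguments $\theta(x\pm 1)$, $X(x\pm 1)$ — no operator calculus needed. What your approach buys is concreteness and a transparent accounting of where the Weyl-symmetric arrangement comes from (your observation that the $\tfrac{\epsilon}{2}k_X$ phase correction enters with the \emph{same} sign in both shifts, because it is the average of $k$ over the unit interval, is exactly the commutator bookkeeping that the symmetric form absorbs); what it gives up is generality, since it relies on the band-limited, finite-difference structure of $\sin(-i\partial_x)$ and would not transfer to a general analytic symbol. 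One minor quibble: to control the $\mathcal{O}(\epsilon^2)$ remainder you implicitly need $g$ to be $\mathcal{C}^2$ rather than the stated $\mathcal{C}^1$, but the paper is equally casual on this point.
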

\begin{proof}
  The proof follows from the analyticity of $\sin(\cdot)$. A detailed
  proof follows all of the ideas in \cite{binswanger_whitham_2021}.
\end{proof}

We now average Eq.~\eqref{eq:cont} and its higher order conserved
densities by introducing the averaging operator
\begin{equation}
  \label{eq:15}
  \overline{F[\varphi]}(X,T) = \frac{1}{2\pi} \int_{0}^{2\pi}
  F[\varphi(\theta;\mathbf{q}(X,T))]\,\mathrm{d}\theta ,
\end{equation}
where $F[u(x,t)] = F(u,u_x,u_t,u_{xt}, \ldots)$ is a local function of
$u$ and its derivatives.  If $u$ is a multiscale
function of the form $u = g(\theta,X,T)$, then
\begin{equation}
  \label{eq:16}
  \begin{split}
    \overline{\partial_t F} &= -\omega
    \overline{\partial_\theta F} + \epsilon
    \partial_T \overline{F} = \epsilon
    \partial_T \overline{F}, \\
    \overline{\partial_x F} &= k \overline{\partial_\theta F} + \epsilon
    \partial_X \overline{F} = \epsilon
    \partial_X \overline{F},
  \end{split}
\end{equation}
by virtue of the fact that $F = F[g]$ is periodic in $\theta$ so the
period average of $\partial_\theta F[g]$ is zero. 
We use Lemma
\ref{lemma:operator} to compute averages of, for example,
$\sin(-i k \partial_\theta)g$ for any $g \in L_2([0,2\pi])$ with the
Fourier series $g = \sum_n \hat{g}_n e^{in\theta}$:x
\begin{align}
\begin{split}
\overline{\sin(-i k \partial_\theta)g} & = \overline{\sum_n \sin(n k) \hat{g}_n e^{in\theta}}
 = 0 .
\end{split}
\end{align}
We now insert the multiple scales ansatz \eqref{eq:7} into the two
conservation laws associated with \eqref{eq:cont}, and average. This
procedure results in the system of conservation laws
\begin{subequations}\label{eq:cont_mod}
  \begin{align}
    \label{eq:21}
    \overline{\varphi}_T + \overline{\Phi'(\varphi)}_X & = 0 \\
    \label{eq:22}
    \overline{\Phi(\varphi)}_T + \frac{1}{2}\left(\sum_m \cos(mk)
    \hat{p}_{m}^2\right)_X & = 0 \\
    \label{eq:23}
    k_T + \omega_X & = 0 
\end{align}
\end{subequations}

In the vanishing amplitude $a \to 0$ limit, Eqs.~\eqref{eq:21} and
\eqref{eq:22} become the Hopf equation
\begin{equation}
  \label{eq:20}
  \bar{u}_T + \Phi''(\bar{u})\bar{u}_X = 0,
\end{equation}
for the mean $\bar{u} = \overline{\varphi}$, and the conservation of
waves equation \eqref{eq:23} corresponds to linear wave modulation
theory with frequency $\omega = \omega_0$ given by the linear
dispersion relation \eqref{eq:18}.

The nonlinear modulation equations can alternatively be derived by
employing Whitham's other method of an averaged Lagrangian functional,
see for instance \cite[chapter 14]{Whitham74} and
\cite{benzoni-gavage_slow_2014} for symplectic PDEs,
\cite{Venakides99, DHM06, DH08, HR10} for an application to FPUT chains as the
most prominent example of Hamiltonian lattices, and
\cite{CHONG2022133533} for the discrete conservation law
\eqref{deq}. In this setting, the modulation equations take the form
\begin{subequations}\label{eqn:WhithamAbstract}
  \begin{align}
    \label{eqn:WhithamAbstractV}
   \overline{u}_T+(E_ {\overline{u}})_X &=0\\
    \label{eqn:WhithamAbstractK}
   k_T + (E_S)_X&=0
    \\%
    \label{eqn:WhithamAbstractS}
    S_T + (E_k)_X & = 0 \,.
\end{align}
This is a system of Hamiltonian PDEs with density variables $\overline{u}$, $k$,
and $S$, which represent the wave mean, the nonlinear wave number, and a
nonlocal auxiliary variable that might be regarded as a generalized
wave momentum. Moreover, the equation of state $E=E(\overline{u},k,S)$ describes
the energy of a traveling wave and its partial derivatives provide the
fluxes in \eqref{eqn:WhithamAbstract}. The energy is also conserved
according to the extra conservation law
\begin{align}
\label{eqn:WhithamAbstractE}
E_T+\big(\tfrac12 E_{\overline{u}}^2+E_kE_S \big)_X=0\,,
\end{align}
\end{subequations}
which is implied by \eqref{eqn:WhithamAbstract} thanks to the chain rule.
A closer look to the derivation of \eqref{eqn:WhithamAbstract} in \cite{CHONG2022133533} reveals that \eqref{eq:21} and \eqref{eq:23} correspond to \eqref{eqn:WhithamAbstractV} and 
\eqref{eqn:WhithamAbstractK}, respectively, while \eqref{eq:23} is the analogue to \eqref{eqn:WhithamAbstractE}.
A complete 
understanding of \eqref{eq:cont_mod} and \eqref{eqn:WhithamAbstract} is currently out of reach because we are not able to characterize the analytical properties of the constitutive relations since these depend in a very implicit and not tractable way on the three-dimensional solution sets of the nonlinear advance-delay-differential equation \eqref{eqn:TW}. For instance, it is not even clear for which values of the parameters the Whitham system \eqref{eqn:WhithamAbstract} is hyperbolic or genuinely nonlinear. For this reason we do not work with the full lattice modulation equations directly but combine different approximation procedures with a careful evaluation of numerical data.
\subsection{Relation to the lattice dynamics}

Although neither analytical nor numerical solutions to the nonlinear modulation systems 
\eqref{eq:cont_mod} or \eqref{eqn:WhithamAbstract} are available, we
can extract important partial information from numerical simulations
of initial value problems to \eqref{deq}. The key observation is that
the lattice ODE as well as an implied energy equation represent
discrete counterparts of local conservation laws and transform under
the hyperbolic scaling of space and time into first order PDEs. To see
this, we fix a \textit{window function} $\chi$ that depends smoothly
on the macroscopic variables $(X,T)$, decays sufficiently fast, and
has normalized integral. Using a shifted copy of $\chi$, we are able
to quantify the local moments of any microscopic observable near a fixed macroscopic point. For instance, the average
\begin{align}
\label{eq:MesAv}
    \langle u_n\rangle\pair{X}{T}=\eps^2\sum_{\breve{n}}\int u_{\breve{n}}(\breve{t}\,)\,\chi(\eps \breve{t}-T,\eps \breve{n}-X)\,d{\breve{t}}\,,\qquad T=\eps\, t\,,\qquad X= \eps\,n
\end{align}
represents the \textit{mesoscopic space-time averages} of $u_n(t)$ near the macroscopic point $(X,T)$.

\begin{lemma} Any bounded solution to \eqref{deq} satisfies in the hyperbolic scaling limit $\eps\to0$ the conservation laws
\begin{subequations}\label{eq:disc_mod}
  \begin{align}
    \label{eq:a_disc_mod}
    \partial_T \big\langle u_n \big\rangle  + \partial_X \big\langle \Phi'(u_n) \big\rangle =
    0, \\
    \label{eq:b_disc_mod}
    \partial_T \big\langle \Phi(u_n)\big\rangle + \partial_X
    \big\langle\tfrac{1}{2} \,\Phi'(u_n)\, \Phi'(u_{n+1})\big\rangle =0,
  \end{align}
\end{subequations}
provided that these are interpreted in a distributional sense.
\end{lemma}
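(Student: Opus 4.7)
The strategy is to extract two exact microscopic local conservation laws from \eqref{deq}, use the window average \eqref{eq:MesAv} to transfer all $T$- and $X$-derivatives onto the smooth test function $\chi$, and then Taylor expand $\chi$ in its macroscopic arguments to identify the fluxes on the right-hand side of \eqref{eq:disc_mod} up to a remainder that vanishes distributionally as $\epsilon\to 0$.

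The first microscopic identity is simply \eqref{deq} rewritten as
\begin{equation}
\frac{d}{dt}u_{\breve n}=-\tfrac12\bigl(\Phi'(u_{\breve n+1})-\Phi'(u_{\breve n-1})\bigr).
\end{equation}
The second is obtained by multiplying \eqref{deq} by $\Phi'(u_{\breve n})$: the left-hand side becomes $\tfrac{d}{dt}\Phi(u_{\breve n})$, and the right-hand side has the telescoping structure
\begin{equation}
\frac{d}{dt}\Phi(u_{\breve n})=-\tfrac12\bigl(\Phi'(u_{\breve n})\Phi'(u_{\breve n+1})-\Phi'(u_{\breve n-1})\Phi'(u_{\breve n})\bigr),
\end{equation}
so that the natural energy flux $\tfrac12\Phi'(u_{\breve n})\Phi'(u_{\breve n+1})$ already matches the integrand in \eqref{eq:b_disc_mod}. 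These identities are exact and require no averaging.

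I would next apply $\partial_T$ to \eqref{eq:MesAv} and use $\partial_T\chi(\epsilon\breve t-T,\epsilon\breve n-X)=-\epsilon^{-1}\partial_{\breve t}\chi$. Integrating by parts in $\breve t$ (permitted by the fast decay of $\chi$) produces $\dot u_{\breve n}$ or $\tfrac{d}{dt}\Phi(u_{\breve n})$ under the sum-integral; substituting the microscopic identities and performing a discrete summation by parts in $\breve n$ turns the centered or one-sided flux differences into corresponding differences of $\chi$ in its spatial argument. A Taylor expansion
\begin{equation}
\chi(\cdot,\epsilon(\breve n\pm 1)-X)=\chi(\cdot,\epsilon\breve n-X)\pm\epsilon\,\partial_2\chi+\tfrac{\epsilon^2}{2}\partial_2^2\chi+\mathcal{O}(\epsilon^3),
\end{equation}
combined with $\partial_2\chi=-\partial_X\chi$, then reconstructs $-\partial_X\langle\Phi'(u_n)\rangle$ for the first law and $-\partial_X\langle\tfrac12\Phi'(u_n)\Phi'(u_{n+1})\rangle$ for the second. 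The explicit $\epsilon^{2}$ prefactor in \eqref{eq:MesAv} is exactly what balances the $\epsilon^{-1}$ from the rescaling of $\partial_T\chi$ together with the $\epsilon$ from the Taylor leading term.

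The main obstacle is controlling the Taylor remainder rigorously without any regularity hypothesis on $u_{\breve n}$ beyond boundedness. Because the window sum contains $\mathcal{O}(\epsilon^{-2})$ nonzero terms while the Taylor remainder of the first law is $\mathcal{O}(\epsilon^3\|\partial_2^3\chi\|_\infty)$ (the centered stencil kills the $\mathcal{O}(\epsilon^2)$ correction by parity), the leftover error is $\mathcal{O}(\epsilon)$; for the second law the one-sided difference yields an $\mathcal{O}(\epsilon)$ remainder of the same order. These vanish once \eqref{eq:disc_mod} is paired against a further smooth test function in $(X,T)$, which is precisely the content of interpreting the limit distributionally. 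No additional assumption on microscopic oscillations of $\{u_{\breve n}\}$ is needed, since any such oscillations influence the limiting PDE only through their $\chi$-weighted moments, which appear explicitly as the mesoscopic averages $\langle\cdot\rangle$.
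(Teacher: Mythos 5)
Your proposal follows essentially the same route as the paper's own (informal) proof: the same two exact microscopic conservation laws (the lattice ODE itself and the telescoping energy identity obtained by multiplying by $\Phi'(u_{\breve n})$), followed by discrete summation by parts and integration by parts in $\breve t$ to move the differences onto the smooth window $\chi$, a Taylor expansion identifying the macroscopic fluxes, and a distributional interpretation of the $\eps\to 0$ limit. Your added bookkeeping on the Taylor remainders (parity cancellation for the centered stencil, $\mathcal{O}(\eps)$ relative error for the one-sided energy flux) is consistent and slightly more explicit than the paper's ``h.o.t.'' treatment, but it is not a different argument.
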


\begin{proof} We only give an informal derivation but mention that an alternative and more elegant framework is provided by the theory of Young measures. The latter can also be applied to non-smooth window functions $\chi$ and reveals that the mesoscopic averages $\langle\cdot\rangle$ can be expected to be independent of the particular choice for $\chi$.
Using the abbreviation $p_n = \Phi'(u_n)$, discrete integration by parts as well as the smoothness of $\chi$ we verify
\begin{align*}
    \big\langle\tfrac12\, p_{n+1}-\tfrac12\, p_{n-1}\big\rangle\pair{X}{T} &=
    \eps^3\sum_{\breve{n}}\int p_{\breve{n}}(\breve{t}\,)\,\Big(\tfrac12\, \chi(\eps \breve{t}-T,\eps \breve{n}-\eps-X)-
    \tfrac12\, \chi(\eps \breve{t}-T,\eps \breve{n}+\eps-X) \Big)\,d{\breve{t}}
    \\&=-
    \eps^3\sum_{\breve{n}}\int p_{\breve{n}}(\breve{t}\,)\,\partial_X \chi(\eps \breve{t}-T,\eps \breve{n}-X)\,d{\breve{t}}+\text{h.o.t.}
    \\&=
    -\eps\,\partial_X \big\langle p_n\big\rangle\pair{X}{T}+\text{h.o.t.}
\end{align*}
and by similar computations we obtain
\begin{align*}
    \big\langle\dot{u}_{n+1}\big\rangle\pair{X}{T} =
    -\eps\,\partial_T\big\langle u_n\big\rangle\pair{X}{T}+\text{h.o.t.}\,.
\end{align*}
The asymptotic validity of \eqref{eq:a_disc_mod} is thus a direct consequence of the microscopic dynamics \eqref{deq}, the definition of the bracket $\langle\cdot\rangle$ in \eqref{eq:MesAv}, and the hyperbolic scaling. The lattice ODE \eqref{deq} implies
with
\begin{align*}
    \frac{d}{d t} \Phi(u_n)+ \big( \tfrac12\, \Phi^\prime(u_{n})\,\Phi^\prime(u_{n+1})\big)-\big( \tfrac12\, \Phi^\prime(u_{n-1})\,\Phi^\prime(u_{n})\,\big)=0\,,
\end{align*}
another discrete conservation law (in which the time derivative of a
density is balanced by the discrete divergence of a flux quantity), so
the second claim \eqref{eq:a_disc_mod} can be justified along the same
lines.
\end{proof}

There is an important difference between the conservation laws in
\eqref{eq:cont_mod} and \eqref{eq:disc_mod}.  The PDEs in
\eqref{eq:cont_mod} (and likewise those in
\eqref{eqn:WhithamAbstract}) are derived under the \textit{hypothesis}
that the lattice solution can be approximated by a modulated traveling
wave, see \eqref{eq:7}, and the closure relations involve the
(unknown) profile functions for traveling lattice waves as well as
averages with respect to the scalar phase variable $\theta$.
Numerical simulations with well-prepared initial data (e.g., the
Riemann initial data \eqref{step}) indicate that the approximation
assumption concerning the microscopic data is indeed satisfied but no
rigorous proof is available, neither for the lattice \eqref{deq} nor
for FPUT chains with convex interaction potential. The only exceptions are
the few completely integrable cases but the details are still
complicated and involve special coordinates related to the Lax
structure. In particular, even for the lattice of Eq.~\eqref{eq:5}
and the Toda chain
it is not easy to compute how the phase averages in the modulation
equations depend on the traveling wave parameters.

The status of \eqref{eq:disc_mod} is completely different. The two
PDEs can be established under very mild assumptions (boundedness of
lattice solutions) and by means of fundamental mathematical principles
(such as integration by parts and compactness in the sense of Young
measures). They reflect universal constraints for the
macroscopic dynamics, do not require any a-priori knowledge on the
fine structure of the microscopic oscillations, and hold for a large
class of initial data (which might even be oscillatory or random).  
Moreover, the
mesoscopic space-time averages can easily by extracted from numerical
data. In the simplest case, we use a straightforward box counting with
space-time windows of microscopic length $1/\sqrt{\eps}$ (or 
macroscopic length $\sqrt{\eps}$).  Of course, \eqref{eq:disc_mod}
does not provide a complete set of macroscopic equations and without
further information it is not clear whether or how the fluxes can be
computed in a pointwise manner from the densities. The equations are
nevertheless very useful since they allow us to derive and check
partial information on the solution of the modulation equations from
numerical data. In particular, in the context of modulated traveling
waves, the PDEs \eqref{eq:a_disc_mod} and \eqref{eq:b_disc_mod}
correspond to \eqref{eq:21} and \eqref{eq:22}, respectively.

\subsection{Self-similar solutions}

The Whitham modulation equations \eqref{eq:cont_mod} are a system of
conservation laws that can be compactly expressed in the form
\begin{align}
  \label{eq:11}
  \mathbf{P}(\mathbf{q})_T + \mathbf{Q}(\mathbf{q})_X = 0, \quad
  \mathbf{q} = [\bar{u},a,k]^{\rm T}, 
\end{align}
where the vectorial density $\mathbf{P}$ and flux $\mathbf{Q}$ depend
on the slowly varying parameters $\mathbf{q}$ through integrals of the
periodic orbit $\varphi$.  Equation \eqref{eq:11} can also be
expressed in the form
\begin{equation}
  \label{eq:14}
  \mathbf{q}_T + \mathcal{A} \mathbf{q}_X = 0, \quad \mathcal{A} =
  \left ( \frac{\partial 
      \mathbf{P}}{\partial \mathbf{q}} \right )^{-1} \frac{\partial
    \mathbf{Q}}{\partial \mathbf{q}},
\end{equation}
provided the inverse is nonsingular.  We will use solutions of the
Whitham equations to approximate the long time dynamics of solutions
to the Riemann problem \eqref{eq:3}, \eqref{step}.  Consequently, it
is natural to consider the Riemann problem
\begin{align}
  \label{eq:12}
  \mathbf{q}(X,0) = \begin{cases}
             \mathbf{q}_- & X < 0\\ 
             \mathbf{q}_+ & X > 0 
           \end{cases}, 
\end{align}
for the Whitham equations \eqref{eq:11} themselves.  Rarefaction
(simple) wave solutions and discontinuous shock solutions of the
binary oscillation modulation system \eqref{eq:9} were used in
\cite{wilma} to interpret various features of the numerical solutions.
In this work, we will make use of rarefaction wave and discontinuous
shock solutions of the more general Whitham modulation equations
\eqref{eq:cont_mod}.

The invariance of the Riemann problem \eqref{eq:11}, \eqref{eq:12}
with respect to the hydrodynamic scaling $X \to \sigma X'$,
$T \to \sigma T'$ for real $\sigma \ne 0$ suggests seeking
self-similar solutions in the form $\mathbf{q} = \mathbf{q}(\xi)$,
$\xi = X/T$.  Equation \eqref{eq:14} possesses rarefaction waves
 satisfying \cite{dafermos_hyperbolic_2016}
\begin{equation}
  \label{eq:17}
  \frac{\rmd \mathbf{q}}{\rmd \xi} = \frac{\mathbf{r}_i}{\nabla \lambda_i
    \cdot \mathbf{r}_i}, \quad \mathbf{q}(\xi_\pm) = \mathbf{q}_\pm,
  \quad \xi_- < \xi_+,  
\end{equation}
where $\mathcal{A} \mathbf{r}_i = \lambda_i \mathbf{r}_i$ and
$\lambda_i = \xi$, provided the characteristic field is genuinely
nonlinear $\nabla \lambda_i \cdot \mathbf{r}_i \ne 0$.  Since
$\mathbf{q}_\pm$ lie on the same, one-dimensional integral curve, they
are constrained by two integral relations resulting from integration
of the third order ODEs \eqref{eq:17}.  Admissibility requires
$\xi_- < \xi_+$.  The eigenvalues $\lambda_i$ can be interpreted as
speeds. For example, in the context of DSWs, the trailing edge speed
is $c_-=\xi_-$ and the leading edge speed $c_{+} = \xi_+$.

Another class of self-similar solutions are discontinuous shock
solutions to the Whitham system \eqref{eq:cont_mod}
\begin{align}
  \mathbf{q}(\xi) = \begin{cases}
             \mathbf{q}_- & \xi < V \\ 
             \mathbf{q}_+ & \xi > V 
           \end{cases}, 
\end{align}
where $V$ is the velocity of the shock solution that satisfies the
Rankine-Hugoniot jump conditions
\begin{align}
  \label{eq:19}
  - V [\![ \mathbf{P} ]\!] + [\![ \mathbf{Q} ]\!] = 0 .
\end{align}
The brackets $ [\![ \cdot ]\!]$ denote the jump in its argument evaluated on
the left and right triple $\mathbf{q}_\pm$ that parameterize distinct,
steady periodic orbits $\varphi_\pm$. 

For strictly hyperbolic, genuinely nonlinear Whitham modulation
equations with negative linear dispersion
($\partial_k^2 \omega_{0} < 0$), classical DSW solutions connecting
the two constant states $u_\pm$ are described by a rarefaction
solution of \eqref{eq:17} in which $\lambda = \lambda_2$ is the middle
characteristic speed and $\mathbf{q}_- = [u_-,0,k_-]$,
$\mathbf{q}_+ = [u_+,a_+,0]$.  The two constraints that result from
integrating \eqref{eq:17} determine the trailing edge wavenumber $k_-$
and speed $\xi_-$ as well as the leading edge amplitude $a_+$ and
speed $\xi_+$ \cite{Mark2016}.  Therefore, a classical DSW corresponds
to a rarefaction wave solution of the modulation equations,
\textit{not} a shock solution.  For DSW construction, we will use the
DSW fitting method, which leverages certain structural properties of
the Whitham modulation equations under the assumptions of strict
hyperbolicity and genuine nonlinearity in order to obtain $k_-$,
$a_+$, and $\xi_\pm$ by integrating a scalar ODE
\cite{El2005,Mark2016}.

Whitham himself pondered the notion of discontinuous shock solutions
to his eponymous equations \cite{Whitham74}. But their utility was
only recently discovered in \cite{Sprenger_2020} where shock solutions
of the Whitham modulation equations for a fifth order KdV (KdV5)
equation were deemed admissible if there exists a heteroclinic
traveling wave solution connecting the corresponding left and right
periodic orbits, each moving with the same speed as the shock.  Such
traveling wave solutions are possible in higher order equations such
as KdV5.  These \textit{Whitham shocks} were used to solve the Riemann
problem for KdV5 and, later, were investigated in the Kawahara
equation \cite{sprenger_traveling_2023}.  In this paper, we will show
that similar Whitham shocks emerge as the traveling wave portion of
the TDSW solution in Fig.~\ref{fig:example_sols}.  We also provide
analytical and numerical evidence of the existence of a new class of
Whitham shocks, i.e., shock solutions of the Whitham modulation
equations \eqref{eq:cont_mod} whose corresponding left and right
periodic orbits possess the same frequency but different speeds than
one another and the shock itself (see US in
Fig.~\ref{fig:example_sols}).

\subsection{Weakly nonlinear regime}

In the previous sections, we derived the modulation equations supposing
the existence of a family of nonlinear periodic solutions. In the
case where no known explicit periodic solution is available,
it is useful to approximate the periodic solution with a truncated
cosine series. The approximation via the Poincar\'{e}-Lindstedt method
utilizes an asymptotic expansion of both the profile of the periodic
solution and its frequency in the small amplitude parameter
$0 < a \ll 1$. The approximate periodic solution and its frequency are
given, for a generic potential $\Phi$ by
\begin{align} 
u &\sim \bar{u} + \frac{a}{2} \cos(kn - \omega t) +  a^2 \frac{\sin (2 k) \Phi^{(3)}(\bar{u})}{16 \Phi''(\bar{u})\left(2 \sin (k) -\sin (2 k) \right)} \cos(2(kn - \omega t)) + o(a^2),\label{eq:stokes_u}
\\
\omega &\sim \Phi''(\bar{u}) \sin(k) + a^2 \omega_2 + o(a^2), \quad \omega_2 = \frac{1}{32} \sin (k) \left(\frac{\Phi^{(3)}(\bar{u})^2}{(\sec (k)-1) \Phi''(\bar{u})}+\Phi^{(4)}(\bar{u})\right) ,\label{eq:stokes_omega}
\end{align}
which maintain their asymptotic ordering so long as
\begin{equation}
  \label{eq:10}
  a^2/|k| \ll 1 \quad \mathrm{and} \quad
  |a \Phi^{(3)}(\bar{u})/\Phi''(\bar{u})| \ll 1, 
\end{equation}
i.e., for $\Phi(u) = u^3/3$, neither $|k|$ nor $|\bar{u}|$ are too
small.  Inserting \eqref{eq:stokes_u}, \eqref{eq:stokes_omega} into
the modulation equations \eqref{eq:cont_mod}, we obtain the weakly
nonlinear Whitham modulation equations in conservative form by
retaining terms up to $O(a^2)$
\begin{align}
\bar{u}_T + \left(\Phi'(\bar{u}) + \frac{a^2}{16}\Phi'''(\bar{u})\right)_X & = 0, \\
\left(\Phi(\bar{u}) + \frac{a^2}{16}\Phi''(\bar{u})\right)_T 
+ \left(\frac{1}{2} \Phi'(\bar{u})^2 + \frac{a^2}{16} \left(\Phi''(\bar{u})^2 \cos(k) + \Phi'(\bar{u})\Phi'''(\bar{u})\right)\right)_X & = 0, \\
k_T + \left(\Phi''(\bar{u}) \sin(k) + \frac{a^2}{32} \sin (k)
  \left(\frac{\Phi^{(3)}(\bar{u})^2}{(\sec (k)-1)
  \Phi''(\bar{u})}+\Phi^{(4)}(\bar{u})\right)\right)_X & = 0 .
\end{align}
In the case of the cubic potential \eqref{pot}, our focus here, the
modulation equations are
\begin{subequations}
  \label{eq:weakly_nl_mod}
  \begin{align}
    \label{eq:a_weakly_nl_mod}
    \bar{u}_T + \left(\bar{u}^2 + \frac{a^2}{8}\right)_X & = 0, \\
    \label{eq:b_weakly_nl_mod}
    \left(\frac{\bar{u}^3}{3} + \frac{a^2}{8}\bar{u}\right)_T 
    + \left(\frac{1}{2} \bar{u}^4+ \frac{a^2}{8} \bar{u}^2 \left(2
    \cos(k) + 1\right)\right)_X & = 0, \\ 
    \label{eq:c_weakly_nl_mod}
    k_T + \left(2\bar{u}\sin(k) + \frac{a^2}{16}
    \left(\frac{\sin(k)}{(\sec (k)-1) \bar{u}}\right)\right)_X & = 0. 
\end{align}
\end{subequations}
Properties of the modulation equations can be elucidated by casting
them in quasi-linear form
$\tilde{\mathbf{q}}_t + \tilde{\mathcal{A}} \tilde{\mathbf{q}}_x = 0$,
where
\begin{align}
  \tilde{\mathcal{A}} =
  \begin{bmatrix}
    2 \bar{u} & \frac{1}{8} & 0 \\
    4 a^2 \cos (k) & 2 \bar{u} \cos (k) 
                            & -2 a^2 \bar{u} \sin (k) \\
    2 \sin (k) + a^2 \omega_{2,\bar{u}} & \omega_2
                            & 2\bar{u} \cos(k) + a^2
                              \omega_{2,k} 
  \end{bmatrix},
                              \qquad \tilde{\mathbf{q}} =
                              \begin{bmatrix}
                                \bar{u} \\ a^2 \\ k
                              \end{bmatrix}.
\end{align}
A perturbation calculation gives the
eigenvalues of the flux matrix $\mathcal{A}$ to $\mathcal{O}(a)$
\begin{subequations}
  \label{eq:48}
  \begin{align}
    \lambda_3 & = 2 \bar{u} +O(a^2), \\ 
    \lambda_2 & = 2\bar{u} \cos(k) +
                \frac{a}{2}\cos\left(\frac{k}{2}\right)\sqrt{2-\cos(k)} +O(a^2),
    \\ 
    \lambda_1 & = 2\bar{u} \cos(k) -
                \frac{a}{2}\cos\left(\frac{k}{2}\right)\sqrt{2-\cos(k)}  +O(a^2)
                , 
  \end{align}
\end{subequations}
with the corresponding right eigenvectors
\begin{subequations}
  \begin{align}
    \mathbf{r}_3&= \left [
         \bar{u}\tan\left(\frac{k}{2}\right), 0, 1 \right ]^T
         +O(a^2),\\
    \mathbf{r}_2&=
         \left [0, 0, \sqrt{2-\cos (k)} \right ]^T
         + \frac{a}{4} \left [ \csc
                  \left(\frac{k}{2}\right), - 32 \bar{u} \sin
         \left(\frac{k}{2}\right), 0  \right ]^T
         + O(a^2), \label{eq:r2} \\
    \mathbf{r}_1&= [0, 0, \sqrt{2-\cos (k)} ]^T - \frac{a}{4}
         \left [ \csc \left(\frac{k}{2}\right), -32 \bar{u} \sin
         \left(\frac{k}{2}\right),
         0  \right ]^T
         + O(a^2). 
  \end{align}
\end{subequations}
    
The quasilinear system is strictly hyperbolic if all of the
eigenvalues are distinct, and real valued. To the order of the
approximation given, the weakly nonlinear system is strictly
hyperbolic provided $a \ne 0$, $k \ne \pi$, and
\begin{equation}
  \label{eq:54}
  |\overline{u}| \ne \overline{u}_{\rm cr}, \quad u_{\rm cr} =
  \frac{a\cos \left ( \frac{k}{2} \right ) 
    \sqrt{2-\cos(k)}}{8 \sin^2\left ( \frac{k}{2} \right )} .
\end{equation}
When $\overline{u} > u_{\rm cr}$ and $a > 0$, the eigenvalues are
ordered $\lambda_1 < \lambda_2 < \lambda_3$.  When $a = 0$,
Eqs.~\eqref{eq:a_weakly_nl_mod} and \eqref{eq:b_weakly_nl_mod}
coincide with the Hopf equation for the mean $\bar{u}$ and the
remaining equation corresponds to the conservation of waves from
linear wave modulation theory.  When $k = \pi$,
Eq.~\eqref{eq:c_weakly_nl_mod} is identically satisfied.  While
intuition might suggest that \eqref{eq:a_weakly_nl_mod} and
\eqref{eq:b_weakly_nl_mod} are somehow related to the modulation
equations for binary oscillations \eqref{eq:9}, in fact, the
asymptotic derivation breaks down.  For example, the period average of
the weakly nonlinear solution \eqref{eq:stokes_u} is no longer
$\bar{u}$ but rather $\bar{u} - \frac{a^2}{8\bar{u}}$ so that the
density in \eqref{eq:a_weakly_nl_mod} does not correspond to the
density in \eqref{eq:a_disc_mod}.  When $k = \pi$, one should discard
Eq.~\eqref{eq:weakly_nl_mod} altogether in favor of the modulation
equations for binary oscillations \eqref{eq:9}, which apply beyond the
weakly nonlinear regime considered here.

\section{Classification of Solutions} \label{sec:class}

From now onwards, we focus solely on the discrete equation
\eqref{eq:3} (Eq.~\eqref{deq} subject to \eqref{pot}).  Figure
\ref{fig:example_sols} depicts seven qualitatively distinct solution
families to the Riemann problem \eqref{eq:3}, \eqref{step} depending
upon the parameters $u_\pm$ in the initial data.  We now proceed to
describe each of these solution families using a combination of
numerical simulation, Whitham modulation theory, and quasi-continuum
approximation.  The straight line boundaries between each solution
family in Fig.~\ref{fig:example_sols} are determined empirically (to
two decimal digits accuracy) and some are explained by analytical
considerations. By a possible reflection of the lattice $n \to -n$ and
a rescaling of time, we can, without loss of generality, set either
$u_+ = 1$ while varying $u_- \in [-1,1]$ or set $u_- = 1$ while
varying $u_+ \in [-1,1]$.  Therefore, we can map out the phase diagram
in the $(u_-,u_+)$  plane by traversing the top and right edges of the
square $[-1,1]^2$.

The special case in which $u_+ = u_-$ is trivial but the case in which
$u_+ = -u_- \ne 0$ is the stationary shock (SS) solution
\eqref{eq:35}.  Otherwise, the solutions exhibit more complexity,
which we now explore. We start with the simplest case first, and then
work toward the richest, most complex scenario.

\subsection{Rarefaction waves (RWs)}
\label{sec:RW}

The simplest observed dynamical structure is the rarefaction wave
shown as RW in Fig.~\ref{fig:example_sols}.  Empirically, we find that
they form when $u_+= 1$ and $u_-\in(0.18,1)$.  The bifurcation at
$u_- = 0.18$ will be described in section~\ref{sec:unsteady}. The leading
order RW behavior is given by the self similar solution
($\xi = n/t = X/T$)
\begin{align}
  \label{eq:27}
  u_n(t) \sim \bar{u}(\xi) =
  \begin{cases}
    u_- & \xi \leq 2u_- \\ 
    \xi/2 & 2u_-  < \xi \leq 2u_+  \\ 
    u_+ & 2u_+ < \xi
  \end{cases}, 
\end{align}
of the dispersionless equation \eqref{eq:20}.  A favorable comparison
of this profile with a numerical simulation is shown in
Fig.~\ref{fig:RW}. Because the data is expansive, the effect of
dispersion manifests at higher order where a small amplitude,
dispersive wavetrain is emitted from the lower, left edge of the RW.
\begin{figure}
    \centering
    \includegraphics{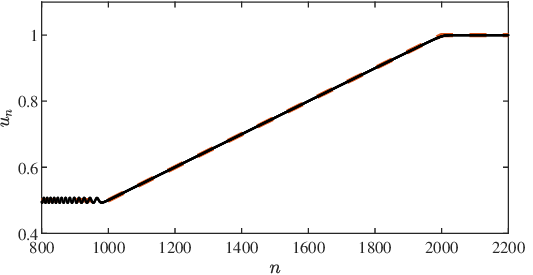}
    \caption{Comparison of the self-similar solution \eqref{eq:27}
      (red dashed) with numerical simulation of the initial data
      \eqref{step} with $u_- = 0.5$ and $u_+ = 1$ (black dots).}
    \label{fig:RW}
\end{figure}
The slowest (most negative) group velocity is
$\partial_k\omega_0(\pi,u_-) = -2 u_-$, which corresponds to an inflection
point of the linear dispersion relation \eqref{eq:18}. 
Consequently,
the leftmost edge of these small amplitude waves is expected to have
an Airy profile whose decay estimate is proportional to $t^{-1/3}$,
similar to the Fourier analysis carried out for linear FPUT chains
\cite{MielkePatz2017}. The details of the linear wavetrain
accompanying RWs and DSWs for the BBM equation were studied in
\cite{congy_dispersive_2021}.  We follow a similar procedure by
linearizing about the left initial state $u_n = u_- + v_n$ to obtain 
\begin{align}
  \label{eq:linearized_eq}
  \frac{\rm d}{\rm{d} t}v_n + u_-(v_{n+1} - v_{n-1}) = 0. 
\end{align}
The initial data \eqref{step} then becomes
\begin{equation}
  \label{eq:36}
  f_n =
  \begin{cases}
    0 & n < 0, \\ u_+ - u_- & n \geq 0 ,
  \end{cases}
\end{equation}
whose discrete-space Fourier transform is the distribution
\begin{align}
  \label{eq:37}
  \hat{f}(k) = \sum_{n=-\infty}^\infty f_n
  e^{-ink} = (u_+ - u_-) \left (\frac{1}{1 - e^{-ik}} +
  \pi \delta(k) \right ), \quad k \in (-\pi,\pi],
\end{align}
where $\delta(k)$ is the Dirac delta.  To approximate the nonlinear
equation \eqref{eq:3} by the linear equation \eqref{eq:linearized_eq},
one could seek solutions in which $0 < u_+ - u_- \ll |u_+|+|u_-|$.
Alternatively, we follow \cite{congy_dispersive_2021} and consider
scale separation in which the highest frequency components of
\eqref{eq:37} are assumed to separate from the RW so that the initial
data becomes
\begin{equation}
  \label{eq:38}
  \hat{v}(k,0) =
  \begin{cases}
    \frac{u_+ - u_-}{1 - e^{-ik}} & k_0 < |k| \le \pi, \\
    0 & \mathrm{else} ,
  \end{cases}
\end{equation}
for some $0< k_0 < \pi$ that is sufficiently far from the zero dispersion points, $k = 0,\pi$.  Then, the solution of the linear equation
\eqref{eq:linearized_eq} can be determined by taking the discrete-space
Fourier transform $\hat{v}(k,t) = \sum_n v_n(t) e^{-ink}$.  The
solution of Eq.~\eqref{eq:linearized_eq} subject to \eqref{eq:38} is
\begin{align}
  \label{eq:stat_phase}
  v_n(t) = \frac{1}{2\pi} \int_{-\pi}^\pi \hat{v}(k,0) e^{i\theta(k)t}
  \ {\rm d}k \qquad \theta(k;n,t,u_-) = k n/t- 2u_-\sin(k) .
\end{align}
Quantitative information regarding the solution can be determined
asymptotically for $t \to \infty$ with $n/t$ fixed using the method of
stationary phase \cite{Whitham74}. The leading order behavior is
determined by analyzing the integral \eqref{eq:stat_phase} near the
stationary points, $k_s$ where $\theta_k(k_s) = 0$. Stationary points
are therefore given by $\pm k_s$ where
\begin{align}
    n/t = 2 u_- \cos(k_s) ,
\end{align}
for $-2u_- < n/t < 2u_-$. The leading order behavior in the vicinity
of the stationary points is determined by expanding the integrand in
\eqref{eq:stat_phase} about the stationary points $k = \pm k_s$. When
$k_s \ne \pi$ and $|k_s| > k_0$, the leading order behavior is
\begin{equation}\label{eq:nondeg_stat_phase}
  \begin{split}
  v_n(t) &\sim \frac{1}{\sqrt{2\pi t |\partial_{k}^2 \omega_0(k_s,u_-)|}}
           \left(\hat{v}(k_s,0) e^{i k_s n - i \omega_0(k_s,u_-) t + i
           \pi/4} + {\rm c.c.} \right), \\
         & = \frac{1}{\sqrt{2\pi t |\partial_{k}^2 \omega_0(k_s,u_-)|}} (u_+ - u_-)
           \csc\left(\tfrac{k_s}{2}\right)\sin\left(\theta(k_s)t +
           \pi/4 + \tfrac{k_s}{2}\right) .
  \end{split}
\end{equation}
The profile \eqref{eq:nondeg_stat_phase} is compared with numerical
simulations of the initial value problem in
Fig.~\ref{fig:linear_wave_comparison_RW} on the interval
$[-u_- t_f, u_-t_f]$ at a final simulation time of $t = t_f =
1000$. The interval is chosen so that $k_s \in (\pi/3,2\pi/3)$, i.e.,
the truncation parameter $k_0 = \pi/3$ and we avoid the degenerate
stationary points $k_s = 0, \pi$.  We observe that the linear profile
\eqref{eq:nondeg_stat_phase} is in good agreement with the numerical
simulation. However, for larger initial jumps, the linear wave begins
to deviate from the simulation.  This may be attributed to the
emergence of stronger nonlinear effects not captured by the leading
order asymptotics which require a larger truncation parameter $k_0$.

\begin{figure}
    \centering
    \includegraphics{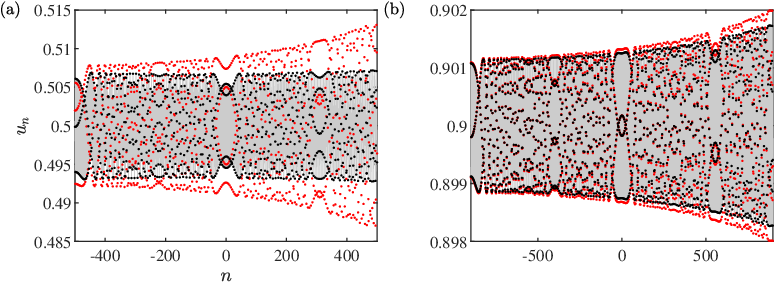}
    \caption{Comparison of the linear wave from the stationary phase
      analysis \eqref{eq:nondeg_stat_phase} (red dots) and numerical
      simulation (black dots) resulting in a rarefaction wave for (a)
      $u_- = 0.5$ and (b) $u_- = 0.9$.}
    \label{fig:linear_wave_comparison_RW}
\end{figure}
To investigate the leftmost edge of the linear wave emitted from the
RW, we modify our previous analysis and expand the phase in the
integral \eqref{eq:stat_phase} about the inflection point
$k = \pi$ of the linear dispersion relation
\begin{align}\label{eq:phase_pi}
\theta(k) \sim \pi n/t - (n/t + 2u_-)(\pi-k) + u_-(\pi-k)^3/3 +
  \cdots
\end{align}
The expansion \eqref{eq:phase_pi} is inserted into the
integral \eqref{eq:stat_phase}. A calculation reveals that the leading
order asymptotics in the vicinity of the ray $n/t = -2u_-$ are given
by
\begin{align}\label{eq:airy}
  v_n(t) \sim - \frac{u_+-u_-}{2(t u_-)^{1/3}}\cos(\pi n)
  {\rm Ai}\left( - (n + 2 t u_-) (t u_-)^{-1/3}\right),
\end{align}
where ${\rm Ai}(\cdot)$ is the Airy function
${\rm Ai}(z) = \frac{1}{2\pi} \int_\mathbb{R} e^{i \kappa z + i
  \kappa^3/3} \ {\rm d} \kappa .$ The Airy profile \eqref{eq:airy}
favorably compares with the two Riemann problem simulations depicted
in Fig.~\ref{fig:enter-label}, even for large $u_+ - u_-$.

\begin{figure}
    \centering
    \includegraphics{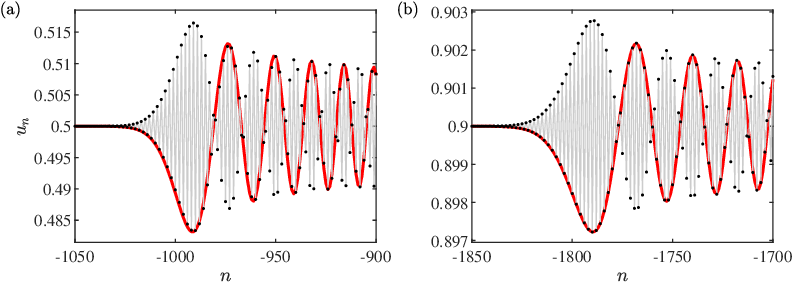}
    \caption{Comparison of the Airy profile \eqref{eq:airy} (red
      curve) with numerical simulations (black dots) resulting in a
      rarefaction wave with initial data $u_+ = 1$ and (a) $u_- = 0.5$
      and (b) $u_- = 0.9$. }
    \label{fig:enter-label}
\end{figure}

It is worth contrasting the observed RW dynamics with those of the
quasi-continuum approximation in the BBM equation \eqref{eq:BBM} that
was studied in \cite{congy_dispersive_2021}.  Qualitatively, the
dynamics exhibited by the two models in overlapping regimes of the
$(u_-,u_+)$ plane of Riemann data are very similar.  Both equations
exhibit large-scale dynamics that are well-approximated by the
self-similar solution \eqref{eq:27} and its analogue for the BBM
equation.  The details of the short-scale, emitted dispersive
wavetrains are quantitatively different but, since both equations
admit non-convex linear dispersion relations, they both exhibit Airy
profiles with amplitude decay proportional to $t^{-1/3}$.

The long-time dynamics produced by the lattice model \eqref{eq:3}
significantly differs from those generated by its quasi-continuum BBM
counterpart \eqref{eq:BBM} when either equation is strongly influenced by
small scale effects.  The actual Riemann problems for the BBM equation
studied in \cite{congy_dispersive_2021} were $\mathrm{tanh}$-smoothed,
monotone transitions between $u_-$ and $u_+$, 
a feature which introduces an
external length scale characterizing the width of the initial
transition.  When this width is larger than the $\mathcal{O}(1)$
oscillatory length scale (or $\mathcal{O}(\epsilon)$ in
Eq.~\eqref{eq:BBM}), the BBM equation exhibits a RW for all
$|u_-| < u_+$.  As shown in Fig.~\ref{fig:example_sols}, RW generation
on the lattice is limited to the region $0.18 u_- < u_+ < u_-$,
$u_- > 0$, with short-scale oscillatory dynamics occurring when
$-u_- < u_+ < 0.18 u_-$, $u_- > 0$.  When the BBM initial transition
width is sufficiently small, RW generation can be accompanied by the
spontaneous generation of solitary waves and/or an expansion shock,
features not observed in the lattice model.

\subsection{Dispersive shock waves (DSWs)}

For $u_- = 1$ and 
$u_+\in(0,1)$
the data is compressive and results
in an expanding, modulated oscillatory wave train between the states
$u_-$ and $u_+$. This structure is called a dispersive shock wave; see
the panel labeled DSW in Fig.~\ref{fig:example_sols}. In
\cite{CHONG2022133533}, DSWs were studied in Eq.~\eqref{eq:disc_mod}
with $\Phi(u) = u^2/2 + u^4/4$ and $u_- = 1$, $u_+ =0$ using numerical
simulations and a dimension reduction approach. In the following, we
study DSWs in the system \eqref{eq:3} as the step value $u_+$ varies
using two semi-analytical approaches, DSW fitting and a continuum
model.  

\subsubsection{Approximation of the DSW harmonic and soliton edge
  speeds via DSW fitting}

In this section, we outline the method for fitting the macroscopic DSW
properties (edge speeds, amplitudes, and wavenumbers) by applying the
fitting method first introduced by El \cite{El2005}; see also
\cite{Mark2016}.  This method was originally developed and justified for continuum PDEs where it has been extensively applied.  Since it only requires knowledge of the linear dispersion relation and the solitary wave amplitude-speed relation, it is straightforward to extend the method to the semi-discrete lattice equation \eqref{deq}.

An example DSW from a numerical simulation of the
Riemann problem with $u_+ = 0.5$ and $u_-=1$ is shown in
Fig.~\ref{fig:DSW_speeds}.  The DSW is comprised of a modulated,
nonlinear wavetrain that terminates in two distinct limits: vanishing
amplitude (called the harmonic edge) and vanishing wavenumber (called
the soliton edge).  The modulation solution describing the DSW is the
rarefaction solution \eqref{eq:17} of the Whitham modulation equations
\eqref{eq:cont_mod} with $\mathbf{q}_- = [u_-,0,k_-]^T$,
$\mathbf{q}_+ = [u_+,a_+,0]^T$.  The harmonic edge wavenumber $k_-$
and the soliton edge amplitude $a_+$, as well as their corresponding
edge speeds $\xi_-$ and $\xi_+$ are determined by integrating the ODE
\eqref{eq:17}, thus relating these macroscopic DSW properties to the
initial data $u_\pm$. In numerical simulations, the amplitude of the
DSW does not vanish exactly at the harmonic edge, so we define the
location of the trailing edge by the intersection of the oscillatory
envelope curves shown in Fig.~\ref{fig:DSW_speeds}.

\begin{figure}[H]
\centering
    \includegraphics{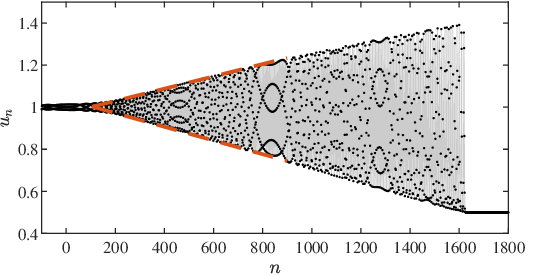}
    \caption{Discrete DSW for $u_- = 1$, $u_+ = 0.5$ at $t = 1000$.  The dashed red
      oscillatory envelope curves intersect at the harmonic edge.
      }
    \label{fig:DSW_speeds}
\end{figure}
Although the DSW modulation is determined, in principle, by the
aforementioned rarefaction solution of the Whitham modulation
equations (assuming strict hyperbolicity and genuine nonlinearity of
the second characteristic field), we do not have explicit 
expressions relating
the integrals in \eqref{eq:cont_mod} to the periodic orbit's
parameters $\mathbf{q} = [\bar{u},a,k]^T$.  An alternative technique
that allows one to obtain the DSW's edge properties is the DSW fitting
method \cite{Mark2016}.  This method assumes the existence of the
rarefaction solution. For the sake of completeness, we will carry out
the DSW fitting procedure for a generic potential $\Phi(u)$. The
wavenumber of the DSW at the harmonic edge can be determined by solving
the initial value problem
\begin{equation}
  \label{eq:24}
  \frac{\rmd k}{\rmd \bar{u}} = \frac{\partial_{\bar{u}}
    \omega_0(k,\bar{u})}{\Phi''(\bar{u}) -
    \partial_k\omega_0(k,\bar{u})} = \frac{\Phi'''(\bar{u})\sin
    k}{\Phi''(\bar{u})(1-\cos k)}, \quad k(u_+) = 0 ,
\end{equation}
where $\omega_0$ is the linear dispersion relation \eqref{eq:18}.
This ODE can be integrated by separation of variables to obtain
\begin{align}
  k(\bar{u}) = \cos^{-1} \left (\frac{ 2
  \Phi''(u_+)-\Phi''(\bar{u})}{\Phi''(\bar{u})} \right ).
\end{align}
The wavenumber at the DSW harmonic edge is $k_- = k(u_-)$. The
velocity of the harmonic edge is given by evaluating the linear group
velocity at $k_-$
\begin{align}
  \label{eq:32}
  c_- = \partial_k\omega_0(k_-,u_-) = 2\Phi''(u_+) - \Phi''(u_-).
\end{align}

The velocity of the DSW at the soliton edge is calculated in a similar
way. We begin by introducing the conjugate variables
$\Tilde{\omega}_0\left(\Tilde{k},\bar{u}\right) = - i \omega_0(i
\Tilde{k},\bar{u}) = \Phi''(\bar{u}) \sinh(\Tilde{k})$, where
$\Tilde{k}$ acts as a soliton amplitude parameter. The velocity of the
DSW soliton edge is deduced by evaluating the solitary wave dispersion
relation $c_+ = \Tilde{\omega}(\Tilde{k}_+,u_+)/\Tilde{k}_+$, where we
find $\tilde{k}_+$ by solving the initial value problem
\begin{align}
  \frac{d \Tilde{k}}{d \bar{u}} =
  \frac{\partial_{\bar{u}} \tilde{\omega}_0}{\Phi''(\bar{u}) -
  \partial_{\Tilde{k}} \Tilde{\omega}_0} = \frac{\Phi'''(\bar{u})
  \sinh \Tilde{k}}{\Phi''(\bar{u})(1-\cosh \Tilde{k})} ,  \qquad
  \Tilde{k}(u_-) = 0.  
\end{align}
Integration results in
\begin{equation}
  \label{eq:25}
  \tilde{k}(\bar{u}) = \cosh^{-1} \left ( \frac{2 \Phi''(u_-) -
      \Phi''(\bar{u})}{\Phi''(\bar{u})} \right ) ,
\end{equation}
and the soliton edge conjugate wavenumber
$\tilde{k}_+ = \tilde{k}(u_+)$.  The soliton edge velocity of the
discrete DSW is given by
\begin{align} \label{eq:cplus}
  c_+ =  \frac{\tilde{\omega}(\tilde{k}_+,u_+)}{\tilde{k}_+} =
  \frac{2}{\tilde{k}_+}\sqrt{\Phi ''({u}_-)(\Phi''(u_-)-\Phi''(u_+))} 
\end{align}
Comparisons with numerical simulations of the Riemann problem are
given in Fig. \ref{fig:DSW_fitting_quadratic} with the potential
$\Phi(u) = u^3/3$ and the initial data normalized so that $u_- =
1$. 
To estimate the velocity of the leading edge, we track the position of
the right-most lattice site that is above the far-field initial data
$u_+$ at the integer valued times in our numerical simulation. This
time series data is fit with a line whose slope is the approximate
velocity of the soliton edge of the DSW. To estimate the harmonic edge
velocity, we produce a linear fit of the modulated wavetrain amplitude
near the location of the harmonic edge, which is found by extracting
peaks of the solution at output times. The intersection of this linear
approximation with the constant level $u_-$ is the approximate
location of the DSW harmonic edge. The time interval of our numerical
computations varies depending on $u_+$. For instance, when we take
$u_+ = 0.1$, we approximate the solution at $t \approx 1000$, while
taking $u_+ = 0.9$ requires the longer time $t \approx 4000$ for the
solution to asymptotically develop. Upon varying $u_+$, we observe
good agreement between the predictions of DSW fitting for the DSW
harmonic and soliton edge velocities and numerical simulations for
$u_+ \gtrsim 0.5$, while the predictions begin to deviate from what is
observed in numerical simulations below this threshold.  The DSW
fitting method is subject to the convexity conditions
\begin{equation}
  \label{eq:26}
  \frac{\partial c_\pm}{\partial u_-} \ne 0, \quad \frac{\partial
    c_\pm}{\partial u_+} \ne 0 .
\end{equation}
A direct calculation for the potential $\Phi(\bar{u}) = \bar{u}^3/3$
demonstrates that these convexity conditions are indeed satisfied.
The numerical results suggest that the DSW fitting method provides an
adequate approximate prediction of the discrete DSW edge properties.
\begin{figure}[H]
    \centering
    \includegraphics{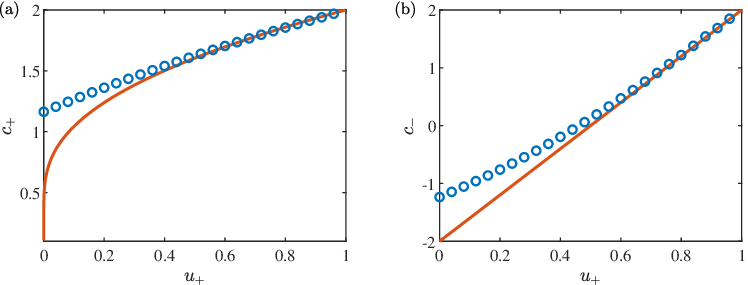}
    \caption{Comparison of DSW fitting predictions (solid, red line)
      and numerical simulations of DSWs with $u_- = 1$ (blue
      circles). The leading edge, solitary wave velocity $c_+$ is
      shown in panel (a) and the trailing, linear wave edge velocity
      comparisons are shown in panel (b).}
    \label{fig:DSW_fitting_quadratic}
\end{figure}

The harmonic edge of the DSW is accompanied by linear radiation much
like the left edge of the RW in the previous section. To describe
this, we apply the same approach as in Sec. \ref{sec:RW} to
approximating the small amplitude linear waves that emanate from the
DSW's harmonic edge. The only change is that, for the DSW,
$u_+ - u_- < 0$ in Eq.~\eqref{eq:36}. A comparison is shown in
Fig. \ref{fig:DSW_radiation}.

\begin{figure}
    \centering
    \includegraphics{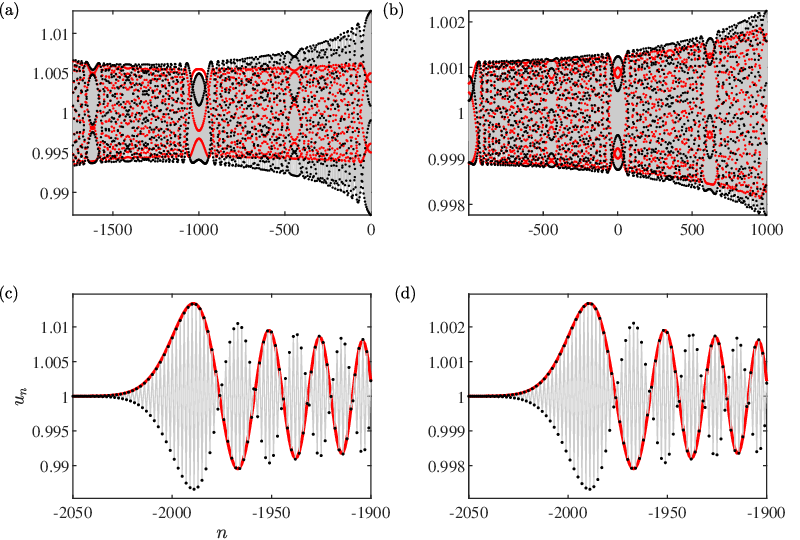}
    \caption{Comparison of stationary phase analysis
      \eqref{eq:nondeg_stat_phase} and the Airy profile
      \eqref{eq:airy} (both in red) with numerical simulations (black
      dots) resulting in a DSW at $t = 1000$ with initial data
      $u_- = 1$ and (a,c) $u_+ = 0.5$ and (b,d) $u_+ = 0.9$. The
      band-limited interval of the stationary points in the formula
      \eqref{eq:nondeg_stat_phase} are (a)
      $k_{\rm s} \in (2\pi/3,5\pi/6)$ and (b)
      $k_{rm s} \in (\pi/3,2\pi/3)$. These intervals are chosen to
      avoid the linear waves at the left edge of the DSW.}
    \label{fig:DSW_radiation}
\end{figure}

\subsubsection{Approximations of the leading edge amplitude via a
  quasi-continuum model} \label{sec:quasi}

In this subsection, we go a bit deeper into the description of the
solitary wave at the soliton edge of the DSW.  Like in the previous
subsection, for $u_+ \in (0,1)$ (where $u_-=1$ is fixed), we
numerically solve the Riemann problem, generate a DSW and extract the
amplitude of the soliton edge, and its speed. This is done by
inspecting the time series of a node sufficiently far from the center
of the lattice (we chose $n=300$, in which case we have observed the leading edge is developed) 
and
simply computing the amplitude as $a_+ = \max{u_n(t)} - u_+$. The
speed is estimated by computing $c_+ = 1/(t_{n+1} - t_n)$ where $t_n$
and $t_{n+1}$ are the time values where $u_n$ and $u_{n+1}$ attain
their maxima.  The blue open circles in
Fig.~\ref{fig:compare_quasi}(a) show the amplitude of the DSW soliton
edge as a function of $u_+$. Since for each value of $u_+$, we compute
both the speed $c_+$ and amplitude $a_+$, we also show a parametric
plot of $(c_+, a_+)$ parameterized by $u_+$ in
Fig.~\ref{fig:compare_quasi}(b).

To confirm that the DSW soliton edge is indeed described by a solitary
wave, we compute a numerical solitary wave solution of the lattice
equation \eqref{deq} by using a fixed-point iteration scheme
\cite{Herrmann_Scalar} to solve the advance delay equation
\begin{equation} \label{adv1}
  2 c \mathcal{V}^{\prime}(\eta)=\Phi^{\prime}(u_+ + \mathcal{V}(\eta+1))-\Phi^{\prime}(u_+ + \mathcal{V}(\eta-1)),
\end{equation}
which is a rescaled copy of \eqref{eqn:TW} and obtained by
substituting
\begin{equation}\label{tw}
u_{n}(t)=u_+ + \mathcal{V}(\eta), \qquad \eta = n-  c t
\end{equation}
into Eq.~\eqref{deq}.  While we are free to select values of $c$ and
$u_+$ when solving Eq.~\eqref{adv1}, we select combinations of them
according to the relationship extracted from the DSW soliton edge
(i.e., the blue circles in Fig.~\ref{fig:compare_quasi}(a)). Upon
convergence of the scheme, we compute the amplitude of the resulting
solitary wave, which is the maximum of the wave minus the background
$u_+$.  The amplitudes of the solitary waves are shown as the solid
red dots in Fig.~\ref{fig:compare_quasi}(a,b). Note that these red
dots fall almost exactly within the blue circles, indicating that the
soliton edge of the DSW is indeed well-approximated by a solitary wave
solution of the lattice equation.

We can obtain an analytical approximation of discrete solitary waves
by considering the BBM quasi-continuum approximation \eqref{eq:BBM}
of the lattice dynamics \eqref{eq:3}.
Entering the moving frame $U(X,T) = \phi(X - c T)$ and integrating
once, this PDE becomes the second order ODE
\begin{equation}
  \label{eq:39}
  c \frac{ \epsilon^2}{6} \phi'' = B +  c \phi -  2\phi^2
\end{equation}
where $B$ is an arbitrary integration constant. This ODE can be solved
using quadrature \cite{Kamchatnov}.  In particular, the solitary wave
with tails decaying to the background state $u_+$ has the form
\begin{equation} \label{qc} u_n(t) = \phi(X - c T) = u_+ +
  \left(\frac{3c}{2} - 3 u_+ \right){\sech}^2\left(\sqrt{\frac{
        \frac{3}{2}c- 3 u_+}{c}} \, (n - c t)\right)
\end{equation} 
which assumes $c > 2u_+$. Note the maximum speed of linear waves on a
background $u_+$ is $2u_+$, implying the solitary waves travel faster
than all linear waves, as expected.  In Eq.~\eqref{qc}, $c$ and $u_+$
can be chosen independently of each other but we once again select
combinations of them according to the relationship extracting from the
DSW soliton edge (the blue circles in
Fig.~\ref{fig:compare_quasi}(a)). The solid blue dots of
Figure~\ref{fig:compare_quasi}(a) show the quasi-continuum prediction
of the amplitude $a_+ = \frac{3c_+}{2} - 3 u_+$, where it is seen that
the approximation becomes better as the jump height decreases (i.e.,
as $u_+ \rightarrow 1$).  The quasi-continuum prediction of the
amplitude is only semi-analytical as it relies on the numerically
obtained relationship of $u_+$ and $c_+$ from the DSW soliton edge
data. An analytical prediction can be derived by using the DSW soliton
edge speed in Eq.~\eqref{eq:cplus} of the previous subsection, which
allows us to express the amplitude, $a_+$, of the DSW soliton edge in
terms of just $u_+$ (or $c_+$):
\begin{equation} \label{amplitude_uplus}
  a_+ = \frac{6\sqrt{1-u_+}}{\cosh^{-1}\left(\frac{2-u_+}{u_+} \right)} - 3 u_+
\end{equation}
See the solid red line of Fig.~\ref{fig:compare_quasi}(a,b) for a plot of
this formula.

Finally, the solitary wave profile given by Eq.~\eqref{qc} matches the
numerically computed solitary wave solution of Eq.~\eqref{adv1} quite
well, especially for longer wavelength solutions.  See
Fig.~\ref{fig:compare_quasi}(c) for a comparison of the actual
solitary wave (solid red dots) and quasi-continuum approximation
(solid red line) for two example parameter sets.

 \begin{figure}[H]
\centerline{
  \begin{tabular}{@{}p{0.33\linewidth}@{}p{0.33\linewidth}@{}@{}p{0.33\linewidth}@{}  }
  \rlap{\hspace*{5pt}\raisebox{\dimexpr\ht1-.1\baselineskip}{\bf (a)}}
 \includegraphics[height=4.5cm]{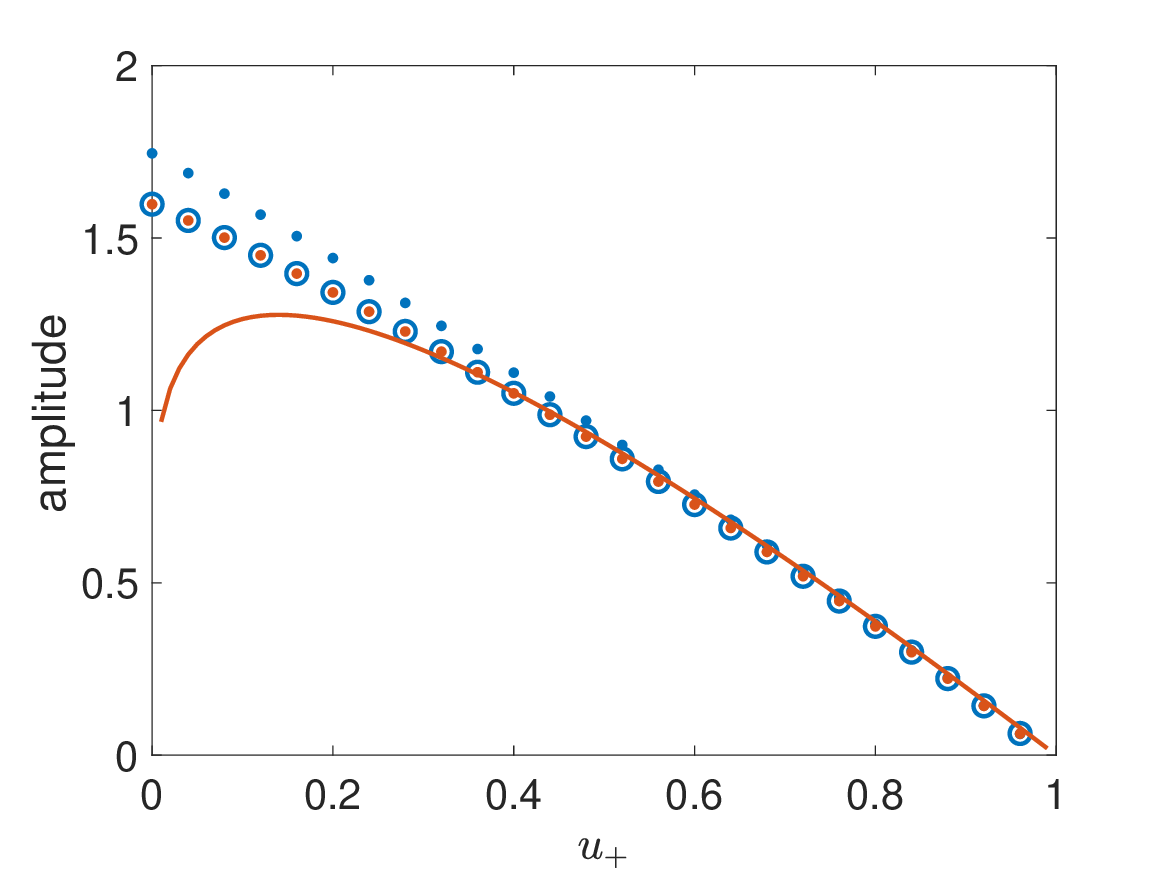} &
  \rlap{\hspace*{5pt}\raisebox{\dimexpr\ht1-.1\baselineskip}{\bf (b)}}
\includegraphics[height=4.5cm]{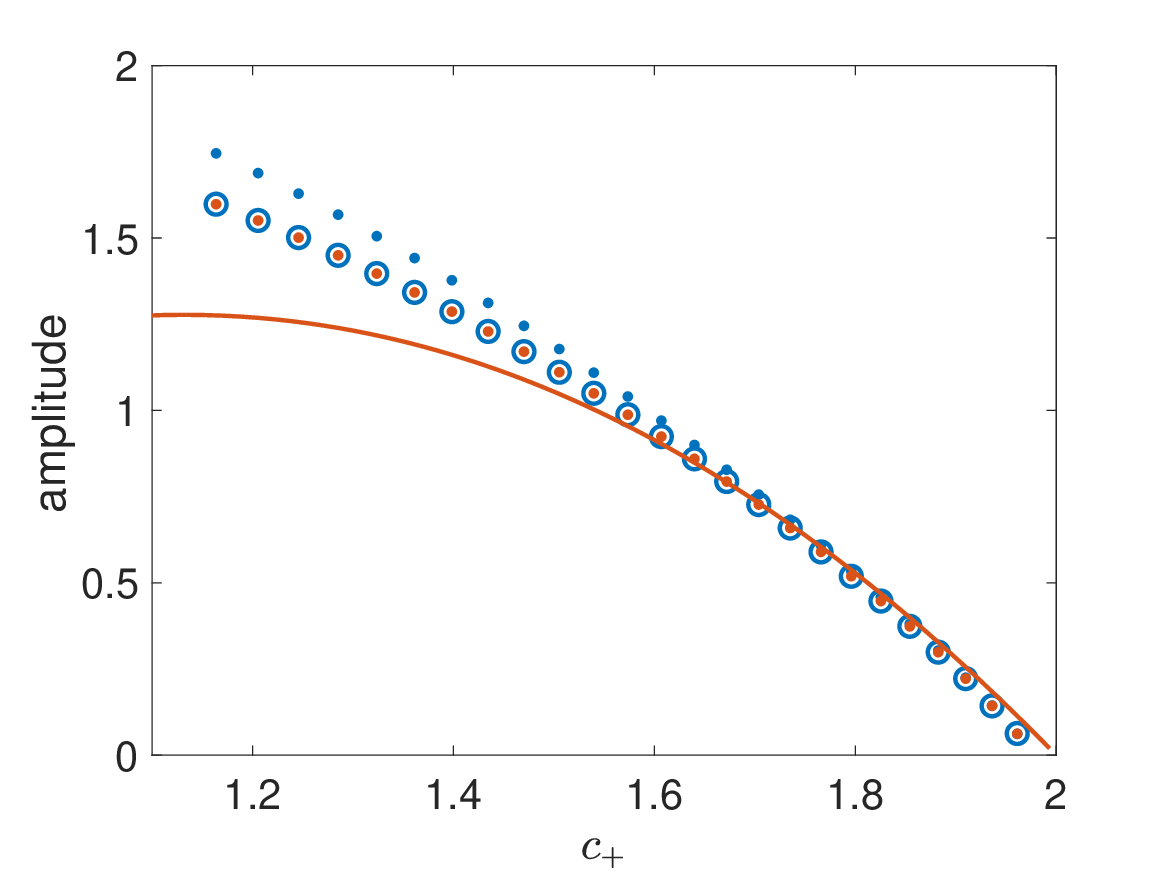} &
  \rlap{\hspace*{5pt}\raisebox{\dimexpr\ht1-.1\baselineskip}{\bf (c)}}
 \includegraphics[height=4.5cm]{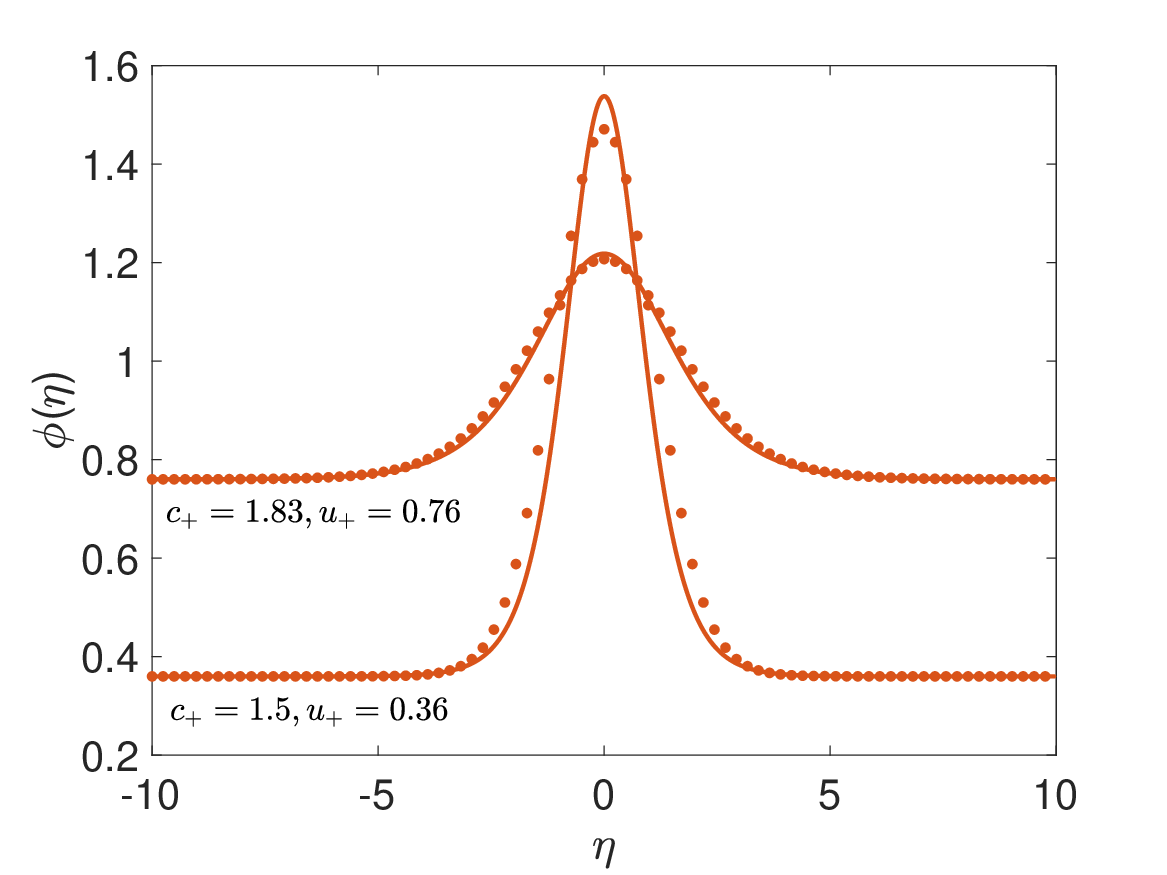}
  \end{tabular}
  }
  \caption{\textbf{(a)} Amplitude of the DSW soliton edge as a
    function of $u_+$ for the numerical simulation (open blue
    circles), the numerical solitary wave amplitude (solid red dots),
    the quasi-continuum formula $\frac{3c_+}{2} - 3 u_+$ (solid blue
    dots, with the $c_+$ obtained numerically), and the analytical prediction given by
    Eq.~\eqref{amplitude_uplus} (red line). \textbf{(b)} Same as panel
    (a), but the amplitude is shown against the wave speed
    $c=c_+$. \textbf{(c)} Comparison of the numerical solitary wave
    (solid red dots) and the quasi-continuum approximation given by
    Eq.~\eqref{qc}.  }
 \label{fig:compare_quasi}
\end{figure}

Because the quasi-continuum approximation of DSWs here and in
\cite{CHONG2022133533} performs well, we briefly contrast the Riemann
problems that result in DSWs for the lattice \eqref{eq:3} and BBM
\eqref{eq:BBM} equations.  The DSW fitting technique was applied to
the BBM Riemann problem in \cite{congy_dispersive_2021}.  In order to
compare our results for the lattice with DSWs in the BBM
equation~\eqref{eq:BBM}, we consider the initial transition 
occurring
between $u_- = 1$ and $u_+ = 1-\Delta$ for $0 < \Delta \ll 1$.  At the
DSW harmonic edge, the characteristic wavenumber
$\epsilon^2 K_-^2 = 4\Delta + 26 \Delta^2/9 + \cdots$ and speed
$C_- = 2 - 4 \Delta + 14 \Delta^2/9 + \cdots$ for BBM agree to
$\mathcal{O}(\Delta)$ with the expansions
$k_-^2 = 4 \Delta + \frac{4}{3} \Delta^2 + \cdots$ and
$c_- = 2 - 4 \Delta$ for the lattice.  Similarly, at the DSW soliton
edge, the conjugate wavenumber
$\epsilon^2 \tilde{K}_+^2 = 4 \Delta + 10 \Delta^2/9 + \cdots$ and
speed $C_+ = 2 - 2 \Delta/3 - 2 \Delta^2/27 + \cdots$ agree to
$\mathcal{O}(\Delta)$ with the lattice:
$\tilde{k}_+^2 = 4 \Delta + 8 \Delta^2/3 + \cdots$ and
$c_+ = 2 - 2 \Delta/3 - 8 \Delta^2/45 + \cdots$.  The DSW's soliton
edge amplitude in BBM is $A_+ = 2 \Delta - \Delta^2/9 + \cdots$
whereas the prediction \eqref{amplitude_uplus} for the lattice expands
as $a_+ = 2 \Delta - 4 \Delta^2/15+\cdots$.  Note that to leading
order, these predictions agree with the DSW edge characteristics of
the KdV equation $U_T + 2 UU_X + \tfrac13 \epsilon^2 U_{XXX} = 0$ for
$X = \epsilon n$, $T = \epsilon t$, $U(X,T) = u_n(t)$.  This is
expected because BBM and KdV are asymptotically equivalent to leading
order in the weakly nonlinear, long wavelength regime.

In summary, the quasi-continuum approximation of lattice DSWs by DSWs
in the BBM equation performs well for small initial jumps
$0 < \Delta \ll 1$ when the oscillation wavelengths are much larger
than the lattice spacing.  The agreement to $\mathcal{O}(\Delta)$ in
the DSW properties is expected and, in fact, is a statement of
universality of the KdV equation 
as a weakly nonlinear, long wavelength model of dispersive
hydrodynamics \cite{Mark2016}.
For sufficiently large $\Delta$, the
BBM DSW develops two-phase modulations near the trailing edge
\cite{congy_dispersive_2021}.  In contrast, for $\Delta > 1$, the
lattice DSW bifurcates into a partial DSW connected to a traveling
wave called a TDSW or exhibits blow up that we will describe in
Secs.~\ref{sec:pDSW} and \ref{sec:blow-up}, respectively.

\subsection{Dispersive shock wave  + stationary shock + rarefaction (DSW+SS+RW) }
\label{sec:dsw-ss-rw}

In this section we investigate the case where the initial step
generates two unsteady waves: a leftward moving DSW and a right moving
RW.  At the origin, there is a stationary shock (SS) joining symmetric
states at the level $u_n = \pm u_0$. A numerically computed example is
depicted in Fig.~\ref{fig:DSW_RW}. This class of solution is
empirically found for $u_+= 1$ and $u_-\in(-1,-0.26)$.  As will be
shown, the bifurcation at $u_- = -0.26$ occurs when the DSW's harmonic
edge exhibits zero velocity.

\begin{figure}[h!]
    \centering
    \includegraphics[scale = 0.2]{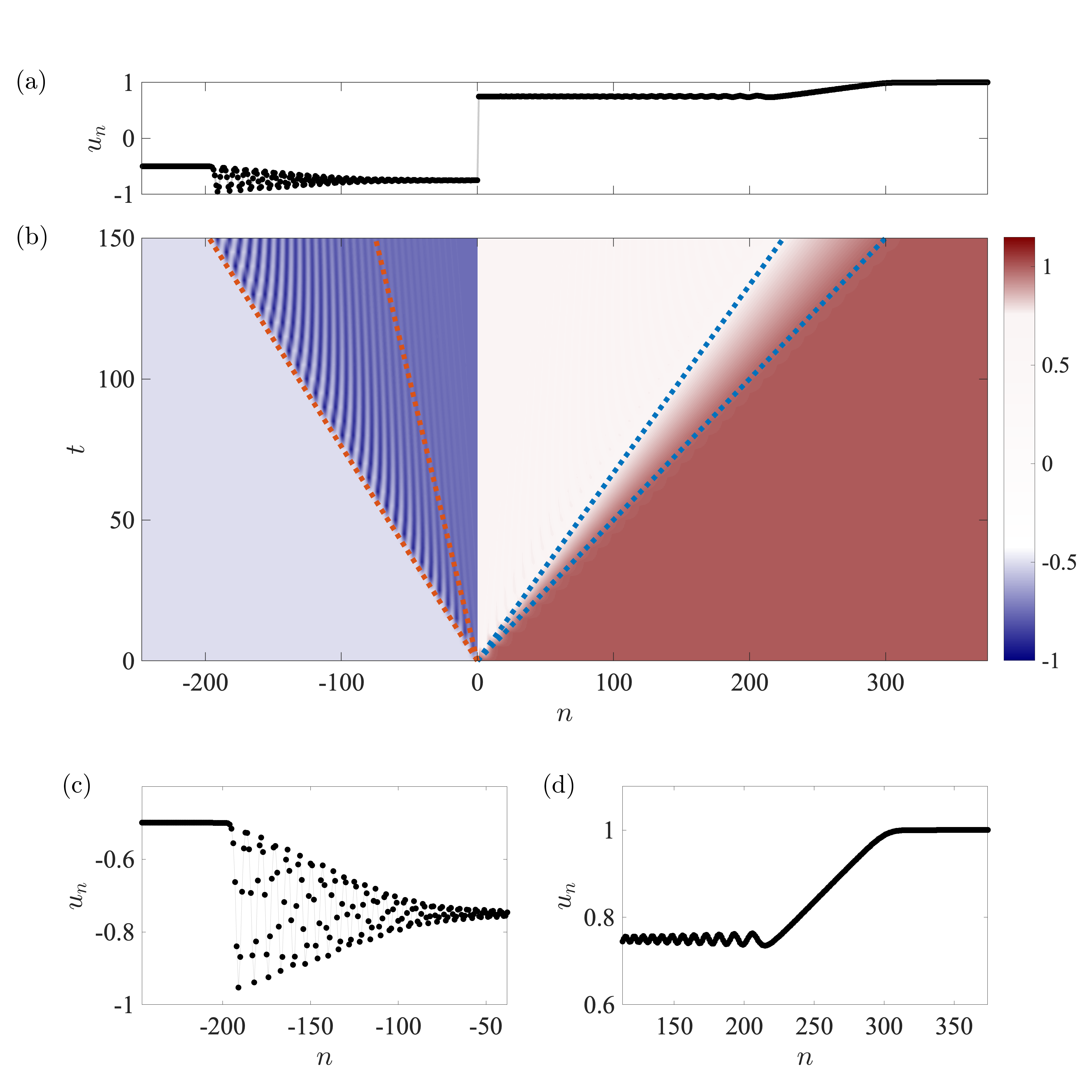} 
    \caption{(a) An example DSW + SS + RW solution of the Riemann
      problem at $t = 500$ for $u_- = -0.5$, $u_+ = 1$. (b) Solution
      contour plot in the space-time plane with the predicted edge velocities of the DSW and RW denoted by dashed lines. (c) Zoom-in of DSW.  (d)
      Zoom-in of RW.}
    \label{fig:DSW_RW}
\end{figure}

The numerical simulation shown in Fig.~\ref{fig:DSW_RW} suggests that
the solution can be approximated for large $t$ as:
\begin{align}
  \label{eq:31}
  u_n(t) = \begin{cases}
             u_- & n \leq s_- t \\ 
             u_{\rm DSW}(n,t) & s_- t < n \leq s_+ t \\ 
             -u_0 & s_+ t < n \leq 0\\
             u_0 & 0  < n \leq 2u_0 t\\
             u_{\rm RW}(n,t) & 2u_0 t < n  \leq 2u_+ t\\
             u_+ & u_+ t < n
           \end{cases} .
\end{align}
The velocities $s_- < s_+$ give the motion of the DSW's soliton and
harmonic edges, respectively.  Across all of the simulations
performed, we found the following relation for the intermediate,
symmetric states $\pm u_0$ to hold to very high precision
\begin{equation}
  \label{eq:30}
  u_0 = \frac{u_+ - u_-}{2} .
\end{equation}
This relation implies that the DSW and RW have the same jump height,
albeit with opposite polarities.  We have been unable to
mathematically justify this formula.  However, as we show in section
\ref{sec:non-uniq-riem}, the value of the intermediate state $u_0$
depends strongly on particular details of the initial data.  If the value
of $u_0(0)$ is changed, then the intermediate value $u_0$ differs from
\eqref{eq:30}.

Utilizing the formula \eqref{eq:30}, we can completely determine the
velocities that divide the approximate solution \eqref{eq:31} into
different wave patterns.  To determine the DSW edge velocities, we use
Eqs.~\eqref{eq:32} and \eqref{eq:cplus}, which were derived under the
assumption that $u_- > u_+ > 0$.  In the case of the solution
\eqref{eq:31}, the left ($u_-$) and right ($-u_0$) states are both
negative.  Since the governing equation \eqref{eq:3} is invariant
under the transformation $u_n(t) \to -u_{-n}(t)$, the DSW velocities
are mapped as follows
\begin{equation}
  \label{eq:33}
  s_-(u_-,-u_0) = -c_+(u_0,-u_-), \quad s_+(u_-,-u_0) = -c_-(u_0,-u_-) .
\end{equation}
Then, using \eqref{eq:30}, we find
\begin{equation}
  \label{eq:34}
  s_- = - \frac{2}{\mathrm{cosh}^{-1}(u_+/|u_-|)} \sqrt{u_+^2 - u_-^2}
  , \quad s_+ = 3 u_- + u_+ .
\end{equation}
Figure \ref{fig:DSW_RW}(b) shows good agreement between the
  predicted velocities of the approximate solution \eqref{eq:31} and a
  numerical simulation when $u_+ = 1$, $u_- = -0.5$.

The DSW remains detached from the stationary shock (SS) so long as the
harmonic edge velocity $s_+$ remains negative.  From
Eq.~\eqref{eq:34}, we predict that the DSW is no longer detached from
the SS when $3 u_- + u_+ = 0$, which, for $u_+ = 1$, occurs when
$u_- = -\tfrac13$.  As noted earlier, the bifurcation from the DSW +
SS + RW to the US case is empirically identified as occurring when
$u_- = -0.26$.  As shown in Fig.~\ref{fig:DSW_RW}(b), this small
discrepancy in the bifurcation value can be explained by the deviation
of the computed DSW harmonic edge velocity from the DSW fitting
prediction \eqref{eq:32}.

\subsection{Traveling dispersive shock wave (TDSW)}
\label{sec:pDSW}

In this section, we consider the case where a partial DSW connects the
level behind, $u_-$, to a periodic-to-equilibrium traveling wave
solution to the level ahead $u_+$.  Although we do not directly
compute it as a traveling wave solution of the discrete equation, it
is interpreted as a heteroclinic connection between a periodic orbit
with the constant level ahead $u_+$ based on an analysis of the
numerical simulations of the Riemann problem.  Such heteroclinic
solutions of continuum equations with higher (fifth) order dispersion
were studied in \cite{Sprenger_2020,sprenger_traveling_2023} and were
associated with so-called traveling dispersive shock waves (TDSWs)
that emerge from an associated Riemann problem.

For $u_- = 1$ and $u_+ \in (-0.724,0)$, numerical simulations show a
qualitatively similar solution pattern to that depicted in
Fig.~\ref{fig:pDSW_TW} in which $u_n(t)$ begins on the left with the
value $u_-$.  It then progresses into an oscillatory wavetrain with
increasing amplitude that resembles the leftmost portion of a DSW,
called a partial DSW, that is then connected to a periodic traveling
wave.  The periodic traveling wave is connected to the constant state
ahead $u_+$ via an abrupt transition that moves with the same speed.
Collectively, this partial DSW and traveling wave is referred
to as the
traveling DSW.

\begin{figure}[H]
    \centering
    \includegraphics{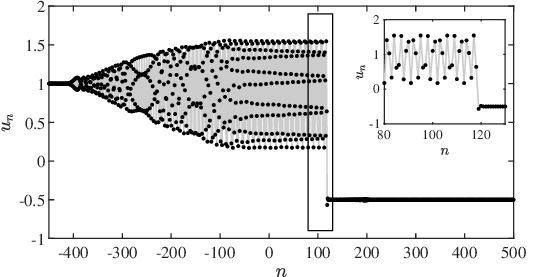}
    \caption{Example TDSW that emerges from the initial data
      \eqref{step} with $u_- = 1$ and $u_+ = -0.5$ at $t = 200$. The
      inset is a zoom-in of the boxed region, showing details of the
      periodic-to-equilibrium traveling wave at the leading edge. }
    \label{fig:pDSW_TW}
\end{figure}

The TDSW solution studied here is the discrete analogue of the TDSW
studied for the KdV5 equation \cite{Sprenger_2020}. The terminology
\textit{traveling dispersive shock wave} unites the unsteady component of the partial DSW and the steady traveling wave component (a heteroclinic periodic-to-equilibrium solution) to which it is attached in this non-classical DSW.  Consequently, the entire TDSW structure is unsteady.

\subsubsection{Approximation of the traveling wave via the quasi-continuum model}

One can describe the periodic portion of the TW (see e.g. the interval
$n\in[-90,100])$ of Fig.~\ref{fig:pDSW_TW}) using the continuum
reduction presented in section~\ref{sec:quasi}. In particular, there is a
three-parameter family of traveling wave solutions of the
quasi-continuum BBM Eq.~\eqref{eq:BBM}, given by
\begin{equation} \label{eq:quasi-TW}
    u_n(t) = r_1 + r_2 - r_3 + 2(r_3 - r_1) \dn^2\left( \sqrt{\frac{2(r_3 - r_1)}{c}}( n - c t),m\right) ,\quad m = \frac{r_2-r_1}{r_3-r_1}, \quad c = \frac{2(r_1 + r_2 + r_3)}{3}.
\end{equation} 
We treat $r_1,r_2,r_3$ as fitting parameters. After a sufficiently
long time, the traveling wave in the numerical simulation forms, as in
Fig.~\ref{fig:pDSW_TW}. We then isolate a small interval of that
traveling wave, which is then fit to
Eq.~\eqref{eq:quasi-TW}. Figure~\ref{fig:compare_qausi_TW}(a) shows a
comparison of the actual lattice dynamics at $t=480$ (blue markers)
and quasi-continuum approximation (blue lines) with the step values
$u_-=1$ and $u_+=-0.16$.
Figure~\ref{fig:compare_qausi_TW}(b) shows the trajectory in the phase
plane $(u_{400}(t),\dot{u}_{400}(t))$ (two outermost lobes) and
$(u_{140}(t),\dot{u}_{140}(t))$ (two innermost lobes) for various
values of $u_-$ for the actual lattice dynamics (markers) and
quasi-continuum approximation (lines). We show the phase plane for
different values of $n$ since the location of the traveling wave
within the lattice is moving.
The agreement is quite good throughout the interval of existence for
these structures, but is best when the jump height is smallest
(compare the blue and red trajectory of
Figure~\ref{fig:compare_qausi_TW}(b)). The comparison of the
frequency, amplitude, and mean parameters is shown in
Figure~\ref{fig:compare_qausi_TW}(c).  These are computed via the
following formulas with $n$ fixed.  For $u_-\in[-0.72,-0.28]$ the
lattice index is fixed to $n=140$, for $u_-\in(-0.28,-0.2]$ the index
is $n=400$ and for $u_-=0.08$ the index is $n=500$.  The frequency is
$f = 1/T$, where $T$ is the period (computed as the peak-to-peak time
of the trajectory); the mean is
$$\overline{u} = \frac{1}{T}\int_{I_T} u_n(t) \, dt$$ where $I_T$ is the time
interval of one oscillation period. For the computations shown here,
it is $I_T = [480-T,480]$. The amplitude is
$$a = \max_{t\in I_T}u_n(t) - \min_{t\in I_T} u_n(t).$$
Once the best-fit values of $r_1,r_2,r_3$ are obtained, the wave
parameters can be computed directly from Eq.~\eqref{eq:quasi-TW} as
$$f = \frac{\sqrt{2 c (r_3 - r_1)}}{2K(m)}, \quad \overline{u} = r_1 +
r_2 - r_3 + 2(r_3-r_1) \frac{E(m)}{K(m)}, \quad a=2(r_2 - r_1),$$
where $K(m)$ and $E(m)$ are the complete elliptic integrals of the
first and second kind, respectively. We note that while
Eq.~\eqref{eq:BBM} is able to describe the local periodic traveling
wave dynamics of the TDSW structure, it does not admit solutions
resembling the entire TDSW structure since heteroclinic
periodic-equilibrium solutions do not exist for the planar ODE
\eqref{eq:39}. Such a description may be possible by using the (1,5)
Pad\'e approximant instead of the (1,3) approximant in \eqref{eq:29}
to arrive at the 5th order model
$$U_T + (U^2)_X - \frac{\epsilon^2}{6} U_{XXT} + \frac{7
  \epsilon^4}{360} U_{XXXXT} = 0.$$ Similar models have been shown to
admit such solutions \cite{Sprenger_2020}. While this is an interesting
topic for further study, we will not pursue the identification of such
a heteroclinic orbit further herein.

\begin{figure}
  \centering
  \begin{tabular}{@{}p{0.33\linewidth}@{}p{0.33\linewidth}@{}p{0.33\linewidth}@{} }
     \rlap{\hspace*{5pt}\raisebox{\dimexpr\ht1-.1\baselineskip}{\bf (a)}}
 \includegraphics[height=4.3cm]{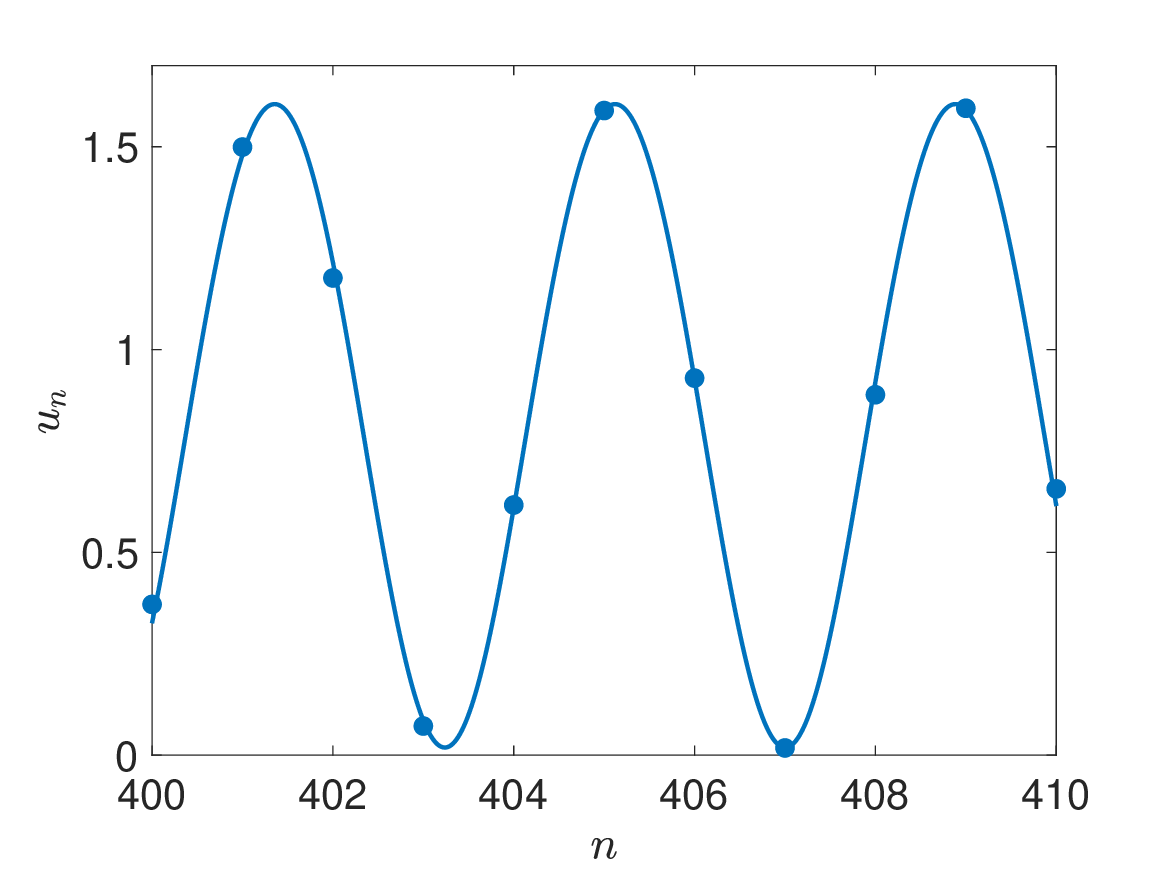} &
  \rlap{\hspace*{5pt}\raisebox{\dimexpr\ht1-.1\baselineskip}{\bf (b)}}
 \includegraphics[height=4.3cm]{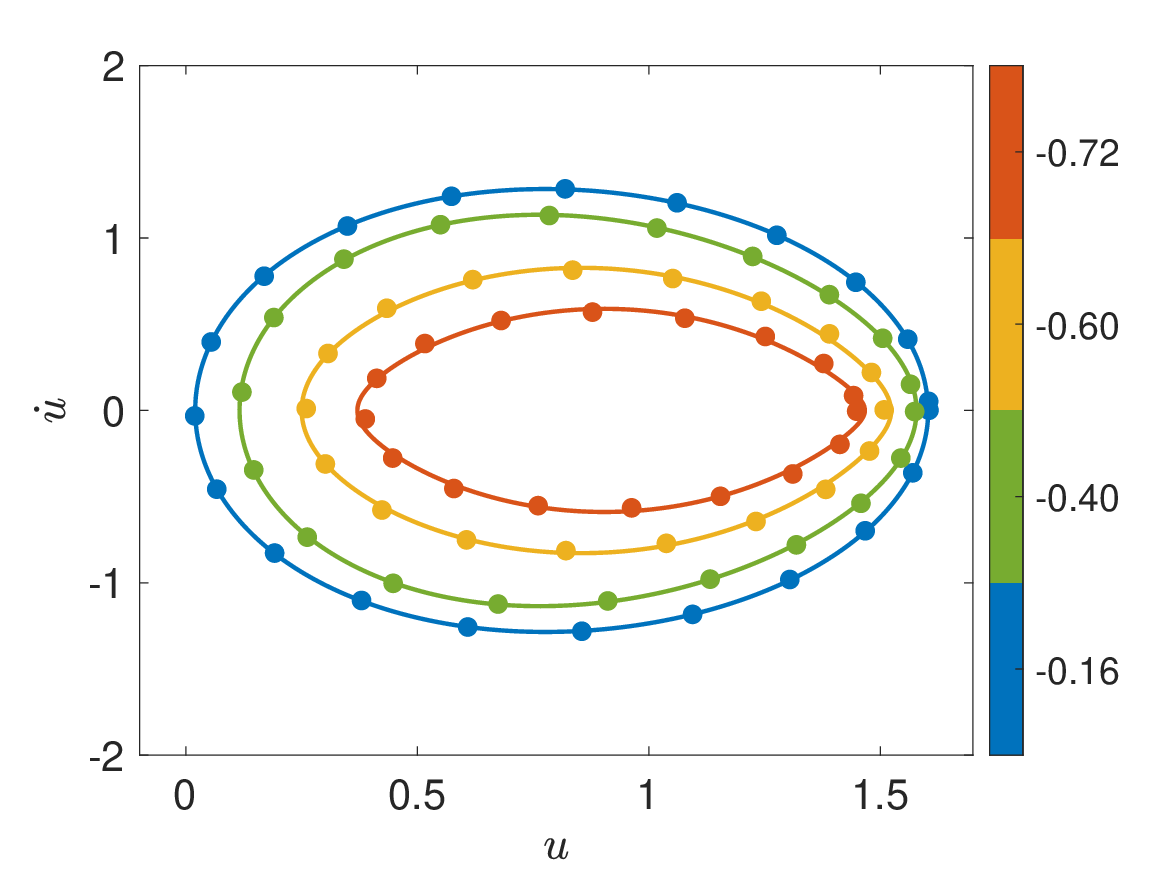} &
   \rlap{\hspace*{5pt}\raisebox{\dimexpr\ht1-.1\baselineskip}{\bf (c)}}
 \includegraphics[height=4.3cm]{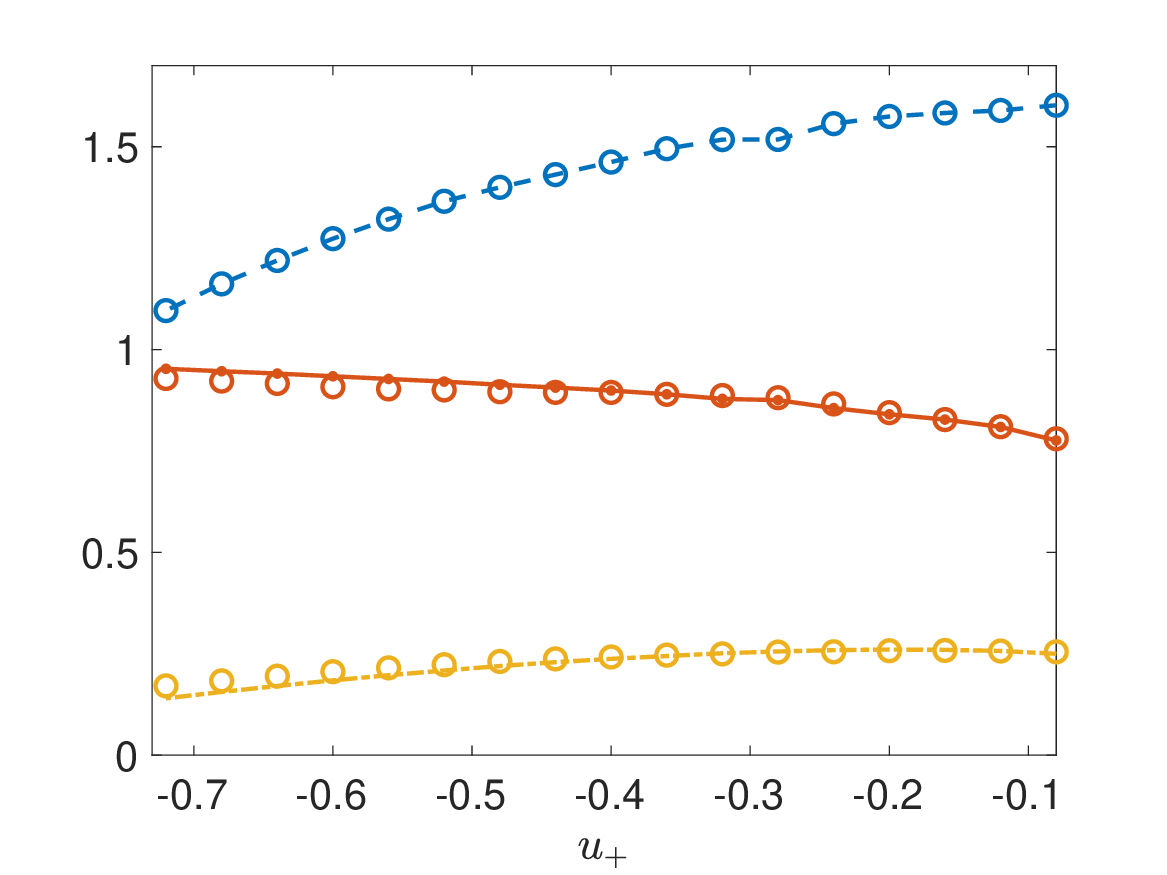} 
  \end{tabular}
  \caption{\textbf{(a)} Zoom of the periodic wave in the gray shaded
    region of Fig.~\ref{fig:pDSW_TW}(a) with $u_+ = -0.16$ (blue
    markers) and quasi-continuum approximation (solid blue
    line). \textbf{(b)} Plot of the phase plane
    $(u_{n}(t),\dot{u}_{n}(t))$ for a time interval such that the
    periodic wave has developed. The color intensity corresponds to
    the value of $u_+$.  Since the interval containing the traveling wave
    changes as $u_+$ is changed, the value of $n$ for each loop is not
    fixed.  In particular $u_+ = -0.16$ (blue, $n=400$), $u_+ = -0.40$
    (green, $n=400$), $u_+ = -0.6$ (yellow, $n=140$), $u_+ = -0.72$
    (red, $n=140$). The solid lines are the corresponding
    quasi-continuum approximations. \textbf{(c)} Plot of the mean (red
    solid lines), amplitude (blue dashed lines) and frequency $1/T$
    (yellow dashed-dot line) as a function of $u_-$. The
    quasi-continuum approximation of these wave parameters is shown
    as open circles.  For $u_+\in[-0.72,-0.28]$ the lattice index is
    fixed to $n=140$ and for $u_+\in(-0.28,-0.2]$ to $n=400$ and for
    $u_+=.08$ to $n=500$.  }
    \label{fig:compare_qausi_TW}
\end{figure}

\subsubsection{Modulation solution of the weakly nonlinear Whitham modulation equations}

In order to obtain the form of an approximate modulation solution
$\mathbf{q} = [\bar{u},a,k]^T$ of the Whitham equations
\eqref{eq:cont_mod} that describes the TDSW, we appeal to the
structure of the TDSW evident in Fig.~\ref{fig:pDSW_TW}. An
oscillatory wavetrain emerges from the left level $u_-$ with
increasing amplitude that saturates at a periodic traveling wave.  The
traveling wave then abruptly transitions to the right level $u_+$.
Guided by previous work on the traveling dispersive shock wave (TDSW)
solutions of a fifth-order Korteweg-de Vries equation
\cite{Sprenger_2020}, we make the self-similar modulation ansatz
($\xi = X/T = n/t$)
\begin{equation}
  \label{eq:40}
  \mathbf{q}(\xi) =
  \begin{cases}
    \mathbf{q}_- & \xi < \lambda_2(\mathbf{q}_-) , \\ 
    \mathbf{q}_{\rm RW}(\xi)
                 & \lambda_2(\mathbf{q}_-)  \leq \xi <
                   \lambda_2(\mathbf{q}_{\rm p}) , \\  
    \mathbf{q}_{\rm p}
                 & \lambda_2(\mathbf{q}_{\rm p}) \leq \xi <
                  \lambda_2(\mathbf{q}_+) , \\ 
    \mathbf{q}_+
                 &  \lambda_{2}(\mathbf{q}_+)  \leq \xi ,
  \end{cases}
\end{equation}
for Eq.~\eqref{eq:14} where $\lambda_2$ is the middle characteristic
velocity.  Additionally, the constant states are
\begin{align}
  \label{eq:41}
  \mathbf{q}_{-} = [u_-,0,k_-]^T, \quad \mathbf{q}_{\rm p} =
  [\overline{u}_{\rm p},a_{\rm p},k_{\rm p}]^T, \quad \mathbf{q}_+ = [u_+,a_+,0]^T,
\end{align}
and $\mathbf{q}_{\rm RW}(\xi)$ is the rarefaction solution (integral
curve) of Eq.~\eqref{eq:17} for the second characteristic field
$(\lambda_2,\mathbf{r}_2)$ that continuously connects the harmonic
edge state $\mathbf{q}_-$ and the periodic traveling wave state
$\mathbf{q}_{\rm p}$ of the TDSW.  The discontinuity from
$\mathbf{q}_{\rm p}$ to $\mathbf{q}_+$ satisfies the jump conditions
\eqref{eq:19} for the Whitham modulation equations where
$V = \lambda_2(\mathbf{q}_+)$ is simultaneously the phase speed of the
periodic traveling wave with parameters $\mathbf{q}_{\rm p}$, the
shock speed, and the phase speed of the solitary wave with parameters
$\mathbf{q}_+$, i.e., it represents the TW component of the TDSW
solution.  The modulation solution \eqref{eq:40} corresponds to a
rarefaction-shock solution of the Whitham modulation equations.

The five parameters $(k_-,\mathbf{q}_{\rm p},a_+)$ in the modulation
solution \eqref{eq:40} could, in principle, be obtained by solving the
full Whitham modulation equations \eqref{eq:cont_mod}, but we lack
explicit periodic traveling wave solutions.  Instead, we approximate
the modulation solution \eqref{eq:40} in the weakly nonlinear regime
by solving Eqs.~\eqref{eq:weakly_nl_mod}.

\begin{figure}[h]
    \centering
    \includegraphics{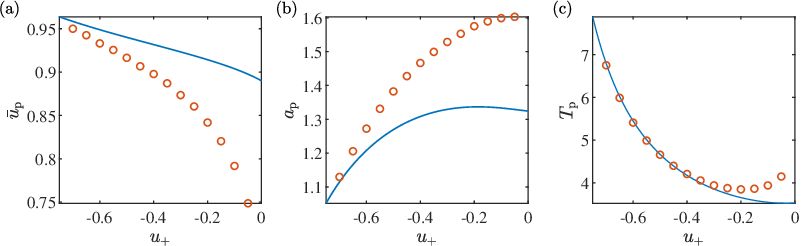}
    \caption{Traveling DSW parameters obtained from the modulation
      solution \eqref{eq:40} (curves) and numerical simulation
      (circles). 
      }
    \label{fig:TDSW-params}
\end{figure}

\begin{figure}[h]
    \centering
    \includegraphics{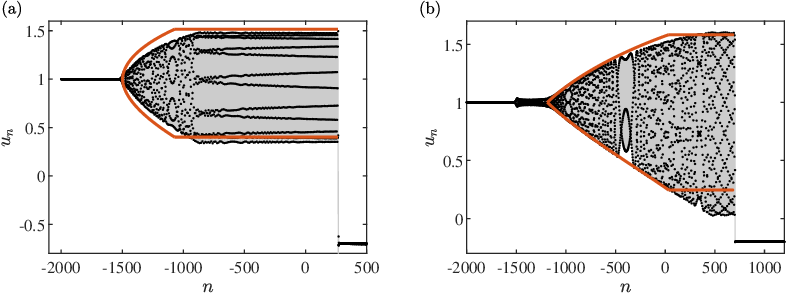}
    \caption{Two solutions at $t = 1000$ along with the envelope
      predictions from the modulation solution \eqref{eq:40}. (a)
      $u_+ = -0.7$ and (b) $u_+ = -0.2$.}
    \label{fig:TDSW-envelope}
\end{figure}

A general feature of Whitham modulation systems in the weakly
nonlinear regime is the $\mathcal{O}(a^2)$ mean induced by the finite
amplitude modulated wavetrain \cite{Whitham74}.  Absent mean changes
due to initial/boundary data, the third order weakly nonlinear
modulation system \eqref{eq:weakly_nl_mod} can be simplified, with the
same order of accuracy, to a second order modulation system.  The
procedure to do so is as follows.  The induced mean is represented by
the ansatz
\begin{equation}
  \label{eq:42}
  \overline{u}(X,T) = u_0 + a(X,T)^2 u_2(k(X,T)) + \cdots,
\end{equation}
where $u_0 \in \R$ is a constant background.  This introduces the
induced mean coefficient $u_2(k)$ that gives rise to an effective
nonlinear frequency shift $\tilde{\omega}_2$ by inserting
\eqref{eq:42} into Eq.~\eqref{eq:stokes_omega} and expanding as
\begin{equation}
  \label{eq:44}
  \begin{split}
    \omega(k,\bar{u}) &= \omega_0(k,u_0) + a^2\left ( \omega_2(k,u_0) +
                        \partial_{\overline{u}} \omega_0(k,u_0) u_2(k) \right ) +
                        o(a^2), \\
                      &\equiv \omega_0(k,u_0) + a^2 \tilde{\omega}_2(k,u_0) +
                        o(a^2) .
  \end{split}
\end{equation}
With this effective nonlinear frequency shift, weakly nonlinear wave
modulations are generically described by the simplified system
\cite{Whitham74}
\begin{subequations}
  \label{eq:2x2}
  \begin{align}
    \label{eq:2x2a}
    \left  (a^2 \right )_t + \left ( \omega_{0,k} a^2 \right )_x &= 0 ,\\
    \label{eq:2x2k}
    k_t +  \left ( \omega_0 \right )_x +  \tilde{\omega}_2 \left (a^2
    \right )_x & = 0 , 
  \end{align}
\end{subequations}
provided
\begin{equation}
  \label{eq:47}
  u_2(k) = -\frac{1}{16 u_0(1-\cos k)}, \quad \tilde{\omega}_2(k) =
  \frac{(\cos{k}-2) \cot(k/2)}{16 u_0} .
\end{equation}
Note that the additional coupling term involving $\tilde{\omega}_2$
and its derivatives contribute at higher order in $a$.  The induced
mean coefficient $u_2(k)$ in Eq.~\eqref{eq:47} is determined by
compatibility of averaged mean, energy conservation laws
\eqref{eq:a_weakly_nl_mod}, \eqref{eq:b_weakly_nl_mod} with the
induced-mean modulation system \eqref{eq:2x2}, which asymptotically
satisfies $(F(k) a^2)_T + (F(k)\omega_{0,k} a^2)_X = 0$ for any
differentiable $F$, in particular $F(k) = u_2(k)$.

Under the assumption of induced mean variation, we can analytically
obtain the rarefaction solution $\mathbf{q}_{\rm RW}(\xi)$ in
\eqref{eq:40} by solving Eq.~\eqref{eq:2x2} for a RW and then
inserting it into Eq.~\eqref{eq:42}.  For this, we express the
induced-mean modulation system in Riemann invariant form
\begin{equation}
  \label{eq:43}
  \frac{\partial r_{\pm}}{\partial t} + \lambda_\pm \frac{\partial
  r_{\pm}}{\partial x} = 0, 
\end{equation}
where 
\begin{equation}
  \label{eq:46}
  \begin{split}
    r_\pm &= a \mp \frac{1}{2}\int \left(
            \frac{\omega_{0,kk}}{\tilde{\omega}_2} \right)^{1/2} dk  \\
          &= a \mp 2\sqrt{2}\, u_0 \cos^{-1}\left(\tfrac13(-1+2\cos
            k)\right), \quad u_0 >
            0, \quad  0 < k <
            \pi, \quad a \ge 0. 
  \end{split}
\end{equation}
The restriction to positive mean and wavenumber is due to the fact
that we have selected the positive square root in \eqref{eq:46}.  The
characteristic velocities are
\begin{equation}
  \label{eq:45}
  \lambda_\pm = 2 u_0 \cos k \pm \frac{a}{2} \cos\left ( \frac{k}{2} \right )
  \sqrt{2 - \cos k},
\end{equation}
so that $\lambda_+ = \lambda_2$ and $\lambda_- = \lambda_1$ in
Eq.~\eqref{eq:48}.

We can now solve for $\mathbf{q}_{\rm RW}(\xi)$ in \eqref{eq:40} by
setting $u_0 = u_-$ and taking the fast RW solution of
Eq.~\eqref{eq:43} that satisfies $r_- = \mathrm{const}$ and $\lambda_+ = \xi$.
The constant slow Riemann invariant $r_-$ implies
\begin{subequations}
  \label{eq:49}
  \begin{equation}
    \label{eq:50}
    \cos^{-1}\left(\tfrac13(-1 + 2\cos k_-)\right) 
    = \frac{a_{\rm p}}{ 4\sqrt{2} u_-} +
    \cos^{-1}\left(\tfrac13(-1 + 2\cos k_{\rm p})\right)  , 
  \end{equation}
  and the assumption of induced mean implies
  \begin{equation}
    \label{eq:51}
    \overline{u}_{\rm p} = u_- + u_2(k_{\rm p}) a_{\rm p}^2 .
  \end{equation}
  The RW profile is obtained by inverting $\lambda_+ = \xi$ for
  $a(\xi)$ and $k(\xi)$ subject to the constraint
  \begin{equation}
    \label{eq:53}
    \cos^{-1}\left(\tfrac13(-1 + 2\cos k_-)\right) 
    = \frac{a(\xi)}{ 4\sqrt{2} u_-} +\cos^{-1}\left(\tfrac13(-1 +
      2\cos k(\xi))\right) .
  \end{equation}
  Equations \eqref{eq:50} and \eqref{eq:51} are two conditions on the
  four unknown solution parameters
  $(k_-,\overline{u}_{\rm p}, a_{\rm p}, k_{\rm p})$.  The other two
  conditions are obtained from the jump conditions \eqref{eq:19}.

  The sharp transition from the periodic traveling wave
  $\mathbf{q}_{\rm p}$ to the solitary wave ahead $\mathbf{q}_+$ is
  achieved by a shock solution of the Whitham modulation equations.
  We obtain the jump conditions from the conservation laws
  \eqref{eq:21} and \eqref{eq:22} by assuming that the periodic
  traveling wave is in the weakly nonlinear regime \eqref{eq:stokes_u}
  and the level ahead is a solitary wave where $k \to 0$, both with
  the same phase speed $V$:
  \begin{align}
    -V\left(\overline{u}_{\rm p} - u_+\right) +  \overline{u}_{\rm p}^2 +
    \frac{1}{8}a^2_{\rm p} - u_+^2
    & = 0 \label{eq:weak_jump1} \\ 
    -V\left(\frac{1}{3}\overline{u}_{\rm p}^3 +
    \frac{1}{8}\overline{u}_{\rm p} 
    a^2_{\rm p} - \frac{1}{3}u_+^3\right) + \frac{1}{2}\overline{u}_{\rm p}^4
    + \frac{1}{4}a_{\rm p}^2\left(\overline{u}_{\rm p}^2\cos(k_{\rm p}) + \frac{1}{2} u_{\rm
    p}^2\right) - \frac{1}{2}u_+^4 
    & = 0 .  \label{eq:weak_jump2}
  \end{align}
  The jump condition from the conservation of waves equation
  \eqref{eq:23} is satisfied because $\omega = k = 0$ for the solitary
  wave and $V = \omega_{\rm p}/k_{\rm p}$ for the periodic traveling
  wave
  \begin{equation}
    \label{eq:52}
    V = \frac{\omega_0(k_{\rm p},u_-) + a_{\rm p}^2 \tilde{\omega}_2(k_{\rm
        p})}{k_{\rm p}} .
  \end{equation}
\end{subequations}

Solving for $a_{\rm p}$ and $\overline{u}_{\rm p}$ from \eqref{eq:50}
and \eqref{eq:51}, then inserting them into \eqref{eq:weak_jump1},
\eqref{eq:weak_jump2} and using the phase velocity \eqref{eq:52}
determines two nonlinear equations for $k_-$ and $k_{\rm p}$.  We
solve these equations numerically using standard root finding methods
to obtain all the parameters of the shock-rarefaction modulation
solution \eqref{eq:40} and compare it with numerical simulation in
Figs.~\ref{fig:TDSW-params} and \ref{fig:TDSW-envelope}.
We observe in the figures that near the onset of the TDSW, the
solution's mean, amplitude and frequency are accurately captured by
the above theory. On the other hand, as the amplitude of the solution
increases, the approximation loses quantitative efficacy.
Nevertheless, the qualitative trend of the solution's properties are
captured by the above analysis.


\begin{figure}
    \centering
    \includegraphics{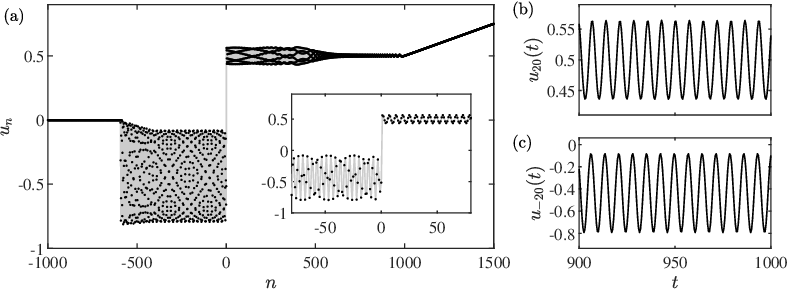}
    
    \caption{(a) Unsteady shock solution arising from step initial
      data \eqref{step} with $u_- = 0$ and $u_+ = 1$. (b,c)
      series data of the evolution of lattice sites $n = +20$ (b)
      and $n = -20$ (c) for $t \in [900,1000]$.
      } 
          \label{fig:unsteady_solution}
\end{figure}

\subsection{Unsteady Shock (US)}
\label{sec:unsteady}

In section \ref{sec:dsw-ss-rw}, we found that for $u_+ = 1$ and
$u_- \in (-1,-0.26)$, a stationary shock separated two counterpropagating
waves, one a DSW, the other a RW.  When $u_- = -0.26$, the DSW no
longer separates from the stationary shock.  Instead, the DSW+SS+RW is
numerically observed to bifurcate into the unsteady generation of
counterpropagating periodic waves that we term an unsteady shock (US)
for $u_- \in (-0.26,0.18)$.  When $u_-$ exceeds $0.18$, the US
bifurcates into a RW, described in section \ref{sec:RW}.
We now investigate the US.

A plot of the solution for $u_- = 0$ 
and $u_+ = 1$ is given in
Figure \ref{fig:unsteady_solution}. Two distinct counterpropagating
periodic waves traveling with speed $c_\pm$ emerge from the origin that then transition to the
constant level $u_-$ behind through a partial DSW and to $u_+$ ahead
via a partial DSW and a RW. 
%
While the velocities $c_\pm$ are distinct, Figure
\ref{fig:unsteady_solution}(b) depicting the time series
$u_{\pm 20}(t)$ indicates that the two periodic waves have
approximately the same temporal frequency, which we confirm below
numerically to high precision.

\begin{figure}
    \centering
    \centering
       \begin{tabular}{@{}p{0.4\linewidth}@{}p{0.4\linewidth}@{}}
     \rlap{\hspace*{5pt}\raisebox{\dimexpr\ht1-.1\baselineskip}{\bf (a)}}
 \includegraphics[height=5cm]{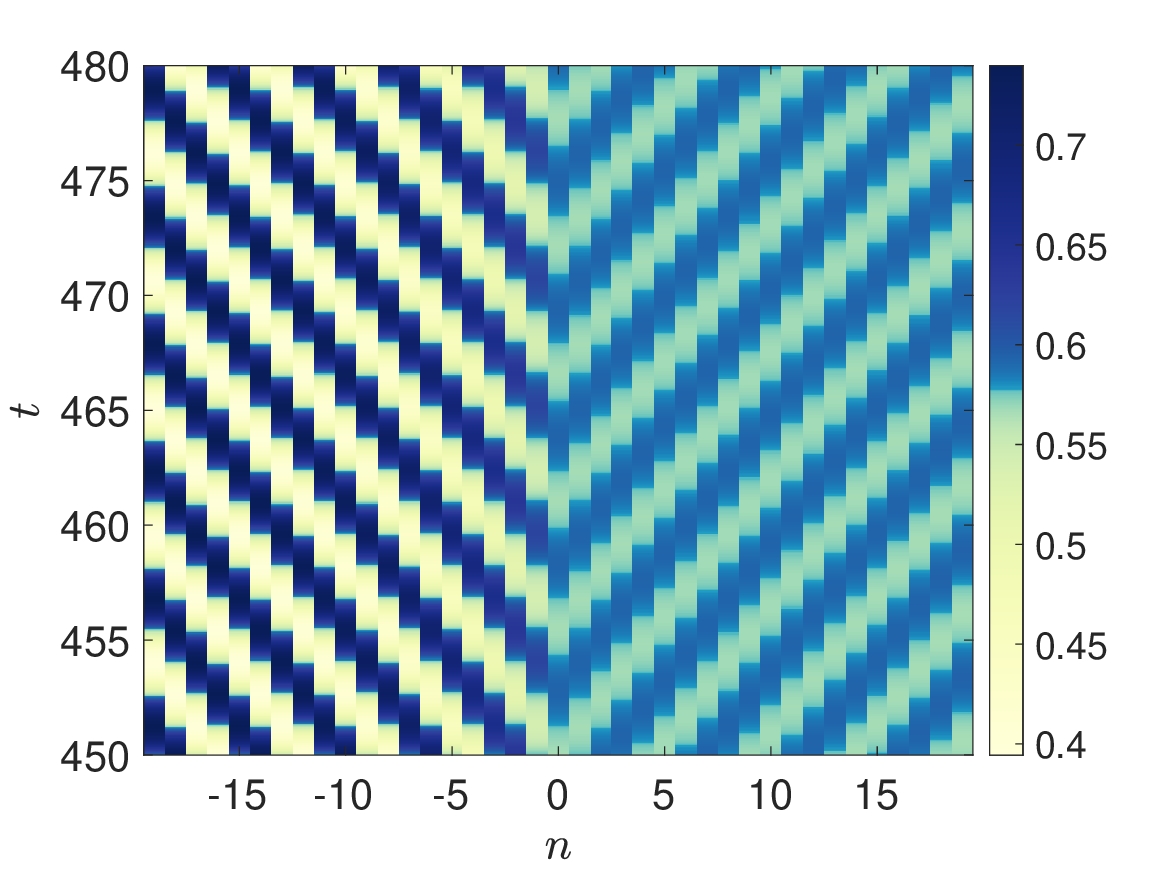} &
  \rlap{\hspace*{5pt}\raisebox{\dimexpr\ht1-.1\baselineskip}{\bf (b)}}
 \includegraphics[height=5cm]{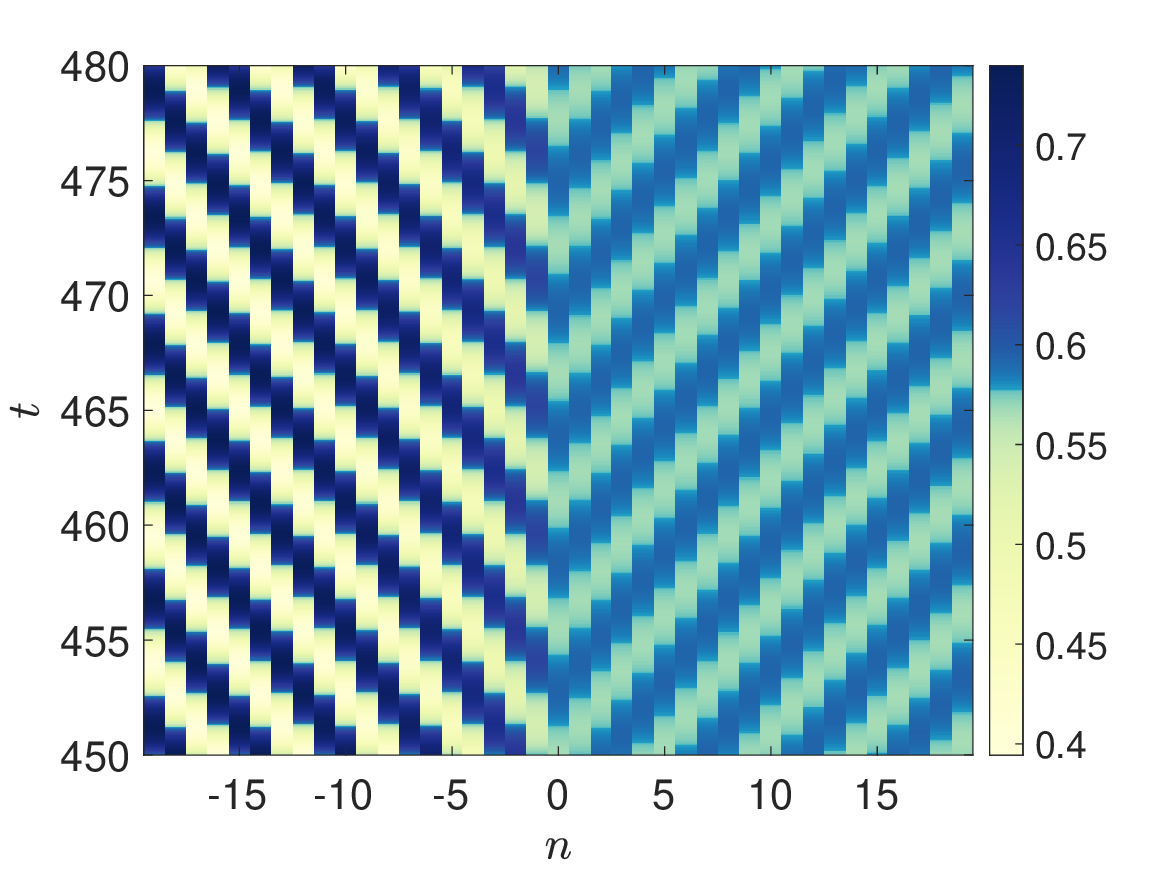} 
  \end{tabular}
         \begin{tabular}{@{}p{0.4\linewidth}@{}p{0.4\linewidth}@{}}
     \rlap{\hspace*{5pt}\raisebox{\dimexpr\ht1-.1\baselineskip}{\bf (c)}}
 \includegraphics[height=5cm]{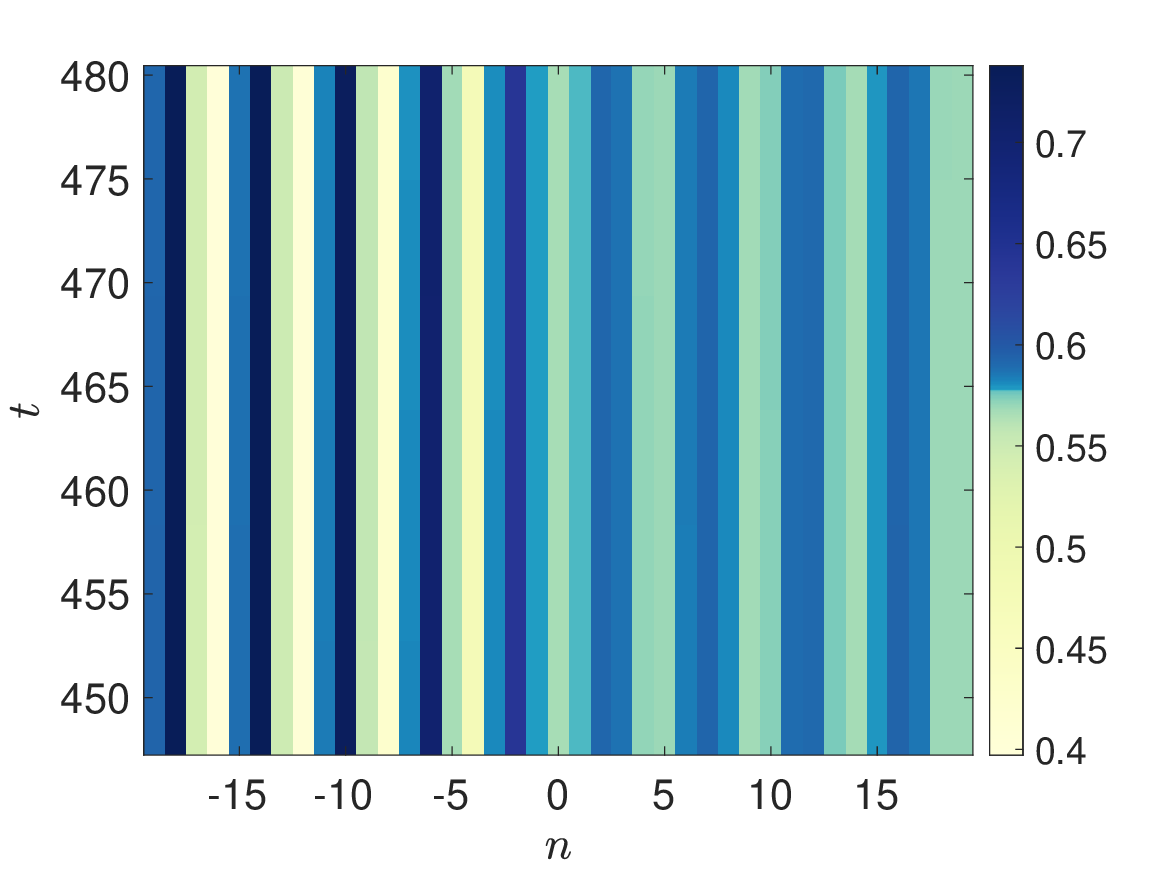} &
  \rlap{\hspace*{5pt}\raisebox{\dimexpr\ht1-.1\baselineskip}{\bf (d)}}
 \includegraphics[height=5cm]{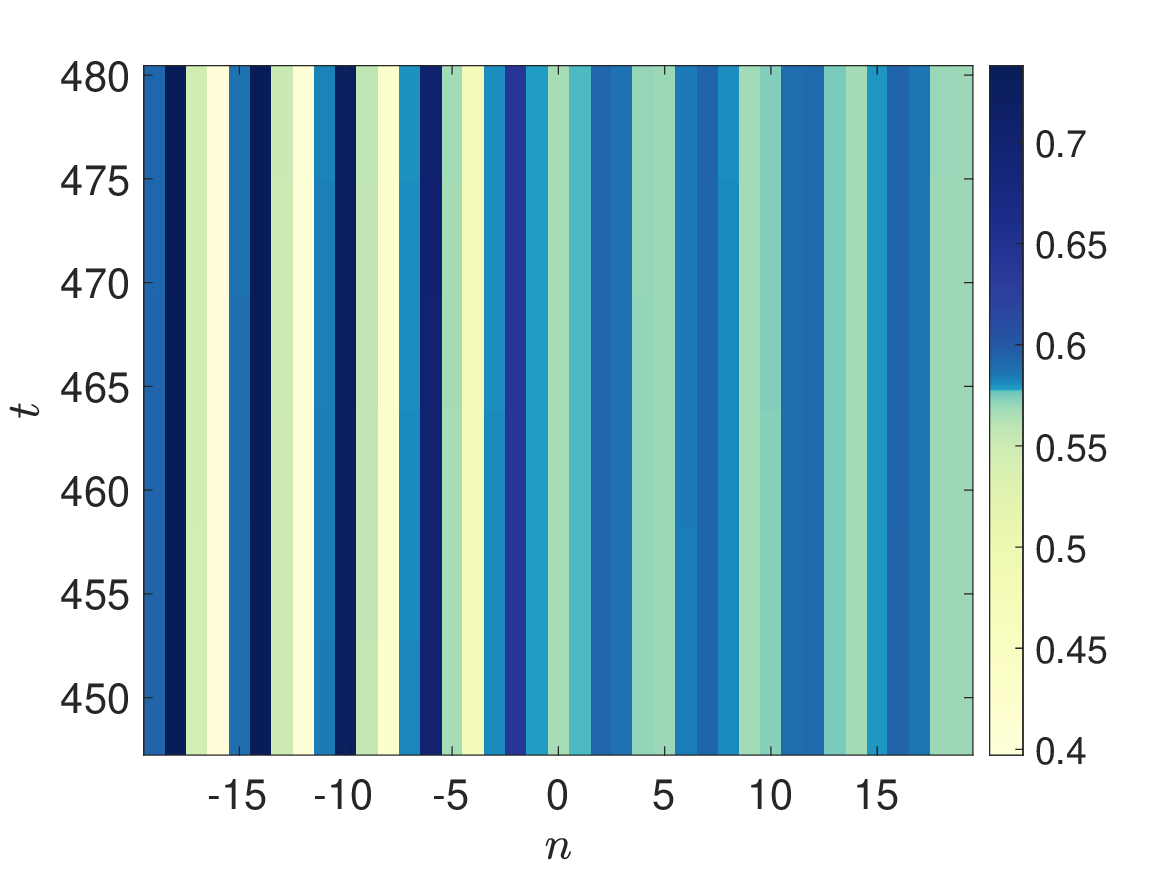} 
  \end{tabular}
  \caption{Intensity plot of the absolute value of the displacement with $u_=-0.16$ and $u_+=1$
    for the full simulation (a) and truncated simulation (b) for
    $n=-20 \ldots 20$ with forced boundaries. Panels (c) and (d) show
    the Poincar\'e map of the solution in (a) and (b), respectively,
    evaluated at every  period.}
    \label{fig:unsteady_solution_density}
\end{figure}

\subsubsection{Poincar\'e description}

A spatio-temporal intensity plot of the US with $u_- = -0.16$ near $n = 0$ is shown
in Fig.~\ref{fig:unsteady_solution_density}(a).  The
counterpropagating traveling waves can clearly be identified. While
the underlying wave parameters of the left- and right-moving waves
will generally be different, they do share the same frequency.  Thus,
the dynamics correspond to a time-periodic solution. Indeed,
inspection of Fig.~\ref{fig:unsteady_solution_density}(c) confirms
this, which shows the same intensity plot as panel (a), but with the
solution sampled every $T$ time units, where $T$ is the period of
oscillation. This is the Poincar\'e map of the dynamics.  With this
sampling size, the solution appears to be constant, suggesting that
the waveform is genuinely time-periodic. To demonstrate this further,
we simulated the equations of motion on a small lattice $n\in[-20,20]$
with boundary conditions given by the periodic solution, i.e., the
left boundary is given by $u_{-20}(t)$ and the right boundary is given
by $u_{20}(t)$. The dynamics upon initialization with the periodic
solution $u_n(450)$ are shown in
Fig.~\ref{fig:unsteady_solution_density}(b), which can be hardly
distinguished from the dynamics in (a).  The evolution remains
periodic, as can be inferred from
Fig.~\ref{fig:unsteady_solution_density}(d), which are the dynamics
sampled every $T$ seconds.  An avenue for potential further study,
prompted by these findings, is the seeking of exact time-periodic
solutions of the model and their corresponding Floquet analysis.

\subsubsection{Approximation of traveling waves via quasi-continuum model}
If considering the left-moving and right-moving waves as separate entities, we can once again apply the quasi-continuum
reduction to describe the traveling wave using formula Eq.~\eqref{eq:quasi-TW}. A comparison of the spatial profile
of the left wave moving wave with $u_+ = -0.16$ at time $t=480$ is shown in Fig.~\ref{fig:compare_qausi_unsteady}(a).
The phase plane for the left-moving waves (left lobes) and right-moving waves (right lobes) is shown in
Fig.~\ref{fig:compare_qausi_unsteady}(b) in markers, with corresponding quasi-continuum approximations shown
as solid lines. Like before, the reduction is best for smaller step heights (compare the blue and red 
orbits in panel (b)). A comparison of the mean, amplitude and frequency of the simulation and quasi-continuum
approximation as a function of $u_-$ is shown in panel (c) for both the left- and right-moving waves. We can observe that the approximation provides a very
adequate description of the relevant traveling patterns.

\begin{figure}
    \centering
       \begin{tabular}{@{}p{0.33\linewidth}@{}p{0.33\linewidth}@{}p{0.33\linewidth}@{} }
     \rlap{\hspace*{5pt}\raisebox{\dimexpr\ht1-.1\baselineskip}{\bf (a)}}
 \includegraphics[height=4.3cm]{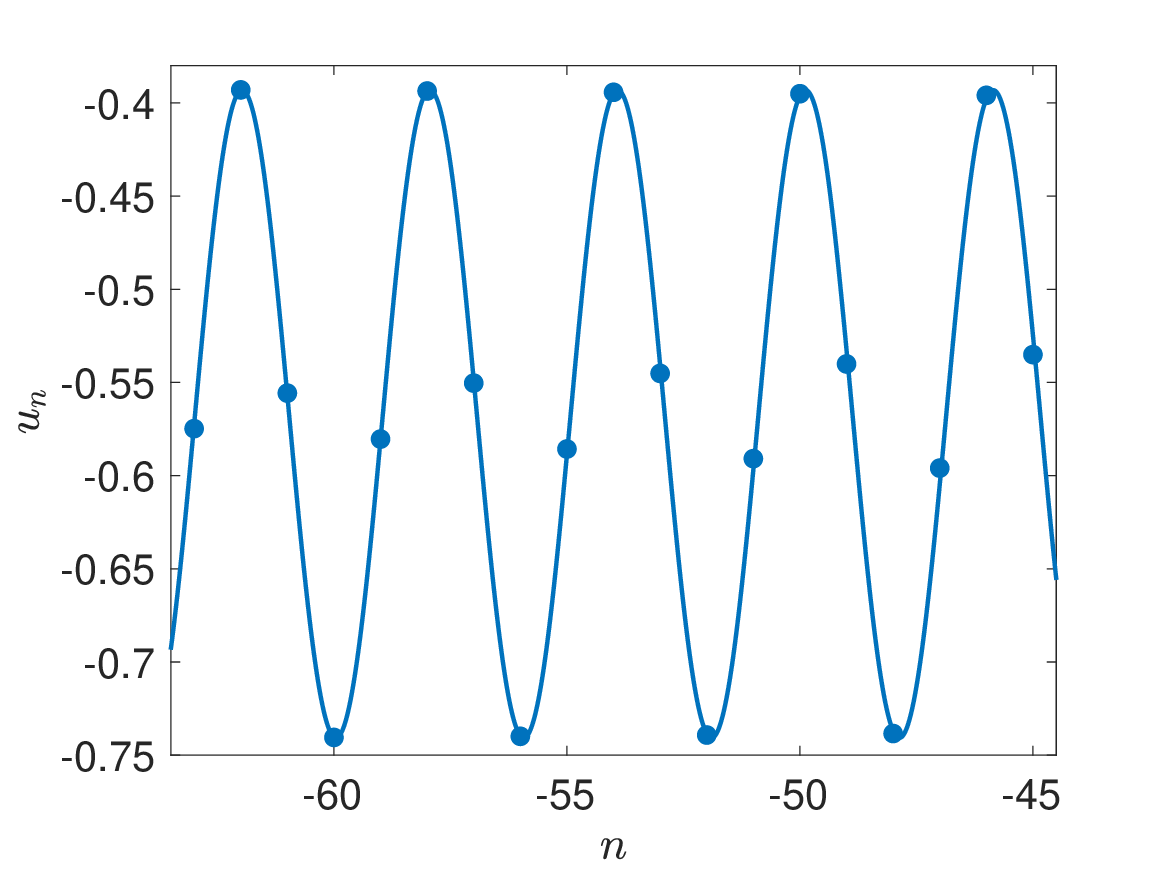} &
  \rlap{\hspace*{5pt}\raisebox{\dimexpr\ht1-.1\baselineskip}{\bf (b)}}
 \includegraphics[height=4.3cm]{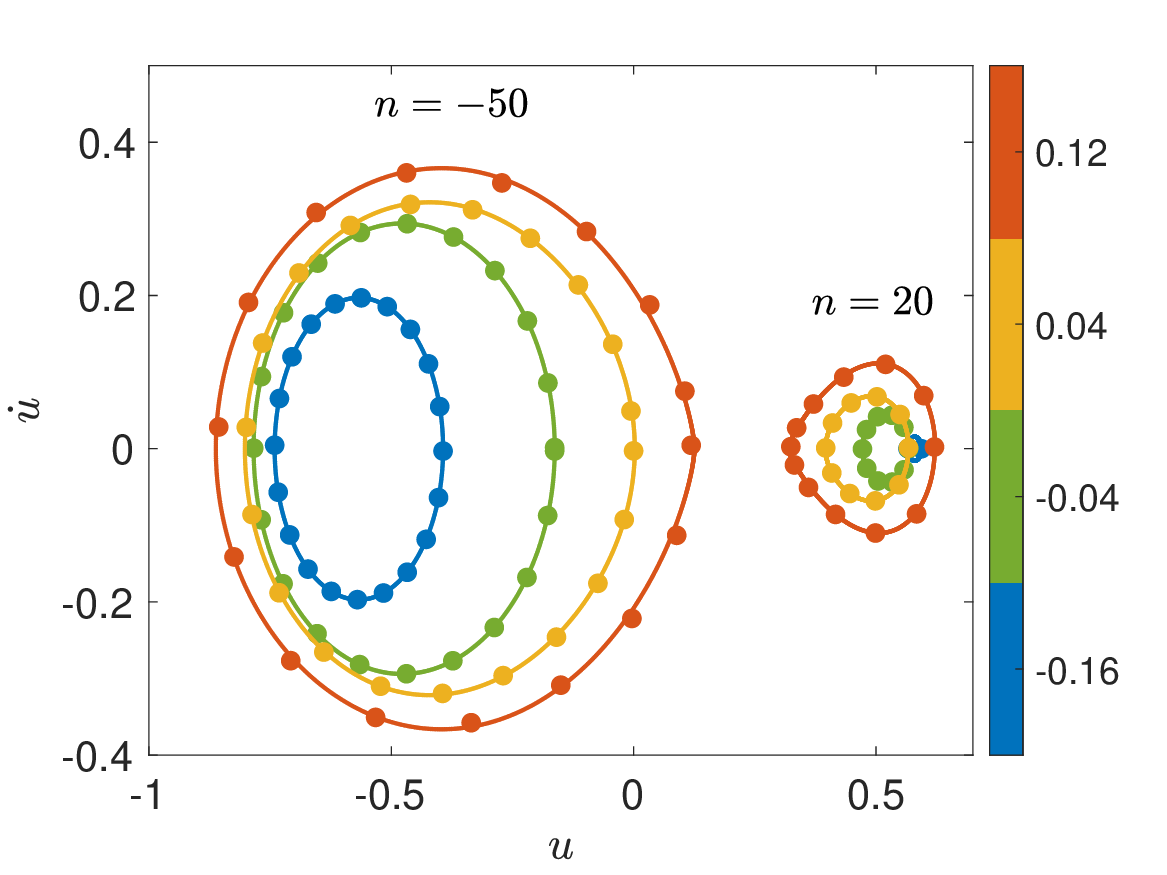} &
   \rlap{\hspace*{5pt}\raisebox{\dimexpr\ht1-.1\baselineskip}{\bf (c)}}
 \includegraphics[height=4.3cm]{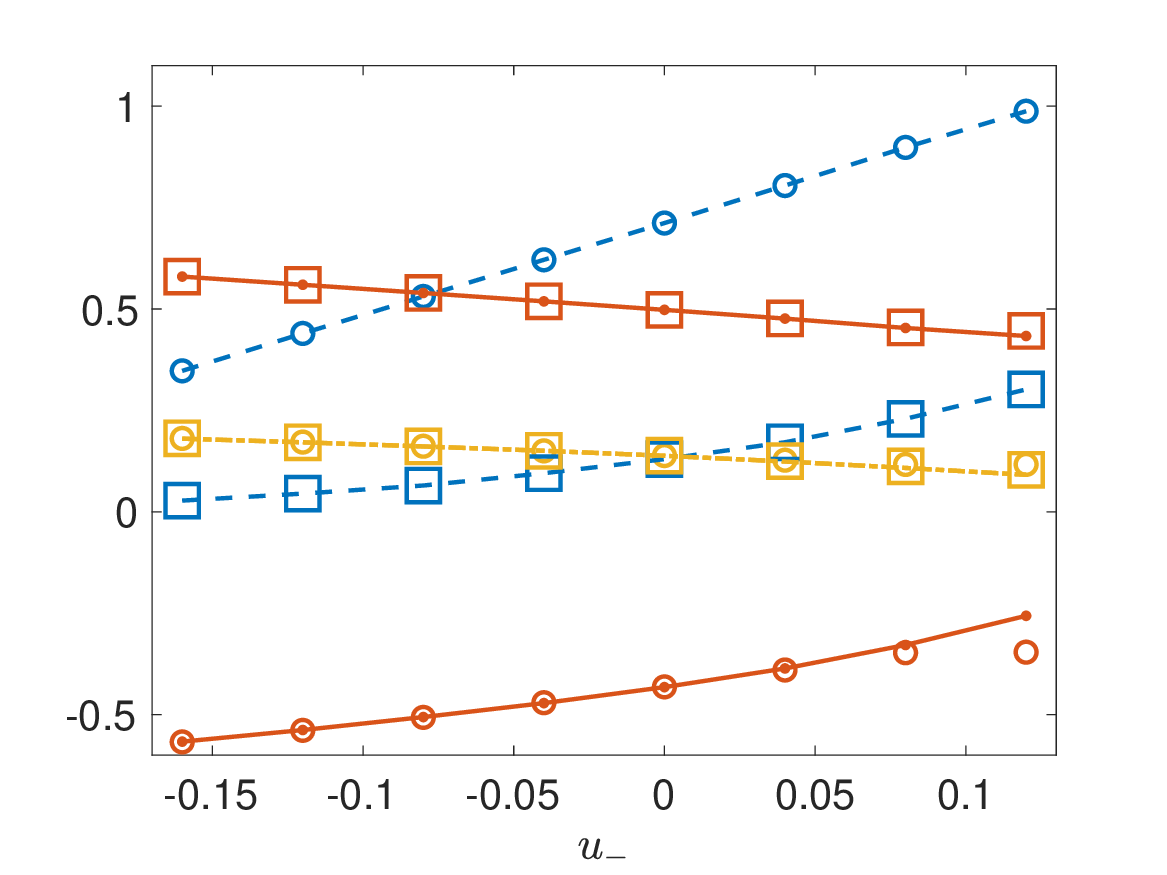} 
  \end{tabular}
    \caption{\textbf{(a)} Zoom of the left periodic wave with $u_- = -0.16$ (blue markers) and quasi-continuum approximation (solid blue line). \textbf{(b)} Plot of the phase plane $(u_{-50}(t),\dot{u}_{-50}(t))$
    (left lobes) and $(u_{20}(t),\dot{u}_{20}(t))$ (right lobes) for a time interval such that
    the periodic wave has developed. The color intensity corresponds to the value of $u_-$. In particular
    $u_- = -0.16$ (blue), $u_- = -0.04$ (green), $u_- = 0.04$ (yellow), $u_- = 0.12$ (red). The solid
    markers are the corresponding quasi-continuum approximations. \textbf{(c)} Plot
    of the mean (red solid lines), amplitude (blue dashed lines) and frequency $1/T$ (yellow dashed-dot line) as a function of $u_-$. The quasi-continuum approximation of these wave parameters
    is shown as open circles (for the left wave, $n=-50$) and open squares (for the right wave, $n=20$). }
    \label{fig:compare_qausi_unsteady}
\end{figure}

\subsubsection{Jump Conditions}

Figure \ref{fig:unsteady_solution_density} shows two
counterpropagating periodic traveling waves with a rapid transition
between them in the vicinity of $n = 0$.  This motivates the
hypothesis that these two waves satisfy the jump conditions obtained
from the Whitham modulation system's conservation laws.  We denote the
periodic traveling waves by $\varphi_\pm$ for the left ($-$) and right
($+$) periodic waves, respectively. For a discontinuous, shock
solution of the Whitham system at the origin, the corresponding jump
conditions are Eq.~\eqref{eq:19}
\begin{subequations}
  \label{eq:stat_jump}
  \begin{align}
    \left\langle \Phi'(\varphi_-) \right\rangle -  \left\langle
    \Phi'(\varphi_+) \right\rangle & = 0  \label{eq:jump1} \\ 
    \left\langle \Phi'(\varphi_-)\mathcal{S}\Phi'(\varphi_-)
    \right\rangle -  \left\langle
    \Phi'(\varphi_+)\mathcal{S}\Phi'(\varphi_+)\right\rangle
                                   & = 0  \label{eq:jump2}
    \\ 
    \omega_- - \omega_+ & = 0, \label{eq:jump3}
  \end{align}
\end{subequations}
where $S$ is the unit shift operator $SR(\eta) = R(\eta + 1)$. To
check if these jump conditions are indeed satisfied, we approximate
the above averages using the numerical simulations. In particular, we
let the structure come close to a periodic state (as in
Fig.~\ref{fig:unsteady_solution_density}) and extract one period of
motion at a particular node $n$. Let $T_n$ be the period of node $n$
and let $I_{T_n} = [\tau,\tau+T_n]$ be the corresponding time interval from
$t = \tau$.  We then make the following approximations
\begin{subequations}\label{eq:stat_jump_approx}
  \begin{align}
    \left\langle \Phi'(\varphi) \right\rangle
    &\approx \frac{1}{T_n}\int_{I_{T_n}} \Phi'(u_n(t)) dt =: f(n), \\
    \left\langle \Phi'(\varphi_-)\mathcal{S}\Phi'(\varphi_-)
    \right\rangle &\approx \frac{1}{T_n}\int_{I_{T_n}} \Phi'(u_n(t))
                    \Phi'(u_{n+1}(t)) dt =: g(n) ,
  \end{align}
\end{subequations}
for $\tau \gg 1$ (we set $\tau = 240$ in what follows).
The first jump condition, Eq.~\eqref{eq:jump1}, is checked by
comparing $f(-5)$ to $f(5)$, while the second jump condition
Eq.~\eqref{eq:jump2} is checked by comparing $g(-5)$ with $g(5)$.  The
third jump condition is simply the difference in the frequency, which
we estimate via $1/T_{-5}$ and $1/T_5$. Figure~\ref{fig:jump} shows a
plot of $f(-5)$ (open blue circles), $g(-5)$ (open blue squares) and
$1/T_{-5}$ (open blue triangles), while the red dots show the
corresponding quantities for $n=5$. Notice that each red point falls
nicely into an open blue marker, indicating that the jump conditions
are, up to some small numerical error, satisfied. The maximum residual
over the interval of $u_-$ values tested for the first jump condition
was $\displaystyle \max_{u_+} | f(-5) - f(5) | \approx 0.001$,
whereas the maximum residual for the second condition was $0.0017$ and
the maximum residual for the third was $0.0008$.

The numerical evidence is a compelling indication that the US can be
interpreted as a shock solution of the Whitham modulation equations.
While shock solutions of the Whitham equations have been constructed
previously \cite{Sprenger_2020,sprenger_traveling_2023}, their
admissibility requires the existence of traveling wave solutions of
the corresponding continuum PDE in which the phase velocities and
shock velocity all coincide.  In the present case of the US for the
lattice equation \eqref{eq:3}, all three of these velocities differ
but the frequencies are the same.  This suggests a new class of
admissible shock solutions to the Whitham equations corresponding to
time-periodic solutions of the lattice equation, an intriguing
possibility for future work.

\begin{figure}
  \centering
  \includegraphics[height=6cm]{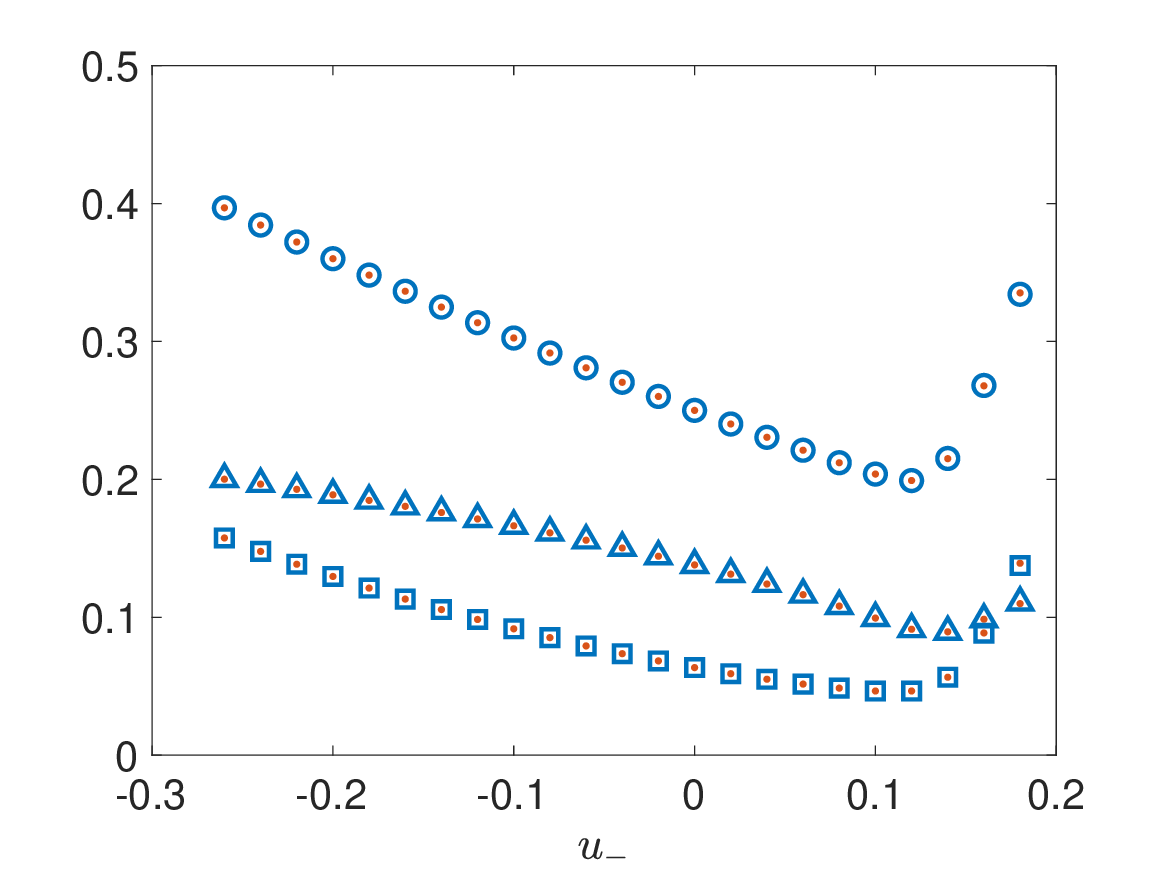} 
  \caption{Numerical verification of the jump conditions
    \eqref{eq:stat_jump} of the Whitham equations as $u_-$ is varied
    for the US solution.  Open blue circles:
    $\left\langle \Phi'(\varphi_-)|_{n=-5} \right\rangle \approx
    f(-5)$.  Open blue squares
    $\left\langle \Phi'(\varphi_-)\mathcal{S}\Phi'(\varphi_-)
    \right\rangle|_{n=-5} \approx g(-5)$.  Open blue triangles:
    $\omega_-|_{n=-5} \approx 1/T_{-5}$.  Red dots: corresponding
    quantities evaluated at $n = +5$.  Because the red dots lie inside
    the open blue markers, the jump conditions are satisfied to high
    accuracy.} 
  \label{fig:jump}
\end{figure}

We also note the similarity between the US and defect solutions of
reaction-diffusion equations
\cite{scheel}. Because the waves in the US are in-phase (see
Fig.~\ref{fig:unsteady_solution_density}), it most closely resembles a
target pattern with a source from which waves are emanating, in the
language of \cite{scheel}.  The target pattern exhibits a Hopf
bifurcation of the background state and a specific transition of the
eigenvalues associated with the linearized operator about the
background state.  It would be interesting to explore potential
connections between the underlying diffusive regularization of the
target pattern and the dispersive regularization of the US studied
here.


\subsection{Blow up}
\label{sec:blow-up}
\begin{figure}
    \centering
       \begin{tabular}{@{}p{0.33\linewidth}@{}p{0.33\linewidth}@{}p{0.33\linewidth}@{} }
  \rlap{\hspace*{5pt}\raisebox{\dimexpr\ht1-.1\baselineskip}{\bf (a)}}
 \includegraphics[height=4.3cm]{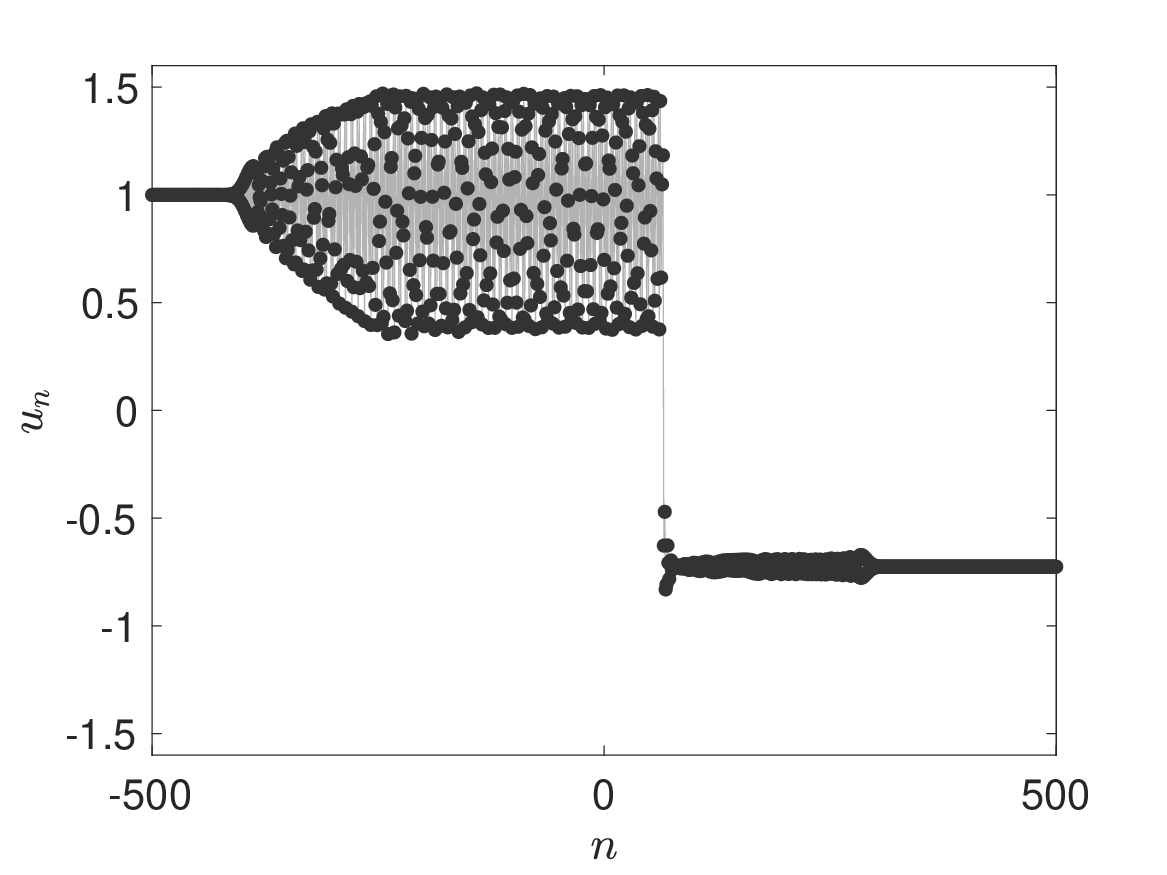} &
  \rlap{\hspace*{5pt}\raisebox{\dimexpr\ht1-.1\baselineskip}{\bf (b)}}
 \includegraphics[height=4.3cm]{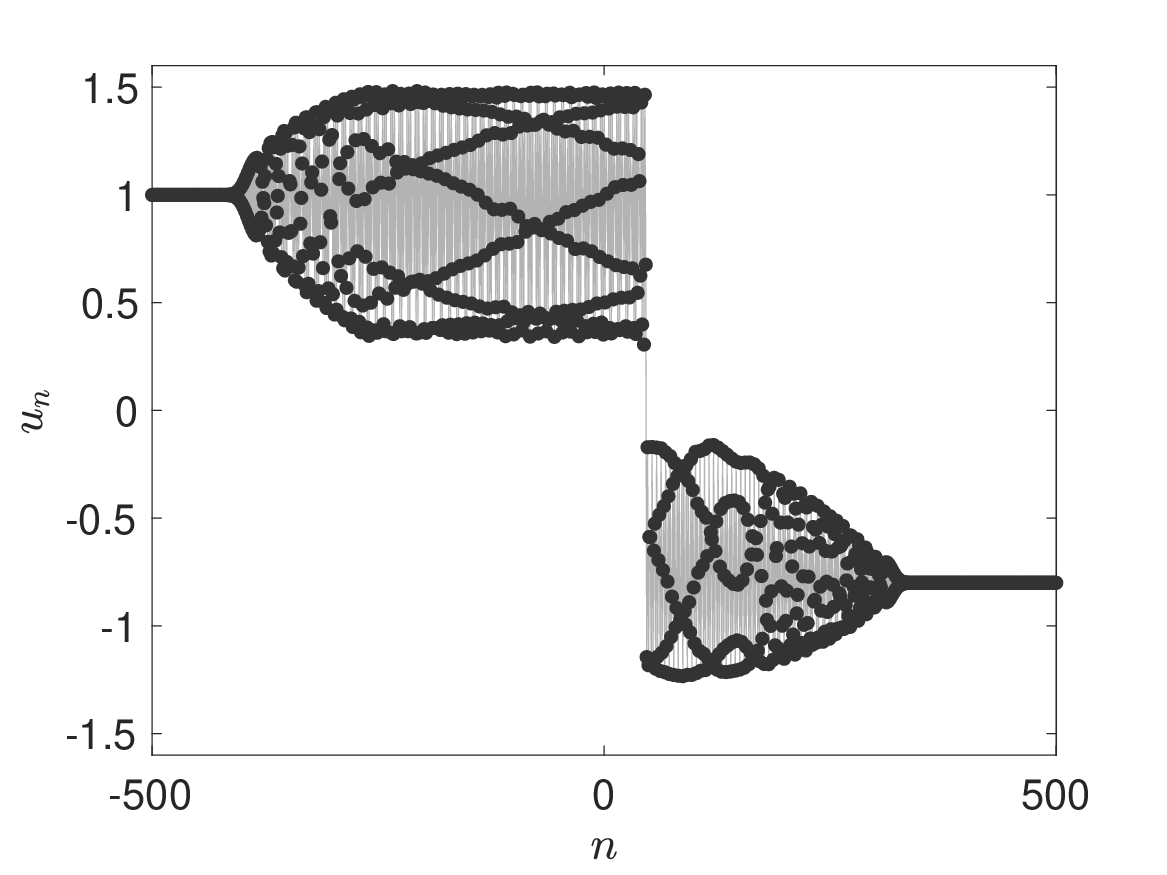} &
   \rlap{\hspace*{5pt}\raisebox{\dimexpr\ht1-.1\baselineskip}{\bf (c)}}
 \includegraphics[height=4.3cm]{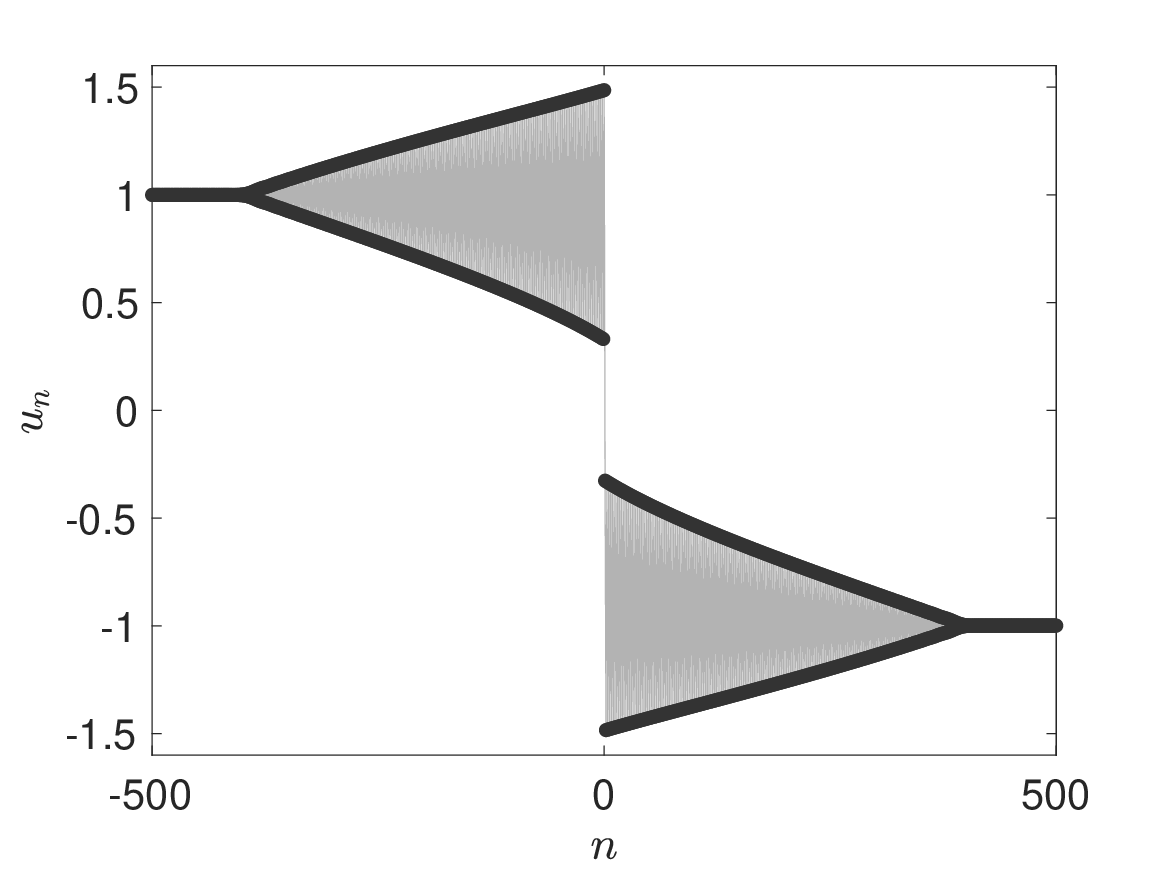} 
  \end{tabular}
    \caption{Snapshots of the spatial profile at $t=200$ for \textbf{(a)} $u_+ = -0.725$
    \textbf{(b)} $u_+ = -0.8$ and \textbf{(c)} $u_+ = -0.999$}
    \label{fig:before_blowup}
\end{figure}

In section~\ref{sec:pDSW}, we observed that a TDSW (a partial DSW
connected to a traveling wave) is generated that transitions from the
left level $u_- = 1$ to the right level $u_+ \in (-0.724,0)$. By
decreasing the value of $u_+$ below $-0.724$, a new wavetrain develops
on the right level. Figure \ref{fig:before_blowup} shows three example
profiles. When $u_+$ is close to the transition value $u_+=-0.724$,
the excitation on top of the right level $u_+$ is small, see Figure
\ref{fig:before_blowup}(a).  Decreasing $u_+$ further results in a
larger amplitude wavetrain that resembles another TDSW; see
Fig.~\ref{fig:before_blowup}(b).  Close to $u_+ =- 1$, the wavetrains
on the left and right levels approach binary oscillations, as shown in
Fig.~\ref{fig:before_blowup}(c).  Solution dynamics similar to those
shown in Fig.~\ref{fig:before_blowup}(c), but with odd initial data,
were studied extensively in \cite{wilma}.
In the work of~\cite{wilma}, it is claimed that the emergence of blow
up for the odd initial data they considered is ``almost always''
associated with the emergence of regions of binary oscillations for
which the upper and lower oscillatory envelopes have opposite sign.
Such a scenario was observed to result in the loss of hyperbolicity in
the modulation equations \eqref{eq:9} for binary oscillations and,
consequently, a dynamical instability and thus exponential growth in a
localized region of space that ultimately led to the finite time
blow up of the wave pattern.  For the Riemann data considered here
\eqref{step}, we numerically observe blow up in regions of the
solution where binary oscillations are not apparent, which we now explore.

\begin{figure}
    \centering
       \begin{tabular}{@{}p{0.4\linewidth}@{}p{0.4\linewidth}@{}}
  \rlap{\hspace*{5pt}\raisebox{\dimexpr\ht1-.1\baselineskip}{\bf (a)}}
 \includegraphics[height=4.5cm]{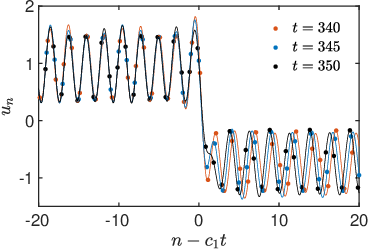} &
  \rlap{\hspace*{5pt}\raisebox{\dimexpr\ht1-.1\baselineskip}{\bf (b)}}
 \includegraphics[height=4.5cm]{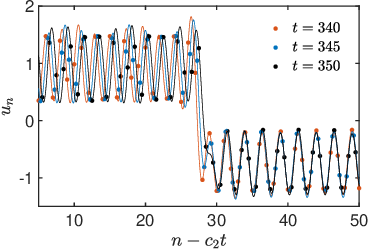}
  \end{tabular}
  \caption{Interpolated spatial profiles at different times in the
    co-moving frame for $u_+ = -0.8$ with speed \textbf{(a)}
    $c_1 = 0.229$, \textbf{(b)} $c_2 = 0.15$ demonstrating that the
    waves to the left and right of the sharp transition move at
    different speeds.}
    \label{fig:comoving}
\end{figure}
We will start with a more detailed discussion of the structure in
Fig.~\ref{fig:before_blowup}(b) with $u_+ = -0.8$, which is
representative of many of the patterns found for $u_+ \in (-1,-0.724)$
and $u_- = 1$. A zoom-in of the solution in the co-moving frame $n-ct$
near the shock interface is shown in Fig.~\ref{fig:comoving} at three
separate times ($t \in \{340,345,350\}$) represented by different
colors.  Both the solution on the lattice (dots) and its zero-padded
Fourier interpolant (curves) are shown with two different speeds.  In
Fig.~\ref{fig:comoving}(a) with speed $c = c_1 = 0.229$, the leftmost
wave at the three distinct times overlap, suggesting that it moves in
the steady frame $n-c_1t$.  Contrastingly, in
Fig.~\ref{fig:comoving}(b) with speed $c = c_2 = 0.15$, the rightmost
wave at the three distinct times overlap, suggesting it moves in the
slower steady frame $n-c_2 t$.  
The spatial profile at $t = 350$ and the Fourier transform
$\hat{u}(k)$ of a 40-site window of the leftmost (rightmost) wave are
shown in Fig.~\ref{fig:blowup}(a) in red (blue). Each wave has
wavenumber concentration, and they are distinct ($k\approx 2.76$ for
the left and $k=2.51$ for the right).  This solution is a
generalization of the unsteady shock (US) studied in
section~\ref{sec:unsteady} in that two traveling waves are connected
through a sudden jump. However, there are key differences.  First, the
shock interface itself is moving as shown in Fig.~\ref{fig:comoving}.
Second, the frequency for the leftmost and rightmost waves are not
identical, which can be seen in Fig.~\ref{fig:blowup}(c) that shows
the time series at a node located at the left wave ($n=20$) and the
right wave ($n=250$).  Here, it is clear that the oscillation
frequencies are distinct.  We conjecture that, much like the US, this
wavetrain can be modeled by a discontinuous, weak solution of the
Whitham equations satisfying the jump conditions \eqref{eq:19}.
Rather than pursue this further, we instead turn to an examination of
the large $t$ dynamics of the solution and, specifically, its eventual
blow up.

This structure is relatively coherent until about $t=600$, after which
small disturbances in the leftmost wave develop. Disturbances are
noticeable if one compares the time series for $t\in(350,400)$ (shown
in Fig.~\ref{fig:blowup}(c)) and for $t\in(750,800)$ (shown in
Fig.~\ref{fig:blowup}(d)). At about $t=816$ the solution appears to
experience finite-time blow up. Figure~\ref{fig:blowup_time}(a) shows
the blow up. The spatial profile close to (but before) the time of
blow up is shown in Fig.~\ref{fig:blowup}(b). Notice the location of
the blow up is spatially concentrated within the leftmost wave and that the
wavenumber of the traveling wave where the instability seems to
manifest is about $k\approx 2.76$, (i.e., not a binary oscillation).
\begin{figure}
    \centering
         \begin{tabular}{@{}p{0.45\linewidth}@{}p{0.45\linewidth}@{}}
     \rlap{\hspace*{5pt}\raisebox{\dimexpr\ht1-.1\baselineskip}{\bf (a)}}
 \includegraphics[height=5.5cm]{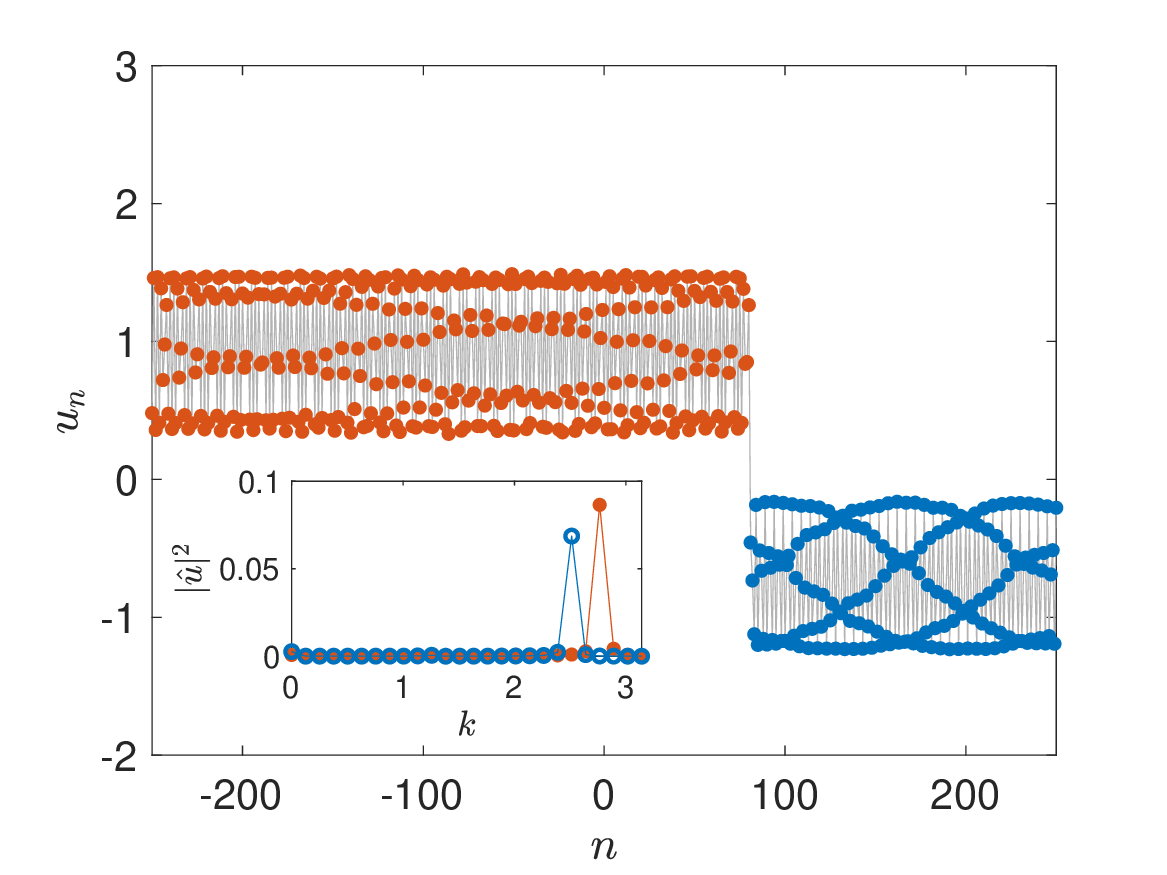} &
  \rlap{\hspace*{5pt}\raisebox{\dimexpr\ht1-.1\baselineskip}{\bf (b)}}
 \includegraphics[height=5.5cm]{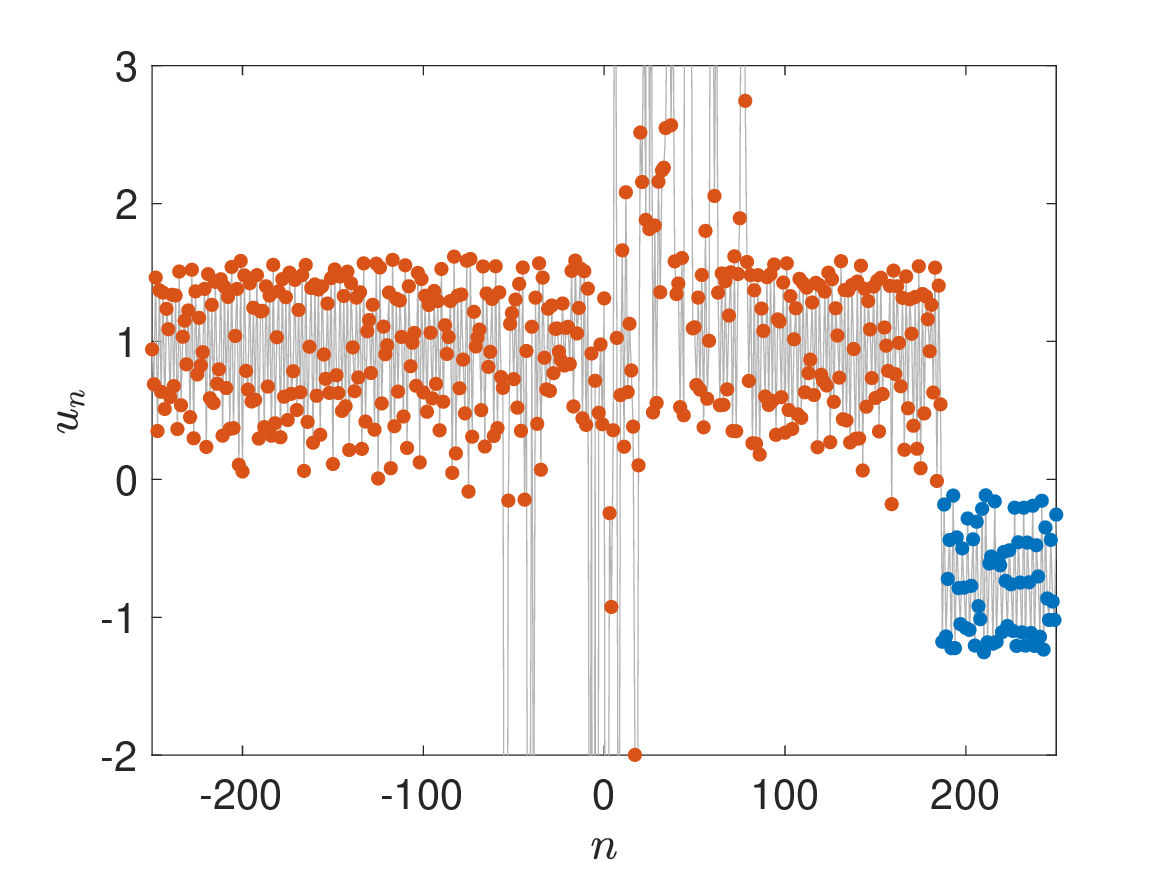} 
  \end{tabular}
       \begin{tabular}{@{}p{0.45\linewidth}@{}p{0.45\linewidth}@{}}
     \rlap{\hspace*{5pt}\raisebox{\dimexpr\ht1-.1\baselineskip}{\bf (c)}}
 \includegraphics[height=5.5cm]{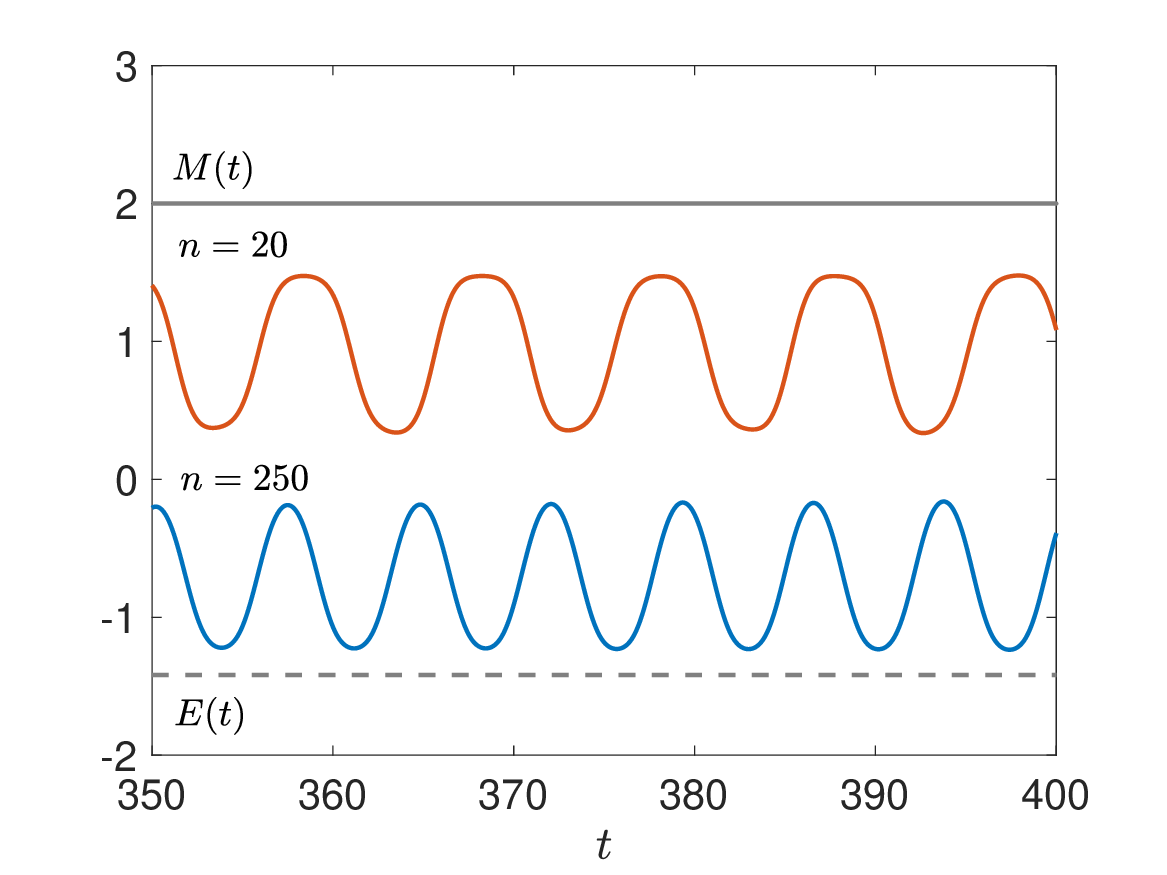} &
  \rlap{\hspace*{5pt}\raisebox{\dimexpr\ht1-.1\baselineskip}{\bf (d)}}
 \includegraphics[height=5.5cm]{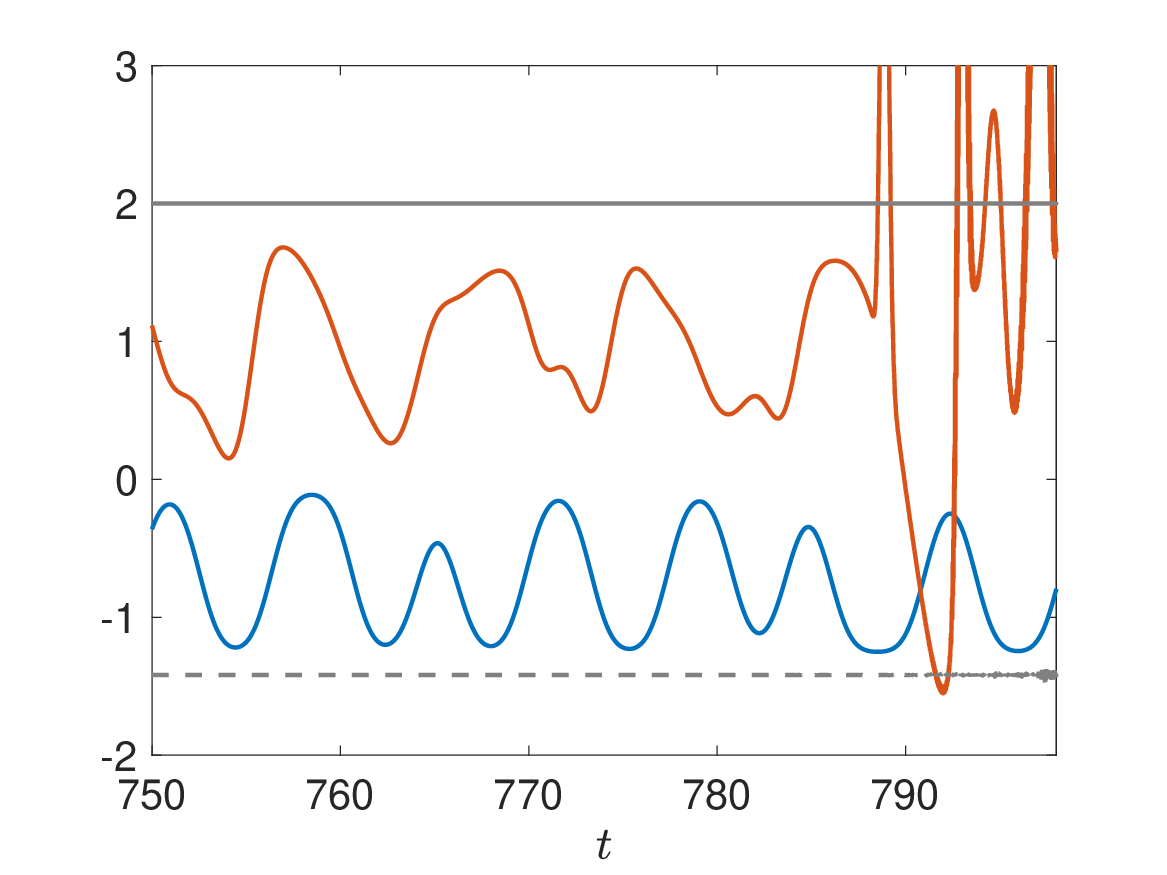} 
  \end{tabular}
  \caption{Various zooms of the solution with $u_- = 1$ and
    $u_+=-0.8$. \textbf{(a)} Zoom of spatial profile near the
    interface with $t=350$. The inset shows the spatial Fourier
    transform of the left wave (red) and right wave
    (blue). \textbf{(b)} Zoom of spatial profile near the interface
    with $t=800$, very close to the blow up time.  \textbf{(c)}
    Time-series of the $n=20$ node (red) and $n=250$ node (blue) much
    before the blow up. The mass $M(t)$ is shown as the solid gray
    line, and the energy $E(t)$ is also shown as a dashed gray line.
    Note that both are vertically displaced for visual purposes. The
    actual values are $E(0) =0.0813$ and $M(0)= 0.1$. \textbf{(d)}
    Same as (c), but for a time interval closer to the blow up
    time.}
    \label{fig:blowup}
\end{figure}

We conjecture that the observed blow up is due to an instability of
the leftmost wave with wavenumber $k\approx 2.76$, and not due to
numerical instability. A piece of evidence in this direction is that
the mass and energy are conserved until times very close to the blow
up time.  The solid gray and dashed lines of
Figs.~\ref{fig:blowup}(c,d) show the mass $M(t)$ and energy $E(t)$,
respectively. Note, in order for these quantities to be conserved, we
employ periodic boundary conditions.  
For the simulations in this
subsection, we concatenate the initial condition Eq.~\eqref{step} with
its reflection about the first site, leaving us with $2N$ total
nodes. Thus, the relevant window of space for the plots is only the
second half of the lattice. We define the lattice indices so that the
initial ($t=0$) jump from $u_-$ down to $u_+$ occurs at $n=0$. This
makes the plots consistent with those in the previous sections.  We
include the entire spatial window for the computation of $E(t)$ and
$M(t)$. Note that in Figs.~\ref{fig:blowup}(c,d) the quantities $M(t)$
and $E(t)$ are indeed conserved, even as the waveform begins to break
down, see $t\approx 790$ of panel Fig.~\ref{fig:blowup}(d). The gray
solid and dashed lines of Fig.~\ref{fig:blowup_time}(a) show plots of
$M(t)$ and $E(t)$, respectively, for times leading to the blow up
itself. While the energy $E(t)$ remains constant after the strong
onset of instability at about $t=785$, the energy conservation breaks
down for $t>800$, while $M(t)$ remains conserved. The conservation of
mass in the numerical scheme is not surprising, since by direct
computation one sees that the variational integrator applied to
Eq.~\eqref{eq:disc_mod} with $\Phi'(u) = u^2$ conserves the mass
exactly \cite{Herrmann_Scalar}. The near
conservation of energy for the variational integrator relies on the boundedness of
the underlying numerical solution \cite{Hairer}. This will be clearly
violated for solutions exhibiting 
collapse-type phenomena
and thus it is not surprising
that the energy is not conserved in the numerical scheme close to the time of blow up.
Thus, it seems the initial collapse
is due to instability of the wave (energy remains conserved for $t<790$), but after experiencing sustained large amplitude oscillations, the numerical scheme may begin to exhibit additional numerical instabilities (since energy is not conserved for $t>790$). The blow up time
found here is an approximation that depends on the particulars of the numerical scheme.

For $u_+ \in (-1,-0.724)$ and $u_- = 1$ we observe a similar blow up of the solutions,
with the blow up time varying roughly between $t=700$ and $t=2500$, see Fig.~\ref{fig:blowup_time}(b).
We define the solution as blown-up once $\max_n |u_n|$ exceeds a large threshold. We practically used $1000$ as the threshold
(the actual threshold makes little difference in Fig.~\ref{fig:blowup_time}(b)
since the blow up occurs so quickly). For this figure panel, we used a lattice size of $2N=100,000$
and simulated until $t=10,000$. We observe blow up with $u_+ = -0.725$ (at about $t=1500$) but no blow up
with $u_+ = -0.724$, even when simulating until $t=10,000$. This sharp transition between finite-time blow up
and no blow up is further evidence that the blow up is due to an underlying instability of the waveform
and not due to  numerical instability.

The question of stability of traveling waves of this lattice is thus
an important open question, meriting further investigation.  Based on
these findings, it appears that traveling waves with wavenumbers
other than $k=\pi$
can lead to instabilities and finite-time blow up, a
generalization of the findings in \cite{wilma} where binary
oscillations with $k = \pi$ were identified as a primary instability
mechanism.

\begin{figure}
    \centering
           \begin{tabular}{@{}p{0.4\linewidth}@{}p{0.4\linewidth}@{}}
     \rlap{\hspace*{5pt}\raisebox{\dimexpr\ht1-.1\baselineskip}{\bf (a)}}
 \includegraphics[height=5cm]{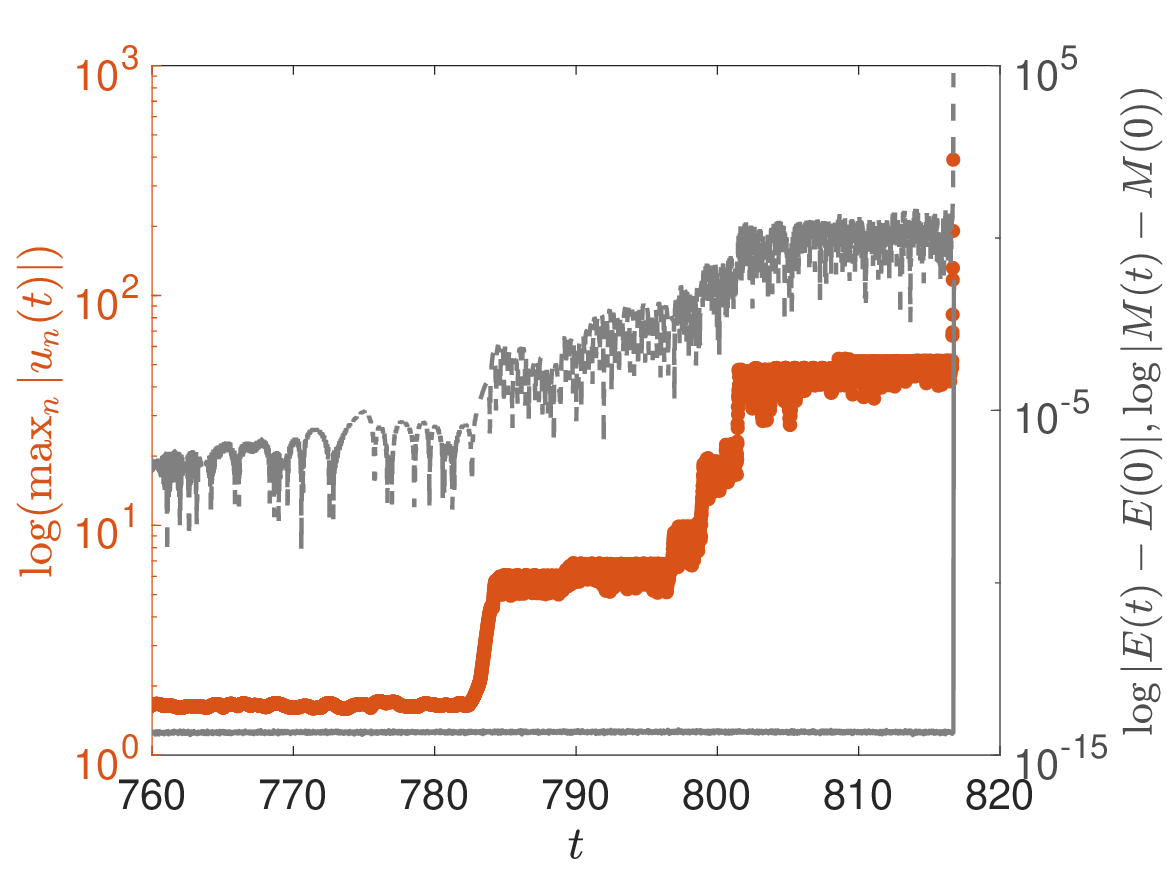} &
  \rlap{\hspace*{5pt}\raisebox{\dimexpr\ht1-.1\baselineskip}{\bf (b)}}
 \includegraphics[height=5cm]{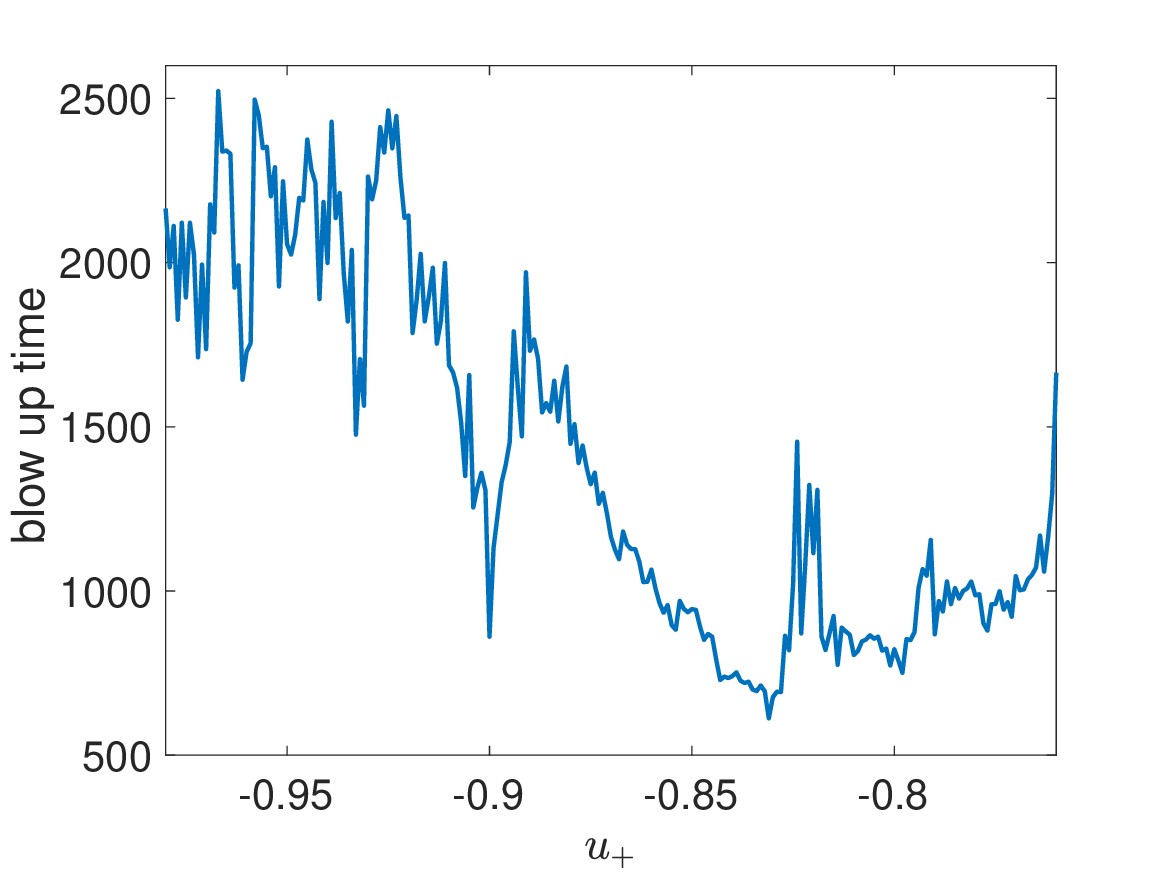} 
  \end{tabular}
  \caption{\textbf{(a)} Semilog plot of $\max_{n}|u_n(t)|$ vs. time
    leading to blow up (red). The semilog plot of $|E(t)-E(0)|$
    (dashed gray line) and $|M(t)-M(0)|$ (solid gray line) are also shown.
    \textbf{(b)} Observed blow up time for
    various values of $u_+$ with $u_-=1$ fixed. The lattice size is
    $2N=100,000$ and we simulate until $t=10,000$. For
    $u_- \geq -0.725$ we observed no blow up. 
    }
  \label{fig:blowup_time}
\end{figure}

\section{Non-uniqueness of Riemann problems}
\label{sec:non-uniq-riem}

As already mentioned in the introduction, the solutions to the Whitham system \eqref{eq:cont_mod} are non-unique for certain classes of initial data. To discuss this in greater detail, we start with some preliminary comments concerning the inviscid Burgers' PDE \eqref{eq:1}, which describes the dispersionless, hyperbolic scaling limit of the lattice \eqref{eq:3} in the case of no oscillations and is a subsystem of \eqref{eq:cont_mod} that governs the dynamics in the case of zero wave amplitude or zero wavenumber. To obtain an elementary example for non-uniqueness, we impose the odd initial condition
\begin{align}
\label{Nonuni2}
u\left(x,0\right)=\mathrm{sign}\left(x\right)
\end{align} 
and notice that the rarefaction wave
\begin{align}
\label{Nonuni5}
u\left(x,t\right)=U\left(x/t\right)\qquad \text{with}\qquad
U\left(\xi\right)=\left\{\begin{array}{lcc}
-1&\text{for}&-\infty<\xi<-2\\
\xi/2&\text{for}&-2<\xi<+2\\
+1&\text{for}&+2<\xi<+\infty\\
\end{array}\right.
\end{align}
is the unique self-similar solution according to the classical theory of hyperbolic conservation laws, see for instance \cite{LeF02}. The latter complements the PDE with an additional selection rules to exclude solutions that are considered to be unphysical. The most prominent example is the Lax condition for shocks but without such an admissibility criterion there exists a plethora of possibilities to fulfill the initial value problem \eqref{eq:1}+\eqref{Nonuni2}. For instance, the formulas
\begin{align}
\label{Nonuni3}
U\left(\xi\right)=\left\{\begin{array}{lcr}
-1&\text{for}&-\infty<\xi<-2\\
\xi/2&\text{for}&-2<\xi<-2\,\mu\\
-\mu&\text{for}&-2\,\mu<\xi<0\\
+\mu&\text{for}&0<\xi<+2\,\mu\\
\xi/2&\text{for}&+2\,\mu<\xi<+2\\\
+1&\text{for}&+2<\xi<+\infty\\
\end{array}\right.
\qquad\text{and}\qquad
U\left(\xi\right)=\left\{\begin{array}{lcr}
-1&\text{for}&-\infty<\xi<-1-\mu\\
-\mu&\text{for}&-1-\mu<\xi<0\\
+\mu&\text{for}&0<\xi<+\mu+1\\
+1&\text{for}&+\mu+1<+\infty\\
\end{array}\right.
\end{align}
provide two families of further self-similar solutions in dependence of the real parameter $0<\mu<1$ and $\mu>1$, respectively, and contain \eqref{Nonuni5} as limiting case for $\mu=0$. The corresponding  profile functions $U$ are illustrated in Figure \ref{FigNonuni1} and combine a steady discontinuity at $x=0$ (which violates the Lax condition on both sides) with either two rarefaction waves or two Lax shocks. 
\begin{figure}[h!]
\centering{\includegraphics{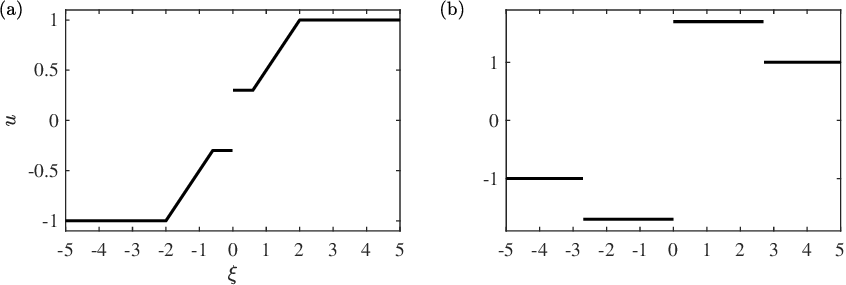}}
\caption{Solutions from \eqref{Nonuni3} with (a) $\mu = 0.3$ and (b)
  $\mu = 1.7$ to the inviscid Burgers' equation \eqref{eq:1} with initial data \eqref{Nonuni2}. 
}
\label{FigNonuni1}
\end{figure}
In numerical simulations of the lattice \eqref{eq:3} we find a related family of solutions that differ only in microscopic details of the imposed initial conditions. As a typical but still elementary example we study the lattice initial data 
\begin{align}
\label{Nonuni4}
u_n\left(0\right)=\left\{\begin{array}{lcr}
-1&\text{for}&n<0\\
\eta&\text{for}&n=0\\
+1&\text{for}&n>0\\
\end{array}\right.
\end{align}
where the free parameter $\eta$ reflects that the macroscopic Riemann data \eqref{Nonuni2} can be realized in many different ways on the microscopic scale. Figure \ref{FigNonuni2} reveals that the numerical solutions for different choices of $\eta$ correspond to distinct waves on the macroscopic scale. We always find a steady discontinuity at $x=0$ but its jump height as well as the other waves depend crucially on the value of microscopic parameter. In analogy to Figure \ref{FigNonuni1} we observe two rarefaction waves in the case of $0<\eta<1$ but for $\eta>1$  we have to replace the Lax shocks by dispersive shock waves.
\par
It remains a challenging task to understand the macroscopic impact of small-scale fluctuations in the initial data. For instance, it seems that the parameter $\eta$ in \eqref{Nonuni4} determines the jump height of the emerging steady discontinuity in Figure \ref{FigNonuni2} but we are not aware of any heuristic or even rigorous explanation thereof. Numerical simulations also indicate that microscopic details might {not} be relevant for Riemann data that are either positive or negative on both sides but also this must be investigated more thoroughly. Both issues are also intimately related to the properties of the Whitham system \eqref{eqn:WhithamAbstract} which can be regarded as an extension of the Burgers' equation \eqref{eq:1}. At least for sign changing Riemann problems,  the lattice \eqref{eq:3} is able to produce an entire family  of Whitham solutions and it is very natural to investigate and classify them in terms of selection criteria and entropy inequalities. Moreover, other nonlinearities might produce further effects due to linearly degenerated states in the scalar first order PDE corresponding to \eqref{deq}; see \cite{HR10,HR10b} for a related problem in FPUT chains whose force functions is increasing with inflection point.
\begin{figure}[H]
\centering{\includegraphics{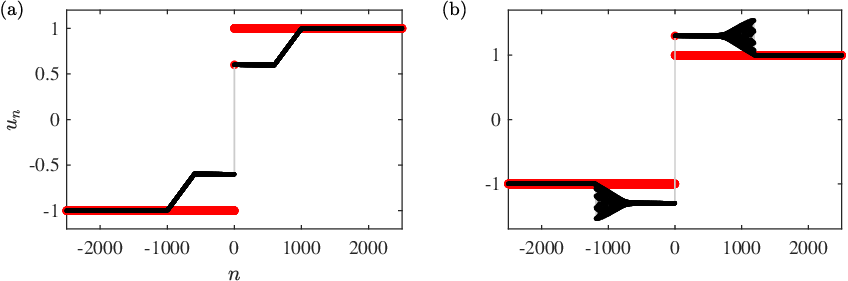}}
\caption{Lattice solution with initial data \eqref{Nonuni4} and two choices of the microscopic parameter (a) $\eta = 0.5$ and (b) $\eta = 1.3$ at $t = 500$. The initial data is shown in red. 
}
\label{FigNonuni2}
\end{figure}

\section{Conclusions and Future Challenges}
\label{sec:concl-future-chall}

In the present work, we revisit the first order nonlinear dynamical
lattice of Eq.~\eqref{deq} through the lens of ``lattice hydrodynamics'' by
providing a systematic analysis of the solutions to the canonical
lattice Riemann problem \eqref{step} for the case of quadratic flux
\eqref{eq:3}. Building on earlier works
of~\cite{wilma,Herrmann_Scalar,CHONG2022133533}, we have characterized
dispersive shock waves and rarefaction waves, familiar from continuum
dispersive hydrodynamics.  But we also discover a variety of
non-classical hydrodynamic-like solutions.  In addition to finite time
blow up, recognized earlier in the work of~\cite{wilma},
we identify three additional dynamical regimes that are interpreted
using a combination of numerical simulation, quasi-continuum
approximation, and Whitham modulation theory. These regimes include
the generation of a counterpropagating DSW and RW pair separated by a
stationary, abrupt shock. Additionally, an abrupt, unsteady transition
between two counterpropagating periodic waves with the same frequency
is interpreted as a shock solution of the Whitham modulation
equations.
Finally, the phase diagram \ref{fig:example_sols} of solutions to the
lattice Riemann problem is rounded out by a traveling DSW (TDSW),
consisting of an unsteady partial DSW, connected to a
heteroclinic periodic-to-equilibrium traveling wave.

These elements of the lattice model's phenomenology provide an
opportunity to develop different aspects of the mathematical analysis
of lattice hydrodynamics. For instance, we adapt the DSW fitting
method of~\cite{El2005} to lattice equations in order to characterize
the expansion and edge properties of lattice DSWs.  We also follow up
the work of~\cite{CHONG2022133533}, providing abstract modulation
equations for a quasi-continuum analogue of the lattice dynamical
system. We connect these to the genuinely discrete modulation
equations, made explicit in the weakly nonlinear regime by a
Poincar{\'e}-Lindstedt expansion of periodic traveling wave solution
profiles and their frequencies.  Hyperbolicity and self-similar
solutions of the modulation equations are used to characterize lattice
hydrodynamics. The quasi-continuum theory was also leveraged
elsewhere, such as in characterizing the leading edge amplitude of the
DSW and the periodic/traveling wave solutions of the discrete model.
A truncation of the dynamics in wavenumber space allows for a
linearized, large $t$ analysis of small amplitude oscillations that
accompany the RW and DSW solutions.

Beyond the binary oscillations that have been studied previously, this
work has identified new lattice hydrodynamic features that, so far,
appear not to have a continuum, dispersive hydrodynamic parallel.  Of
particular interest is the rapid, unsteady transition between two
in-phase counterpropagating periodic traveling waves with the same
frequency (US in Fig.~\ref{fig:example_sols}), identified with a shock
solution of the Whitham modulation equations.  This solution is born
out of the bifurcation of another lattice solution particular
to the lattice, the
counterpropagating DSW, RW pair separated by a stationary shock (DSW+SS+RW
in Fig.~\ref{fig:example_sols}) when the DSW merges with the stationary
shock.  The US represents an unsteady generalization of the steady
periodic-to-equilibrium heteroclinic solutions of higher order
continuum dispersive equations, themselves interpreted as admissible
shock solutions of the corresponding Whitham modulation equations
\cite{Sprenger_2020,sprenger_traveling_2023}.  This work points to a
new class of admissible ``Whitham shocks'' in the lattice context.  It
will be interesting to see if continuum solutions of this type also
exist.

The asymptotic and semi-analytical tools used in this work constitute
an effective framework in which to investigate the hydrodynamics of
other lattice dynamical systems for which dispersive shock wave
phenomena may arise.  For example, the Riemann problem for FPUT chains
generalizes to two wave families (second order in time) the problem of
a single wave family studied here and has been shown to exhibit a
variety of lattice hydrodynamic solutions including steady transition fronts \cite{gorbushin_transition_2022}
and DSWs \cite{chong_integrable_2024}.

A number of interesting open questions remain. Here, we list some of
these. While the existence of periodic traveling waves has been proven
\cite{Herrmann_Scalar}, their stability and the existence, as well as stability
of a more general class of traveling waves consisting of
periodic-to-periodic heteroclinic solutions are important problems to
better understand lattice hydrodynamics, with implications for finite
time blow up and the construction of the TDSW solution. In the
DSW+SS+RW solution, we have not been able to provide a mechanism that
leads to the particular selection of the intermediate constant
$u_0=(u_+-u_-)/2$.  We suspect that this is due to microscopic details
of the lattice equation that have been neglected in the analysis.  Of
similar, unknown origin is the selection mechanism
for shock solutions of the binary modulation
equations \eqref{eq:9} investigated in detail in \cite{wilma}.  On the
other hand, the unsteady shock constitutes a genuine time-periodic
orbit. Such periodic orbits are worthy of exploration in their own
right, including from the perspective of dynamical systems by, for
example, generalizing the spatial dynamics in dispersive
\cite{sprenger_traveling_2023} and dissipative \cite{scheel} systems.
The existence, stability and bifurcation analysis of such states are
intriguing problems for future exploration.  
As discussed in~\cite{wilma}, the model with quadratic flux
\eqref{eq:3} analyzed herein is one among several
possible discretizations of the inviscid Burgers' equation. A
comparative study of different discretizations of the more general
conservation law $u_t + \Phi'(u)_x = 0$
could lead to other lattice hydrodynamics and, in the case of
integrable discretizations, be amenable to deeper mathematical
analysis. These topics constitute some of the open problems
within the theme of DSWs in one spatial dimension.  The study of
higher-dimensional models appears to be wide open.
Some of these topics are currently under consideration and will be
reported in future publications.


\section*{Acknowledgments}
The authors would like to thank the Isaac Newton Institute for
Mathematical Sciences for support and hospitality during the programme
Dispersive Hydrodynamics when work on this paper was undertaken (EPSRC
Grant Number EP/R014604/1). This material is also based upon work
supported by the US National Science Foundation under Grants
DMS-2107945 (C.C. and E.O.), DMS-2204702 (P.G.K.) and PHY-2110030
(P.G.K), DMS-2306319 (M.A.H.). This work began during P.S.'s time as INI-Simons Post Doctoral Research Fellow, and this author would like to thank the INI for support and hospitality during this fellowship, which was supported by Simons Foundation (Award ID 316017).

\bibliographystyle{unsrt}
\bibliography{Chong}

\end{document}